\documentclass[11pt]{article}
\usepackage[a4paper,left=1in,right=1in,top=1in,bottom=1in]{geometry}
\usepackage{setspace}
\usepackage{graphicx} %
\usepackage{amsfonts,amssymb,graphics,epsfig,verbatim,bm,latexsym,amsmath,amsthm,url,amsbsy,multirow,rotating,enumerate,pgf,tikz,subfigure,color,xcolor,lscape, mathtools, float}
\DeclareRobustCommand\dashed{\tikz[baseline=-0.6ex]\draw[very thick,dashed] (0,0)--(0.34,0);}
\usetikzlibrary{fit}
\usepackage{tikz-cd}
\usepackage{bbm}
\usepackage{natbib}
\usepackage{hyperref}
\usepackage[section]{placeins}
\usetikzlibrary{positioning}
\usetikzlibrary{calc} %

\usepackage{enumitem}
\usepackage{cleveref}
\usepackage{booktabs}
\usepackage{authblk}
\usepackage{dsfont}
\usepackage{physics}
\usepackage{comment} 
\usepackage{siunitx} 

\usepackage{fullpage}

\DeclareMathOperator*{\argmin}{arg\,min}

\newcommand\eqdist{\stackrel{\mathclap{\normalfont\mbox{\small\textit{d}}}}{=}}
\newcommand\noneqdist{\stackrel{\mathclap{\normalfont\mbox{\small\textit{d}}}}{\neq}}

\def\bbE{\mathbb{E}}
\def\bbR{\mathbb{R}}

\def\cF{\mathcal{F}}
\def\cG{\mathcal{G}}
\def\cZ{\mathcal{Z}}

\def\cE{\mathcal{E}}

\def\supp{\mathrm{supp}}
\def\Logistic{\mathrm{Logistic}}
\def\qte{\mathrm{QTE}}
\def\Bernoulli{\mathrm{Bernoulli}}

\def\Pint{P_Y^{\mathrm{do}(X \coloneqq x)}}
\def\dim{\mathrm{dim}}

\DeclareMathOperator{\ind}{\perp\!\!\!\!\perp} 

\DeclareMathOperator{\unif}{Unif}
\newcommand{\pre}{\operatorname{pre}}

\usepackage{listings}
\lstset{
  language=R,
  basicstyle=\ttfamily,
  numbers=left,
  numberstyle=\tiny,
  stepnumber=1,
  numbersep=5pt,
  backgroundcolor=\color{white},
  showspaces=false,
  showstringspaces=false,
  showtabs=false,
  frame=single,
  tabsize=2,
  captionpos=b,
  breaklines=true,
  breakatwhitespace=false,
}

\newtheorem{theorem}{Theorem}
\newtheorem{lemma}{Lemma}

\newtheorem{proposition}{Proposition}

\newtheorem*{remark}{Remark}

\newtheorem*{remarks}{Remarks}
\newtheorem{example}{Example}

\title{Distributional Instrumental Variable Method}
\author[1]{Anastasiia Holovchak\thanks{Address for correspondence: Anastasiia Holovchak, Seminar f\"ur Statistik, Department of Mathematics, ETH Z\"urich, R\"amistrasse 101, 8092 Z\"urich, Switzerland. \href{mailto:anastasiia.holovchak@stat.math.ethz.ch}{anastasiia.holovchak@stat.math.ethz.ch}}}
\author[1]{Sorawit Saengkyongam}
\author[2]{Nicolai Meinshausen}
\author[2]{Xinwei Shen}

\affil[1]{Seminar f\"ur Statistik, ETH Z\"urich}
\affil[2]{Department of Statistics, University of Washington}
\date{\today}

\begin{document}

\maketitle

\begin{abstract}
The instrumental variable (IV) approach is commonly used to infer causal effects in the presence of unmeasured confounding. 
Existing methods typically aim to estimate the mean causal effects, whereas a few other methods focus on quantile treatment effects. The aim of this work is to estimate the entire interventional distribution, which yields the classical causal estimands as functionals. We propose a method called Distributional Instrumental Variable (DIV), which uses generative modelling in a nonlinear IV setting. We establish identifiability of the interventional distribution under general assumptions and demonstrate an `under-identified' case, where DIV can identify the causal effects while two-step least squares fails to. 
Our empirical results show that the DIV method performs well for a broad range of simulated data, exhibiting advantages over existing IV approaches in terms of the identifiability and estimation error of the mean or quantile treatment effects. Furthermore, we apply DIV to an economic data set to examine the causal relation between institutional quality and economic development and our results align well with the original study. We also apply DIV to a single-cell data set, where we study the generalizability and stability in predicting gene expression under unseen interventions. The software implementations of DIV are available in \texttt{R} and \texttt{Python}.
\end{abstract}

\section{Introduction}

Understanding causal effects is crucial in many fields, for example to assess the influence of an economic policy~\citep{anderson2000,acemoglu2001} or a medical treatment~\citep{mcclellan1994,wang2005,muennig2011}, or to make predictions under unseen interventions~\citep{buhlmann2020invariance,christiansen2022}. In the absence of randomised experiments, it is often challenging to infer these effects due to unmeasured confounding factors that simultaneously affect both the treatment and the outcome. A central strategy to address such endogeneity is the instrumental variable (IV) approach, first proposed by~\citet{wright1928tariff} and subsequently developed in econometrics and statistics~\citep{hansen1982,newey2003,angrist1996}. An instrument is a variable that influences the treatment but has no direct effect on the outcome, thereby enabling consistent estimation of causal parameters. Classical estimators such as two-stage least squares (2SLS) and generalised method of moments (GMM) have been widely applied across economics, epidemiology, and the social sciences~\citep{angrist1991,card1999,staiger1997,chernozhukov2005}, making IVs a standard tool for causal inference.

Formally, we are interested in the causal relationship between treatment variables $X \in \mathbb{R}^d$ and response variables $Y \in \mathbb{R}^p$. There may exist a set of exogenous observable covariates $W \in \mathbb{R}^l$ that have an effect on $X$ or $Y$, or both. We also allow for the existence of unobserved variables $H \in \mathbb{R}^m$ (also known as hidden confounders) that can affect both $X$ and $Y$.
In the IV framework, we require observing an extra set of instrumental variables $Z \in \mathbb{R}^q$ that satisfy the following assumptions \citep{pearl2009}:
\begin{enumerate}[label=(\textit{A\arabic*})]
	\item relevance: the instrument $Z$ is not independent of the treatment variable $X$. \label{ass:a1} 
    \item exclusion restriction: $Z$ is independent of all error terms that have an influence on $Y$ that is not mediated by $X$. \label{ass:a3} 
\end{enumerate}
The precise formulation depends on the specific methodological framework.\footnote{For instance, for linear 2SLS, the relevance assumption \ref{ass:a1} corresponds to the full-rank condition, i.e.\ \( \operatorname{Cov}(Z, X) \) has a full column rank, implying  $q \geq d$. The exclusion restriction \ref{ass:a3} translates to \( \operatorname{Cov}(Z, Y - X^\top \beta) = 0 \).} If assumptions \ref{ass:a1}-\ref{ass:a3} are fulfilled, the instrument $Z$ is called a valid instrument, although the latter assumption is not testable with  observational data since $H$ is unobserved. 

All instrumental variable methods exploit exogenous heterogeneity in the data, introduced through an instrument, to identify the causal effect of $X$ on $Y$ despite hidden confounding. For example, two-stage least squares (2SLS) uses the instrument to first predict the variation in $X$ that is free from confounding and then employs this variation to estimate the causal effect of $X$ on $Y$. See Section~\ref{subsec:related_work} for a detailed review. Our proposal also shares a similar spirit, but we try to exploit as much information from the observed data as possible, which, in a statistical sense, means to estimate the full observed distribution of $X$, $Y$, $Z$, and $W$ (if any). We argue that in this way, we can make full use of the information existing in the observed data, particularly from the instruments, offering the best chance to identify the causal effect.

As for the estimand, traditional IV methods aim to infer the mean effect on the outcome of interest under interventions. Recent advances in causal inference have introduced methods for estimating quantile treatment effects \citep{chernozhukov2005, imbens2009noadditivity}.
In many applications, however, it is useful to understand the entire distribution of the outcome under interventions rather than summarizing them into single quantities. %
To this end, this paper considers the full interventional distribution as our estimand, that is, the distribution of $Y$ under a do-intervention \citep{pearl2009} on $X$, i.e.\ do$(X:=x)$, for any $x$, or, equivalently, the distribution of the potential outcome $Y(x)$ (see, e.g.\ \citet{rubin2005}). Classical causal estimands, such as the average or quantile treatment effects, are functionals of our estimand. 

Conventionally, estimation of a distribution is a challenging problem. In a regression context, to estimate the conditional distribution of a response $Y$ given predictors $X$, various distributional regression approaches have been developed through estimating the cumulative distribution function~\citep{foresi1995conditional,hothorn2014conditional}, density function~\citep{dunson2007bayesian}, quantile function~\citep{meinshausen2006quantile}, etc. 
Alternatively, one can model the target distribution in a generative form, i.e.\ $Y=g(X,\varepsilon)$, where $\varepsilon$ follows a standard Gaussian distribution and $g$ is a (generally) nonlinear function. Each generative model $g(x,\varepsilon)$ induces a conditional distribution of $Y|X=x$, and when the function class of $g$ is expressive enough, there exists a function that induces any target conditional distribution. \citet{shen2024engression} recently proposed a method called engression to find such a function based on proper scoring rules; see Section~\ref{sec:engression} for a review. Engression has been shown to be a very simple yet flexible method that can effectively estimate many real-world data distributions. 

We note that generative modelling naturally aligns with the structural causal model (SCM) framework~\citep{pearl2009}, which represents causal mechanisms through explicit functional relationships between variables. Building on this connection, we employ generative modelling to capture the underlying causal process and develop an estimation method based on engression. This leads to a novel contribution to instrumental variable methodology, enabling estimation of the entire interventional distribution of the outcome.

The remainder of the paper is organised as follows. Section~\ref{sec:related_work_prelim} reviews related literature on instrumental variable methods and summarises key preliminaries, including recent advances in distribution estimation via engression. Section \ref{sec:setting} describes the setup and shows some illustrative identifiability theory as motivations for our method. In particular, we point out that estimating the observed distribution is sufficient (and sometimes necessary) for identifying our causal estimand.
Section \ref{sec:DIV_method} proposes the Distributional Instrumental Variable (DIV) method.
We allow for general model classes for both the treatment and outcome models, avoiding the common restrictive assumption of additive noises.
Section \ref{sec:identifiability} presents the theoretical results on identifiability of the interventional distribution for three different model classes. In Section \ref{sec:experiments}, we conduct extensive simulation studies to examine the performance of DIV against benchmark methods in estimating the interventional means and quantiles, while Section~\ref{sec:realdata} shows two real-world applications of DIV to an economic data set and a single-cell data set. The empirical results support our theoretical findings and suggest that the DIV method has a wide range of potential applications.

We provide a  software implementation of the DIV method in the \texttt{R} package \texttt{DistributionIV}. The package allows for estimation of the interventional mean and quantiles, as well as sampling from the fitted interventional distribution. More details on the software, including an illustrative example, can be found in Appendix \ref{sec:software}. We also provide a basic \texttt{Python} package \texttt{DistributionIV}.

\section{Related Work and Preliminaries}\label{sec:related_work_prelim}

\subsection{Related work} \label{subsec:related_work}

In the linear case, two-stage least squares (2SLS) \citep{wright1928tariff, theil1958economic} and control function (CF) approach \citep{heckman1976cf, newey1999control} are two commonly used methods for causal effect estimation in presence of hidden confounders and a valid instrument, both leading to the same estimation results. In case of nonlinear models, these two methods produce different estimates (see, for example, \citet{guo2016} for a systematic comparison). The idea of 2SLS is to use the part of treatment that is independent of the hidden confounders for modelling the outcome. In contrast, the CF approach relies on `splitting' the hidden confounders into two parts - one that is correlated with the treatment $X$, and the other that is uncorrelated with $X$ (the latter part is then used as an independent covariate when modelling the outcome).

To allow for flexible modelling assumptions, several recent methods have incorporated deep neural networks into instrumental variable (IV) frameworks. For instance, DeepIV \citep{hartford2017deepIV} adopts a two-stage procedure, where the conditional distribution of the treatment is first estimated using a neural network, and the counterfactual prediction function is then learned in a second stage. DeepGMM \citep{bennett2020deepGMM} formulates the IV estimation problem through moment conditions and uses neural networks to solve them. HSIC-X \citep{saengkyongam2022exploiting} leverages a stronger independence restriction rather than traditional moment conditions.

For the methods stated above, the aim is to estimate the interventional mean $\mathbb{E}_Y^{\mathrm{do}(X \coloneqq x)}$ and it is assumed that the noise term in $Y$ (consisting of both, the independent noise term $\varepsilon_Y$ and the hidden confounder $H$) is additive.

\citet{imbens2009noadditivity} provide identification and estimation results for the interventional quantiles and interventional mean in a nonseparable model that does not require additive noise, assuming that the treatment variable X is scalar and continuous. Their approach uses the conditional CDF $V \coloneqq F_{X|Z}$ as a control variable but relies on a strong condition that the conditional support of $V$ given $X$ does not depend on $X$, and is therefore limited to continuous treatments.

\citet{torgovitsky2015} investigates identification in a related nonseparable model that allows for multivariate instrument $Z$ and treatment $X$ (with univariate outcome $Y$). The author shows that point identification can be achieved even when the instrument has small or discrete support by assuming scalar unobserved heterogeneity and strict monotonicity of both the outcome and first-stage functions with respect to their unobservables. These monotonicity assumptions serve as an alternative to the common support conditions imposed in the former work.

\citet{sanchez2020kneib} propose a flexible IV distributional regression method based on GAMLSS and the control function approach, but it still requires specifying an outcome distribution family, whereas DIV learns the interventional law nonparametrically. Recently, \citet{chernozhukov2024} introduced a copula-based distributional regression restricted to binary instruments, and \citet{kook2024} developed DIVE for interventional CDFs under a binary treatment (with absolutely continuous outcomes), whereas our approach accommodates general instruments and treatments and targets broader distributional effects without imposing a copula model.

\subsection{Engression for distribution estimation}\label{sec:engression}

We review the engression method by \citet{shen2024engression} in a distributional regression setting for estimating the conditional distribution of the response $Y$ given predictors $X$. As introduced earlier, we model the conditional distribution in a generative form. To find the desired generative model that induces the target distribution, we need a quantitative metric to assess each generative model according to observed data. 

The energy score used in engression is a commonly used proper scoring rule \citep{gneiting2007scoring}. For a random variable $Y$ that follows a distribution $P$ and an observation $y$,  the energy score is defined as 
\begin{equation*}
	S(P,y)=\frac{1}{2}\bbE\|Y-Y'\| - \bbE\|Y-y\|,
\end{equation*} 
where $Y$ and $Y'$ are two independent draws from $P$. A higher score value means a better distributional fit. The energy score is a strictly proper scoring rule~\citep{szekely2003statistics}, that is, given $P^*$ and for any $P$, we have
\begin{equation*}
	\bbE_{P^*}[S(P^*,Y)] \ge \bbE_{P^*}[S(P,Y)],
\end{equation*}
where the equality holds if and only if $P=P^*$. 

Now given the predictors $X=x$, we consider the energy score to assess conditional distributions. For a generative model $g(x,\varepsilon)$ with $\varepsilon$ following the standard Gaussian, denote by $P^g_{Y|X=x}$ its induced distribution, i.e.\ $g(x,\varepsilon)\sim P^g_{Y|X=x}$. The above inequality of the expected energy score becomes
\begin{equation*}
	\bbE_{P^*_{Y|X=x}}[S(P^*_{Y|X=x}, Y)] \ge \bbE_{P^*_{Y|X=x}}[S(P^g_{Y|X=x}, Y)],
\end{equation*}
where the equality holds if and only if $P^g_{Y|X=x}=P^*_{Y|X=x}$. This motivates \citet{shen2024engression} to design a loss function based on the right-hand side of the above inequality, leading to the definition of engression as the generative model that minimises the negative expected energy score:
\begin{equation*}
	g^*\in\argmin_g\left[\bbE\|Y-g(X,\varepsilon)\| - \frac12\bbE\|g(X,\varepsilon)-g(X,\varepsilon')\|\right],
\end{equation*}
where $\varepsilon$ and $\varepsilon'$ are two i.i.d.\ draws from the standard Gaussian. \citet[Proposition 1]{shen2024engression} show that $g^*(x,\varepsilon)\sim P^*_{Y|X=x}$ for a fixed $x$, when such a $g^*$ exists.

While generative models do not directly provide closed-form expressions for the conditional density or quantile functions, they offer a  way of sampling from the conditional distribution. This allows for Monte Carlo estimation of various functionals of the distribution, such as conditional means, variances, or quantiles. We will describe this concretely in the causal inference context in Section~\ref{subsec:estimation}.

\subsection{Notation}

For a vector $x \in \mathbb{R}^d$, let $\|x\|$ be the Euclidean norm. For a random variable $X$ and $\alpha \in [0,1]$, we denote $Q_\alpha(X) := \inf \{x: \mathbb{P}(X \leq x) \geq \alpha\}$ the $\alpha$-quantile of $X$, or simply $Q_\alpha^X$. For two random variables, i.e.\ $X$ and $X^\prime$, following the same distribution, we write $X \eqdist X^\prime$. For a random variable $X$ following a probability distribution $P$, we simply write $X \sim P$. The support of a random vector $B \in \Omega \subseteq \mathbb{R}^q$ (for some $q \in \mathbb{N}$) is defined as the set of all $b \in \Omega$ for which every open neighbourhood of $b$ (in $\Omega$) has positive probability. A random variable $X$ is called bounded if there exists a constant $R < \infty$ such that $\mathbb{P}(|X| \leq R)=1$. Equalities of conditional laws are understood in the sense of regular conditional distributions: statements of the form $(X \! \mid \! Z=z) \eqdist (X^\prime \! \mid \! Z=z)$ hold for $P_Z$-a.e.\ $z$. If $Z$ is discrete, $P_Z$-a.e.\ means for all $z$ with $\mathbb{P}(Z=z)>0$.

\section{Setting and motivating theory} \label{sec:setting}

In this section, we introduce our setting of general structural causal models (SCM), where our target of interest is the interventional distribution of the outcome $Y$ under a $\mathrm{do}$-intervention on the treatment $X$. To motivate the proposed method, we present some illustrative identifiability results.

\subsection{SCM and estimand}

We assume the observed data of $(X,Y,Z)$ is generated according to an underlying structural causal model
\begin{equation} \label{eq:IV_model}
\begin{aligned}
    X &\coloneqq g(Z, \eta_X) \\
    Y &\coloneqq f(X, \eta_Y),
\end{aligned}
\end{equation}
where $Z \in \bbR^q$, $X \in \bbR^d$, $Y \in \bbR^p$, $Z$ is exogenous and independent of noise variables $(\eta_X,\eta_Y)$, while $\eta_X \in \bbR^d$ and $\eta_Y \in \bbR^p$ are generally correlated due to unobserved confounding between $X$ and $Y$, and functions $g$ and $f$ are generally nonlinear to allow for more complex relationships both between the instrument $Z$ and treatment $X$, and between the treatment $X$ and outcome $Y$. Throughout the paper, we assume all noise variables $\eta_X, \eta_Y$ to be absolutely continuous with respect to the Lebesgue measure, unless indicated otherwise. The SCM \eqref{eq:IV_model} induces the observational distribution $P_{(X,Y,Z)}$ over the observed variables $(X,Y,Z)$.

Our estimand is the interventional distribution $\Pint$ for all $x$ in the support of $X$. Note that the conventional estimands are functionals of our distributional estimand. For example, for some $x_1, x_0 \in \supp(X)$, the average treatment effect is defined as a contrast of its means: $\mathbb{E}_Y^{\mathrm{do}(X \coloneqq x_1)} - \mathbb{E}_Y^{\mathrm{do}(X \coloneqq x_0)}$; the $\alpha$-quantile treatment effect is a contrast of its $\alpha$-quantiles $Q_\alpha(P_Y^{\mathrm{do}(X \coloneqq x_1)})-Q_\alpha(P_Y^{\mathrm{do}(X \coloneqq x_0)})$.

\subsection{Identifiability}
We present an identification result for $\Pint$ to motivate our method introduced in Section~\ref{sec:DIV_method}. For the ease of illustration, here we consider a simplified case where all observed variables $(X,Y,Z)$ are univariate. More general and comprehensive identification results will be given in Section~\ref{sec:identifiability}.
\begin{proposition}\label{prop:univariate_identify}
 Consider the model in \eqref{eq:IV_model} and suppose the following assumptions hold:
 \begin{enumerate}[label=(\textit{\roman*})]
     \item For all $z \in \supp(Z)$, it holds that $g(z,\cdot)$ is strictly monotone. \label{ass:univar_i}
     \item For all $x \in \supp(X)$, $\supp(\eta_X|X=x) = \supp(\eta_X)$. \label{ass:univar_iii}
 \end{enumerate}

Then, for all $x \in \supp(X)$, the interventional distribution $\Pint$ is uniquely determined from the observed data distribution $P_{(X,Y)|Z}$.
\end{proposition}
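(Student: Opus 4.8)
The plan is to write $\Pint$ as a mixture whose components and whose mixing law are both recoverable from $P_{(X,Y)\mid Z}$. Under $\mathrm{do}(X\coloneqq x)$ the arrow from $\eta_X$ into $X$ is severed, so $\Pint$ is the law of $f(x,\eta_Y)$ with $\eta_Y$ drawn from its \emph{marginal} $P_{\eta_Y}$; disintegrating $P_{\eta_Y}=\int P_{\eta_Y\mid\eta_X=u}\,dP_{\eta_X}(u)$ turns this into
\[
\Pint=\int \mu_x^{u}\,dP_{\eta_X}(u),\qquad \mu_x^{u}:=\text{law of }f(x,\eta_Y)\text{ with }\eta_Y\sim P_{\eta_Y\mid\eta_X=u}.
\]
Hence it suffices to identify the mixing measure $P_{\eta_X}$ and the kernel $u\mapsto\mu_x^{u}$ on $\supp(\eta_X)$ from $P_{(X,Y)\mid Z}$; assumption~\ref{ass:univar_i} will make the first stage invertible, and assumption~\ref{ass:univar_iii} will ensure instrument variation sweeps out all of $\supp(\eta_X)$.

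Next I would reconstruct the first-stage noise. Since only the marginal of $\eta_Y$ enters $\Pint$, a monotone reparametrization of $\eta_X$ changes neither $P_{(X,Y,Z)}$, nor assumptions~\ref{ass:univar_i}--\ref{ass:univar_iii}, nor $\Pint$; so WLOG $\eta_X\sim\unif(0,1)$ and each $g(z,\cdot)$ is increasing (a decreasing $g(z,\cdot)$ on a set of $z$ is absorbed by using $1-\eta_X$ there and only affects bookkeeping). By assumption~\ref{ass:univar_i}, $g(z,\cdot)$ is then a bijection onto $\supp(X\mid Z=z)$ with $g(z,u)=Q_u(P_{X\mid Z=z})$, read off from $P_{X\mid Z}$, and the reconstructed noise $F_{X\mid Z=z}(X)$ equals $\eta_X$ almost surely.

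Now fix $x\in\supp(X)$ and any $z\in\supp(Z)$ with $x\in\supp(X\mid Z=z)$, and set $u_z:=F_{X\mid Z=z}(x)\in(0,1)$. Conditioning on $\{X=x,Z=z\}$ forces $\eta_X=u_z$ by invertibility, and since $Z\indpt(\eta_X,\eta_Y)$ it leaves the conditional law of $\eta_Y$ given $\eta_X=u_z$ intact; hence $P_{Y\mid X=x,Z=z}=\mu_x^{u_z}$. As the left-hand side is a functional of $P_{(X,Y)\mid Z}$, this identifies $\mu_x^{u}$ for every $u$ in $U_x:=\{\,u_z:z\in\supp(Z),\ x\in\supp(X\mid Z=z)\,\}$.

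Finally I would invoke assumption~\ref{ass:univar_iii}. Disintegrating $(Z,\eta_X)$ given $X=x$ shows $\supp(\eta_X\mid X=x)=\overline{U_x}$, so the hypothesis $\supp(\eta_X\mid X=x)=\supp(\eta_X)=[0,1]$ is exactly the statement that $U_x$ is dense in $[0,1]$; together with $P_{\eta_X}=\unif(0,1)$ this yields $\Pint=\int_0^1\mu_x^{u}\,du$ with the integrand identified on a dense set. I expect the real work, and the main obstacle, to lie precisely here: upgrading ``$\mu_x^{u}$ identified for $u$ in a dense (support-full) subset of $[0,1]$'' to ``$\int_0^1\mu_x^{u}\,du$ identified'' requires control of $u\mapsto\mu_x^{u}$ off that set --- e.g.\ weak continuity of $u\mapsto P_{\eta_Y\mid\eta_X=u}$, or strengthening the overlap in assumption~\ref{ass:univar_iii} to the domination $P_{\eta_X}\ll P_{\eta_X\mid X=x}$, so that $u\mapsto\mu_x^{u}$ is pinned down for Lebesgue-a.e.\ $u$ --- and I would close by invoking such regularity (or arguing it is automatic in the continuous-instrument regime that assumption~\ref{ass:univar_iii} already forces). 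Secondary technicalities to dispatch en route are the gap between ``$P_{X\mid Z=z}$-a.e.\ $x$'' and ``every $x\in\supp(X)$'' in the disintegration of $P_{(X,Y)\mid Z=z}$, the per-$z$ direction of monotonicity, and verifying $x\in\supp(X\mid Z=z)$ for enough $z$.
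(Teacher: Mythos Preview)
Your strategy is the same as the paper's: identify the law $\mu_x^u$ of $f(x,\eta_Y)$ given $\eta_X=u$ from the observables via first-stage invertibility, then integrate against the marginal $P_{\eta_X}$, invoking the common-support hypothesis to guarantee that all relevant $u$ are reached. The paper deduces Proposition~\ref{prop:univariate_identify} as the univariate special case of Theorem~\ref{thm:interv_ident_general}, whose proof (Step~II of Proposition~\ref{prop:ident_general}) carries out exactly this construction with conditional CDFs $F_{Y\mid X,\eta_X}(y\mid x,v)$ in place of your $\mu_x^u$ and then integrates over $p_{\eta_X}(v)\,dv$.

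On the ``main obstacle'' you flag --- upgrading identification of $u\mapsto\mu_x^u$ on a support-full set to identification of the mixture --- the paper does not resolve it either: it simply asserts that assumption~\ref{ass:univar_iii} gives $\{g^{-1}(z,x):z\in\supp(Z)\}=\supp(\eta_X)$ and integrates. Your proposed fix via the domination $P_{\eta_X}\ll P_{\eta_X\mid X=x}$ is the right strengthening, and in the paper's intended regime (continuous instrument, $\eta_X$ with a Lebesgue density) this is the implicit reading of the common-support condition. So you are being more careful here than the paper, not less.
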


Assumption \ref{ass:univar_iii} is known as the common support assumption (see \citet{imbens2009noadditivity}), and requires the instrument $Z$ to affect the treatment $X$ and exhibit sufficient variation. This assumption aligns with  the relevance assumption \ref{ass:a1}, which requires that the instrument $Z$ is associated with the treatment $X$. Note that assumptions we make here are similar to those proposed by \citet{imbens2009noadditivity}. However, in Section~\ref{sec:identifiability}, we address a setting where both $X$ and $Y$ are multivariate, whereas their model class is limited to a univariate $X$. Moreover, we present novel identifiability results demonstrating that by adding more structural restrictions to the outcome model, the interventional distribution becomes identifiable under strictly weaker assumptions. More detailed remarks on the assumptions will be given in Section~\ref{sec:identifiability}.

Proposition~\ref{prop:univariate_identify} indicates that, under certain assumptions, the target interventional distribution is uniquely identifiable from the observed joint distribution of $(X,Y)$ given $Z$. This suggests that an estimation method for $\Pint$ should fit the distribution of $(X,Y) | Z$ from the observed data while ensuring consistency with the SCM \eqref{eq:IV_model}, thereby enabling identifiability (e.g.\ exogeneity of $Z$).  Before specifying the methodology, we would like to emphasise that matching the full distribution can be also `necessary' (in some cases) for identifying the above estimand. For example, classical IV regression, which estimates only conditional means, fails to achieve identification when the number of instruments is smaller than that of the treatment variables --- a situation typically referred to as an under-identified setting. In contrast, leveraging the full distribution allows for identification in cases where classical IV regression fails. Below is a simple example to illustrate the failure of 2SLS for identifying the causal effects. A formal identifiability result is given in Section~\ref{sec:identifiability} for settings with a single binary (or discrete) instrument $Z$ and multivariate treatment. 

Consider $Z \in \{0, 1\}$. Assume the data generating process follows the SCM

\begin{equation} \label{eq:SCM_underident}
\begin{aligned}
    X_1 &\coloneqq g_1(Z, \eta_{X_1}) \\
    X_2 &\coloneqq g_2(Z, \eta_{X_2}) \\
    Y &\coloneqq \beta_1 X_1 + \beta_2 X_2 + \eta_Y,
\end{aligned}
\end{equation}
where $X_1, X_2, Y, \eta_X, \eta_Y \in \bbR$. We first show that 2SLS procedure fails identifying the interventional mean.

\begin{example}[Failure of 2SLS] \label{ex:failure_2sls}
In the first stage, the treatments $X_1,X_2$ are regressed on $Z$. For a binary $Z$, the conditional mean of $X_i$ given $Z=z$ can always be written as a linear function of $z$: 
\begin{equation*}
    \bbE(X_i|Z=z) = (1 - z) \cdot \bbE[g_i(0,\eta_{X_i})] + z \cdot \bbE[g_i(1,\eta_{X_i})] = c_i + \alpha_i z,
\end{equation*}
where $c_i=\bbE(g_i(0,\eta_{X_i}))$ and $\alpha_i=(\bbE(g_i(1,\eta_{X_i})) - \bbE(g_i(0,\eta_{X_i})))$. Let $\hat{X}_1 \coloneqq \bbE(X_1|Z=z) = c_1 + \alpha_1 z$ and $\hat{X}_2 \coloneqq \bbE(X_2|Z=z) = c_2 + \alpha_2 z$.
    
    In the second stage, $Y$ is regressed on $\hat{X}_1$ and $\hat{X}_2$. Due to multicollinearity of $\hat{X}_1$ and $\hat{X}_2$, the parameter estimates are not well-defined, resulting in the non-identifiability of the causal effects $\beta_1$ and $\beta_2$.
\end{example}

In contrast, with the following proposition we demonstrate a novel result that with the distributional instrumental variable approach, the parameters $\beta_1$ and $\beta_2$, and therefore the interventional distribution $\Pint$ can still be identified under certain assumptions.

\begin{proposition}\label{prop:binary_instrument_ident}
    Assume for $j \in \{1, 2\}$, it holds for all $z \in \supp(Z)$ that $g_j(z,\cdot)$ is strictly monotone and differentiable almost everywhere, and for any constant $c$, it holds $(X_j|Z=0) \noneqdist (c + X_j|Z=1)$. Then $\beta_1$ and $\beta_2$ are uniquely determined from the observed data distribution $P_{(X_1,X_2,Y)|Z}$.
\end{proposition}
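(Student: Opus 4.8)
The plan is to use the strict monotonicity of the first‑stage maps to reduce the problem to a one‑dimensional ``matching'' of conditioning events across the two instrument levels. Since $g_j(z,\cdot)$ is strictly monotone and $\eta_{X_j}$ is absolutely continuous, knowing $(X_j,Z)$ is a.s.\ equivalent to knowing $\eta_{X_j}$; hence $\bbE[Y\mid X_1,X_2,Z]=\beta_1X_1+\beta_2X_2+\bbE[\eta_Y\mid\eta_{X_1},\eta_{X_2}]$, i.e.\ a known linear part plus a confounding term depending only on $(\eta_{X_1},\eta_{X_2})$, and the idea is to cancel that term by differencing the conditional means at $Z=0$ and $Z=1$ along level sets of $(\eta_{X_1},\eta_{X_2})$. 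First I would fix notation: by replacing $\eta_{X_j}$ with $-\eta_{X_j}$ if needed, assume $g_j(0,\cdot)$ is increasing, and carry out the argument in the concordant case where $g_j(1,\cdot)$ is increasing as well; write $q_j(z,u):=Q_u(X_j\mid Z=z)$ for the (identified) conditional quantile function and $V_j:=F_{X_j\mid Z}(X_j\mid Z)$ for the conditional rank. Strict monotonicity gives $V_j=F_{\eta_{X_j}}(\eta_{X_j})$ a.s., so $(V_1,V_2,\eta_Y)$ is a measurable bijection of $(\eta_{X_1},\eta_{X_2},\eta_Y)$, hence independent of $Z$, and $X_j=q_j(Z,V_j)$ a.s.

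Next I would establish the key identity: for every $z\in\{0,1\}$ and $(u_1,u_2)\in(0,1)^2$, conditioning on $\{V_1=u_1,V_2=u_2,Z=z\}$ fixes $X_j=q_j(z,u_j)$ and fixes $(\eta_{X_1},\eta_{X_2})$, so using $Y=\beta_1X_1+\beta_2X_2+\eta_Y$ and $Z\indpt(\eta_{X_1},\eta_{X_2},\eta_Y)$,
\[
\bbE\!\left[Y\mid V_1=u_1,V_2=u_2,Z=z\right]=\beta_1\,q_1(z,u_1)+\beta_2\,q_2(z,u_2)+m(u_1,u_2),
\]
where $m(u_1,u_2):=\bbE[\eta_Y\mid V_1=u_1,V_2=u_2]$ does not depend on $z$. (If $\bbE|Y|=\infty$ one replaces the conditional mean throughout by the conditional median, which still obeys this identity since it is translation‑equivariant.) Every quantity here other than $m$ is identified from $P_{(X_1,X_2,Y|Z)}$, because $V_j$ is a known function of $(X_j,Z)$.

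Differencing the identity between $z=1$ and $z=0$ cancels $m$:
\[
\psi(u_1,u_2):=\bbE[Y\mid V_1=u_1,V_2=u_2,Z=1]-\bbE[Y\mid V_1=u_1,V_2=u_2,Z=0]=\beta_1\,\delta_1(u_1)+\beta_2\,\delta_2(u_2),
\]
with $\delta_j(u):=q_j(1,u)-q_j(0,u)$, and $\psi,\delta_1,\delta_2$ all identified. Now $\delta_j\equiv c$ would mean $q_j(1,u)=c+q_j(0,u)$ for all $u$, i.e.\ $X_j\mid Z=1\eqdist c+(X_j\mid Z=0)$, equivalently $X_j\mid Z=0\eqdist(-c)+(X_j\mid Z=1)$ --- excluded by hypothesis --- so each $\delta_j$ is non‑constant. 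Choosing $u_1\neq u_1'$ with $\delta_1(u_1)\neq\delta_1(u_1')$ and any fixed $u_2$ yields
\[
\beta_1=\frac{\psi(u_1,u_2)-\psi(u_1',u_2)}{\delta_1(u_1)-\delta_1(u_1')},
\]
and symmetrically $\beta_2$; this pins down $\beta_1,\beta_2$ (and hence $\Pint$, being the law of $\beta_1x_1+\beta_2x_2+\eta_Y$ with $\eta_Y\eqdist Y-\beta_1X_1-\beta_2X_2\mid Z=z$).

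The main obstacle is conceptual rather than computational: recognising that the right way to line up conditioning events across $Z=0$ and $Z=1$ is by conditional \emph{rank} $V_j$ (so that the unidentified confounding term $m$ takes the same value on both sides and cancels), and then seeing that the residual non‑degeneracy requirement --- $\delta_j$ non‑constant --- is exactly the stated assumption that $X_j\mid Z=0$ and $X_j\mid Z=1$ do not differ by a pure location shift. The differentiability‑a.e.\ hypothesis on $g_j$ plays no role in the skeleton above and appears only to be needed for technical regularity of the conditional quantile functions. Finally, the monotonicity‑direction bookkeeping of the first step is routine: in the discordant case one replaces the increasing quantile coupling by the decreasing rearrangement, and the identity and cancellation go through unchanged.
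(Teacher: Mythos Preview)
Your argument is correct and takes a genuinely different, more elementary route than the paper. The paper proves Proposition~\ref{prop:binary_instrument_ident} by reducing to its general pre-ANM identifiability result (Proposition~\ref{prop:ident_bin_preadd}), verifying that the no–location–shift condition implies the derivative assumptions \ref{ass:d4}--\ref{ass:d5}; that general result proceeds by matching conditional quantiles of $Y$ across levels of $Z$, differentiating in $e_1$ and $e_Y$, and manipulating the resulting system. You instead exploit the linearity of the outcome equation directly: the control variable $V_j=F_{X_j\mid Z}(X_j\mid Z)$ (which is essentially the \citet{imbens2009noadditivity} construction) coincides with $F_{\eta_{X_j}}(\eta_{X_j})$, so conditioning on $(V_1,V_2,Z)$ fixes $(\eta_{X_1},\eta_{X_2})$ and makes the confounding term $m(u_1,u_2)=\bbE[\eta_Y\mid V_1,V_2]$ independent of $z$; differencing across $z\in\{0,1\}$ then yields the additive identity $\psi(u_1,u_2)=\beta_1\delta_1(u_1)+\beta_2\delta_2(u_2)$, and the no-shift hypothesis is precisely the statement that each $\delta_j$ is non-constant, so varying one coordinate isolates the corresponding $\beta_j$. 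Your approach is self-contained, gives an explicit identification formula, and---as you note---does not actually use the a.e.-differentiability of $g_j$; the paper needs that hypothesis because its argument goes through $\partial g_j/\partial e_j$. The trade-off is that the paper's route is a special case of a result covering nonlinear $f_k$, whereas yours is tailored to the linear outcome in \eqref{eq:SCM_underident}. The one place to be slightly more careful than ``routine'' is the discordant-monotonicity case: there $V_j$ no longer equals $F_{\eta_{X_j}}(\eta_{X_j})$ for both values of $z$, so one must work with the rank in the \emph{noise} scale (equivalently, flip $V_j\mapsto 1-V_j$ on the appropriate level of $Z$) and check that the identified side of the differenced identity still pins down $\beta_j$; the paper's own proof of Step~I likewise treats only the increasing case explicitly.
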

The assumption in this proposition means that the distributions of $X_j|Z=0$ and $X_j|Z=1$ are different in more than just a deterministic shift. The proposition indicates that if the instrument affects the treatments in more than just a mean shift,  then the full distribution $P_{(X_1,X_2,Y)|Z}$ of the observed data is sufficient to identify the causal effects, whereas 2SLS, which only exploits the conditional means of $X_j|Z=0$ and $X_j|Z=1$ in its first stage, cannot make use of the more diverse information even if it exists. In this sense, utilizing the full observed distribution allows us to identify the causal effects in this setting. If instead the two conditional distributions differ only by an additive shift, the assumption is violated and the causal effects are not identifiable in general.

\section{DIV method}\label{sec:DIV_method}

The previous section suggests the sufficiency and necessity of fitting the conditional distribution of $(X,Y)|Z$ for identifying and estimating the interventional distribution. In this section, we propose our DIV approach to realise this idea.

\subsection{Joint generative model}
Note that our SCM in \eqref{eq:IV_model} is a generative model for the underlying data distribution, where the noise variables associated with the treatment and outcome, $\eta_X$ and $\eta_Y$, are correlated. We propose to retain this generative form for our model class, while allowing the noise variables to be correlated, yielding the following joint generative model:
\begin{equation}\label{eq:joint_div_model}
	\begin{aligned}
		\eta_X &= h_X(\varepsilon_X,\varepsilon_H)\\
		\eta_Y &= h_Y(\varepsilon_Y,\varepsilon_H)\\
		X &= g(Z,\eta_X)\\
		Y &= f(X,\eta_Y)
	\end{aligned}
\end{equation}
where the correlated noise variables $\eta_X$ and $\eta_Y$ are parametrised as two functions (to be learned) of an independent noise term $\varepsilon_X\in\bbR^d$ and $\varepsilon_Y\in\bbR^p$, respectively, and a shared noise $\varepsilon_H\in\mathbb{R}^{\min\{d,p\}}$ to capture the correlation induced by latent confounders; all of them are assumed to follow the standard Gaussian distribution without loss of generality. Figure~\ref{fig:div_model} provides a graphical representation of the DIV model, illustrating the relationships between the instrumental variable  $Z$, treatment $X$, outcome $Y$, and the associated noise components. %

\begin{figure}[!ht]
\centering
\begin{tikzpicture}[node distance={18mm},main/.style = {draw, circle, minimum size=0.75cm}]
	\node[main,inner sep=1.5pt](x)  {$X$};
	\node[main,inner sep=1.5pt](y)[right of=x] {$Y$};
	\node[main,inner sep=1.5pt](z)[above left of=x] {$Z$};
	\node[main,inner sep=1pt, dashed](ex)[below of=x] {$\eta_X$};
	\node[main,inner sep=1pt, dashed](ey)[below of=y ] {$\eta_Y$};
	\node[main,inner sep=1pt, dashed](epsx)[below left of=ex] {$\varepsilon_X$};
	\node[main,inner sep=1pt, dashed](epsy)[below right of=ey] {$\varepsilon_Y$};
    \node[main,inner sep=1pt, dashed] (epsh) at ($(epsx)!0.5!(epsy)$) {$\varepsilon_H$};
	\draw[->] (x) -- (y);
	\draw[->] (z) -- (x);
	\draw[->] (ex) -- (x);
	\draw[->] (ey) -- (y);
	\draw[->] (epsx) -- (ex);
	\draw[->] (epsy) -- (ey);
	\draw[->] (epsh) -- (ex);
	\draw[->] (epsh) -- (ey);
\end{tikzpicture}
\caption{Graphical representation of the DIV model, depicting the generative structure used for estimating the joint distribution of $(X,Y) | Z$. Observed variables are depicted using solid circles, and dashed circles correspond to sampled/modelled noise components.}
\label{fig:div_model}
\end{figure}

\subsection{Distributional objective and DIV solution}
Our estimation approach uses the expected negative energy score \citep{gneiting2007scoring} as a loss function to train the conditional generative model. As mentioned in Section~\ref{sec:engression}, the energy score is a strictly proper scoring rule originally used for evaluation of multivariate distributional forecasts (for more details, see Appendix~\ref{app:energy_score}). For any two distributions $P$ and $P_0$, the energy loss is defined as 
\begin{equation}
    \mathcal{L}_e(P, P_0) \coloneqq \mathbb{E}_{Y \sim P_0}[-S(P, Y)] = \mathbb{E}_{Y \sim P_0, U \sim P} \|U - Y\| - \frac{1}{2} \mathbb{E}_{U, U^\prime \sim P}\|U - U^\prime\|,    
\end{equation}
where $Y \sim P_0$, $U$ and $U^\prime$ are two independent draws from $P$.

In principle, other proper scoring rules or distributional distances could be used to fit the generative model.
We use the energy score because it yields a simple sample-based objective that works well empirically for conditional generative modelling \citep{shen2024engression} and admits asymptotic efficiency guarantees \citep[Theorem~7]{shen2025reverse}.

Let $(\hat{X},\hat{Y})$ and $(\hat{X}',\hat{Y}')$ be two independent samples from the joint distribution of $(X,Y)|Z$ induced by the joint generative model \eqref{eq:joint_div_model}, obtained by
\begin{equation*}
	\begin{aligned}
		\hat{X} &\coloneqq g(Z,h_X(\varepsilon_X,\varepsilon_H))\\
		\hat{X}' &\coloneqq g(Z,h_X(\varepsilon_X',\varepsilon_H'))\\
		\hat{Y} &\coloneqq f(\hat{X},h_Y(\varepsilon_Y,\varepsilon_H))\\
		\hat{Y}' &\coloneqq f(\hat{X}',h_Y(\varepsilon_Y',\varepsilon_H'))
	\end{aligned}
\end{equation*}
with $\varepsilon_X,\varepsilon_Y,\varepsilon_H,\varepsilon_X',\varepsilon_Y',$ and $\varepsilon_H'$ being independently drawn from standard Gaussians. 
We then define the population version of the DIV solution as a minimiser of the energy loss
\begin{equation}\label{eq:joint_div_loss}
	(g^*,f^*,h^*_X,h^*_Y)\in\argmin_{f,g,h_X,h_Y} \bbE\left[\|(X,Y) - (\hat{X},\hat{Y})\| - \frac12\|(\hat{X},\hat{Y}) - (\hat{X}',\hat{Y}')\|\right].
\end{equation}
We parameterise $g,f,h_X,h_Y$ by neural networks. The population objective \eqref{eq:joint_div_loss} is approximated by its empirical counterpart using random draws of the latent noise variables, and the resulting differentiable loss is minimised by stochastic gradient descent, mirroring the optimisation setup used in engression \citep{shen2024engression}.

We show in the following proposition that the DIV solution induces the distribution over $(X,Y)|Z$ that matches the underlying distribution $P_{(X,Y)|Z}$. This result allows us to identify the interventional distribution from the DIV solution under suitable assumptions, as we present in Section~\ref{sec:identifiability}.

\begin{proposition}\label{prop:popul_sol_joint}
	The DIV solution defined in \eqref{eq:joint_div_loss} satisfies 
    \begin{center}$\Big( (g^*(Z,\eta^*_X),f^*(g^*(Z,\eta^*_X),\eta^*_Y))|Z=z \Big) \eqdist \Big((X,Y)|Z=z \Big)$\end{center} 
    for all $z\in\supp(Z)$, where $\eta^*_X=h^*_X(\varepsilon_X,\varepsilon_H)$ and $\eta^*_Y=h^*_Y(\varepsilon_Y,\varepsilon_H)$.
\end{proposition}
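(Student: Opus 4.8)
The plan is to reduce Proposition~\ref{prop:popul_sol_joint} to the strict properness of the energy score, exactly as in the univariate engression argument of \citet{shen2024engression} reviewed in Section~\ref{sec:engression}, but applied conditionally on $Z$ and to the joint $(X,Y)$-target. First I would observe that the energy loss in \eqref{eq:joint_div_loss} decomposes as an outer expectation over $Z$ of the conditional energy loss $\mathcal{L}_e\big(P^{f,g,h_X,h_Y}_{(X,Y)|Z=z},\, P_{(X,Y)|Z=z}\big)$, where $P^{f,g,h_X,h_Y}_{(X,Y)|Z=z}$ denotes the law of $(\hat X,\hat Y)$ given $Z=z$ induced by the quadruple $(f,g,h_X,h_Y)$ through the generative recipe \eqref{eq:joint_div_model}. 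Since the minimisation is over functions (not over $z$-indexed families with constraints), minimising the outer expectation is equivalent to minimising the conditional energy loss simultaneously for $P_Z$-almost every $z$.

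Next I would invoke strict properness of the (multivariate) energy score: for each fixed $z$, $\mathcal{L}_e(P, P_{(X,Y)|Z=z})$ is minimised over all distributions $P$ on $\mathbb{R}^{d+p}$ uniquely at $P = P_{(X,Y)|Z=z}$ (this is the inequality quoted in Section~\ref{sec:engression}, now in the multivariate form referenced in Appendix~\ref{app:energy_score}, and requires $\bbE\|(X,Y)\| < \infty$ conditionally, which I would either assume or note as a regularity condition). The one genuine gap to fill is an \emph{attainability/realisability} claim: the infimum over the model class $\{P^{f,g,h_X,h_Y}_{(X,Y)|Z=z}\}$ actually equals the unconstrained infimum, i.e.\ the true conditional law $P_{(X,Y)|Z=z}$ is representable (for a common choice of $f,g,h_X,h_Y$ across $z$) within the generative family \eqref{eq:joint_div_model}. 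Here I would argue that the data-generating SCM \eqref{eq:IV_model} itself provides such a representation: take $g$ and $f$ to be the true structural functions, and construct $h_X, h_Y$ so that $(h_X(\varepsilon_X,\varepsilon_H), h_Y(\varepsilon_Y,\varepsilon_H)) \eqdist (\eta_X,\eta_Y)$ — possible by a standard Rosenblatt/Knothe transport argument since $(\varepsilon_X,\varepsilon_Y,\varepsilon_H)$ is a Gaussian vector of dimension $d+p+\min\{d,p\} \ge d+p$, hence can be pushed forward onto the absolutely continuous law of $(\eta_X,\eta_Y)$; independence of $Z$ from $(\eta_X,\eta_Y)$ then transfers to independence of $Z$ from the reconstructed noises, so the induced conditional law of $(\hat X,\hat Y)$ given $Z=z$ coincides with $P_{(X,Y)|Z=z}$ for every $z$.

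Combining these pieces: the minimum of \eqref{eq:joint_div_loss} is attained, and any minimiser $(g^*,f^*,h^*_X,h^*_Y)$ must, for $P_Z$-a.e.\ $z$, induce a conditional law equal to $P_{(X,Y)|Z=z}$ by strict properness; extending from $P_Z$-a.e.\ $z$ to all $z\in\supp(Z)$ is where I would be slightly careful — either the statement is understood $P_Z$-almost everywhere, or one invokes a continuity/support argument, but for the stated conclusion the a.e.\ version on $\supp(Z)$ suffices and matches how the result is used downstream. I expect the main obstacle to be making the realisability step fully rigorous: one must exhibit measurable $h_X,h_Y$ (not merely argue abstractly) and confirm that the joint Gaussian source has enough ``dimensions'' and continuity to reproduce an arbitrary absolutely continuous correlated pair $(\eta_X,\eta_Y)$ while keeping the shared component $\varepsilon_H$ doing the work of the confounding correlation; the rest is a direct transcription of the engression properness argument conditioned on $Z$.
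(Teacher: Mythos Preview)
Your proposal is correct and follows essentially the same approach as the paper: both arguments reduce to strict properness of the energy score applied conditionally on $Z$, combined with realisability of the true $P_{(X,Y)|Z}$ within the generative model class, and then integrate over $P_Z$ to conclude that any non-matching model on a set of positive measure has strictly larger loss. Your version is in fact more explicit than the paper's on the realisability construction (via a transport argument for $(h_X,h_Y)$) and on the $P_Z$-a.e.\ versus ``all $z\in\supp(Z)$'' distinction, both of which the paper simply asserts.
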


\subsection{Estimation of the interventional distribution and its functionals} \label{subsec:estimation}

Once a DIV model is fitted, we estimate the interventional distribution via sampling due to its generative model nature. Note that a do-intervention, $\mathrm{do}(X \! \coloneqq \! x)$, removes the dependency between $X$ and $\eta_Y$. That is, in the SCM \eqref{eq:IV_model}, $X$ is set to a fixed value $x$, while $\eta_Y$ follows its marginal distribution.

Thus, we propose the following sampling procedure that produces samples from the target interventional distribution: for any fixed $x$, we (i) sample $\varepsilon_Y, \varepsilon_H$ from standard Gaussians, (ii) compute the noise variable $\eta^*_Y=h^*_Y(\varepsilon_Y, \varepsilon_H)$, and (iii) obtain a sample $Y^*=f^*(x,\eta^*_Y)$. We will show below in Section~\ref{sec:identifiability} that, under suitable assumptions, the sample $Y^*$ obtained in this way indeed follows the interventional distribution $P_Y^{\mathrm{do}(X \coloneqq x)}$.

Based on samples from the interventional distribution, one can directly estimate its various characteristics, such as the interventional mean or quantiles. 
At the population level, the DIV estimator of \textit{interventional mean function} is derived from 
\begin{equation}
    \mu^*(x) := \mathbb{E}_{\varepsilon_H, \varepsilon_Y}[f^*(x,\varepsilon_H, \varepsilon_Y)].
\end{equation}
The DIV estimator of the \textit{interventional median function} is
\begin{equation}
    m^*(x) := Q_{0.5}[f^*(x,\varepsilon_H, \varepsilon_Y)],
\end{equation}
where the quantile is taken with respect to $(\varepsilon_H, \varepsilon_Y)$. More generally, for any $\alpha \in [0,1]$, the DIV estimator for the \textit{interventional quantile function} is 
\begin{equation}
    q^*_{\alpha}(x) := Q_{\alpha}[f^*(x,\varepsilon_H, \varepsilon_Y)].
\end{equation}

For a finite sample, based on the empirical solution $\hat{f}$, the corresponding estimators are constructed by sampling. To do so, for any $x$, we sample $(\varepsilon_{H,j}, \varepsilon_{Y,j})$, $j = 1,...,m$ where $m$ some positive constant, and then obtain $\hat{f}(x,\varepsilon_{H,j},\varepsilon_{Y,j})$, $j = 1,...,m$. These form an i.i.d.~sample from the estimated interventional distribution. All point estimates are then computed using the empirical versions of the estimators from this sample. This means, we compute the interventional mean by $\frac{1}{m}\sum_{j=1}^m \hat{f}(x,\varepsilon_{H,j},\varepsilon_{Y,j})$. Correspondingly, the interventional quantiles (in particular, the median) are estimated by the sample quantiles of $\hat{f}(x,\varepsilon_{H,j},\varepsilon_{Y,j})$, $j = 1,...,m$.

As an illustrative example, we consider an IV model as defined in \eqref{eq:IV_model}, with $g$ and $f$ both being nonlinear softplus functions. Figure~\ref{fig:interv_sample_quant} shows samples from the true interventional distribution $P_Y^{\mathrm{do}(X:=x)}$ and the estimated interventional distribution $\hat{P}_Y^{\mathrm{do}(X:=x)}$, which visually appear to closely match, along with the true and the estimated interventional quantile functions $q_\alpha^*(x)$ and $\hat{q}_\alpha^*(x)$ for $\alpha \in \{0.1, 0.5, 0.9\}$, which also show only marginal discrepancies. In Figure~\ref{fig:kernel_dens}, we present kernel density estimates based on samples from the true and the estimated interventional distributions at three distinct values of $x$. 

\begin{figure}[!ht]
    \begin{minipage}{0.48\textwidth}
        \centering
        \includegraphics[width=1.0\textwidth,height=0.8\textheight,keepaspectratio]{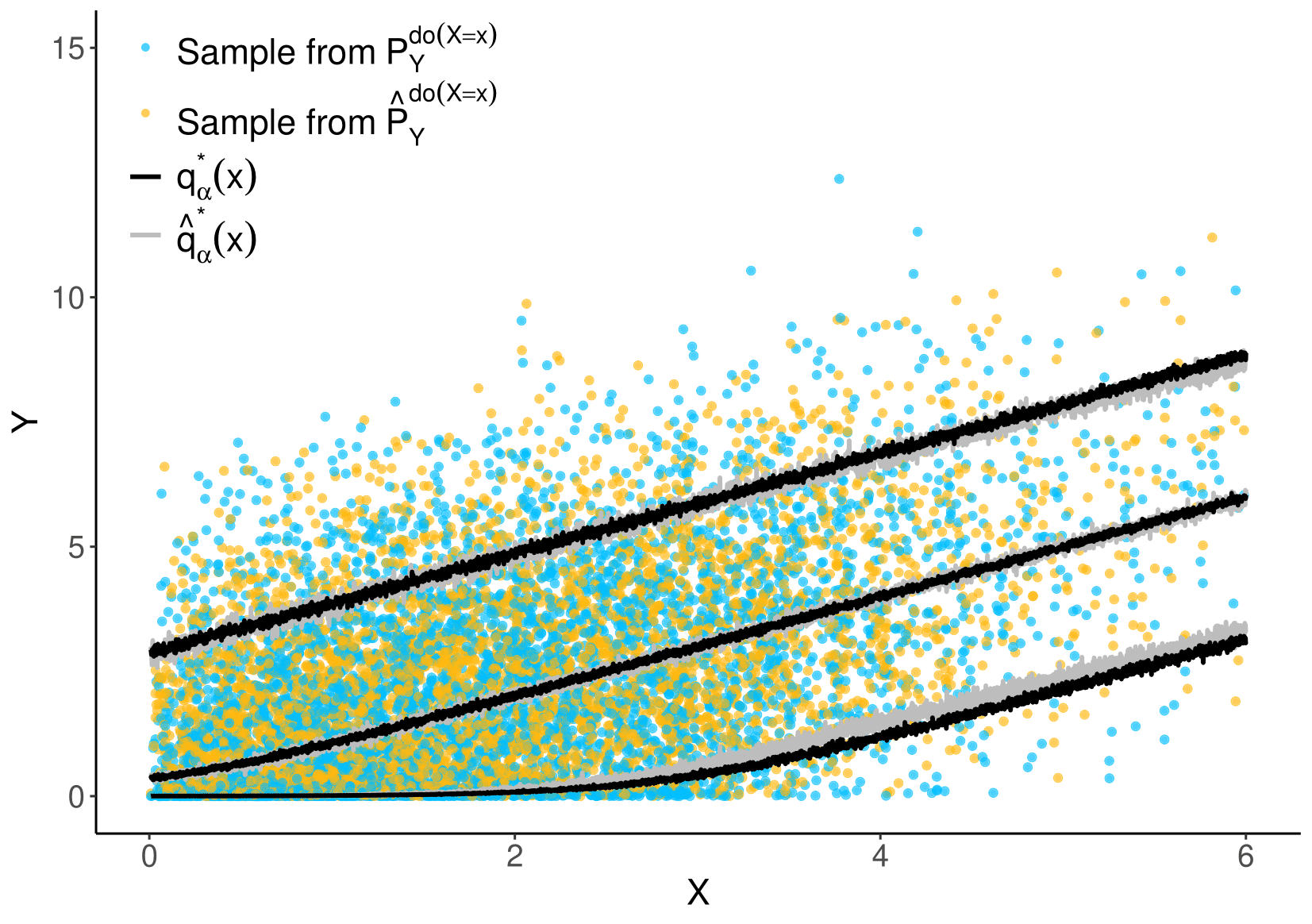}
		\caption{Samples from $P_Y^{\mathrm{do}(X\coloneqq x)}$ (blue) and $\hat{P}_Y^{\mathrm{do}(X\coloneqq x)}$ (yellow) along with interventional quantile functions $q^*_{\alpha}(x)$ and $\hat{q}^*_{\alpha}(x)$ for $\alpha\!\in\!\{0.1, 0.5, 0.9\}$}
        \label{fig:interv_sample_quant}
    \end{minipage}
    \hfill
    \begin{minipage}{0.48\textwidth}
        \centering
        \includegraphics[width=1.0\textwidth,height=0.8\textheight,keepaspectratio]{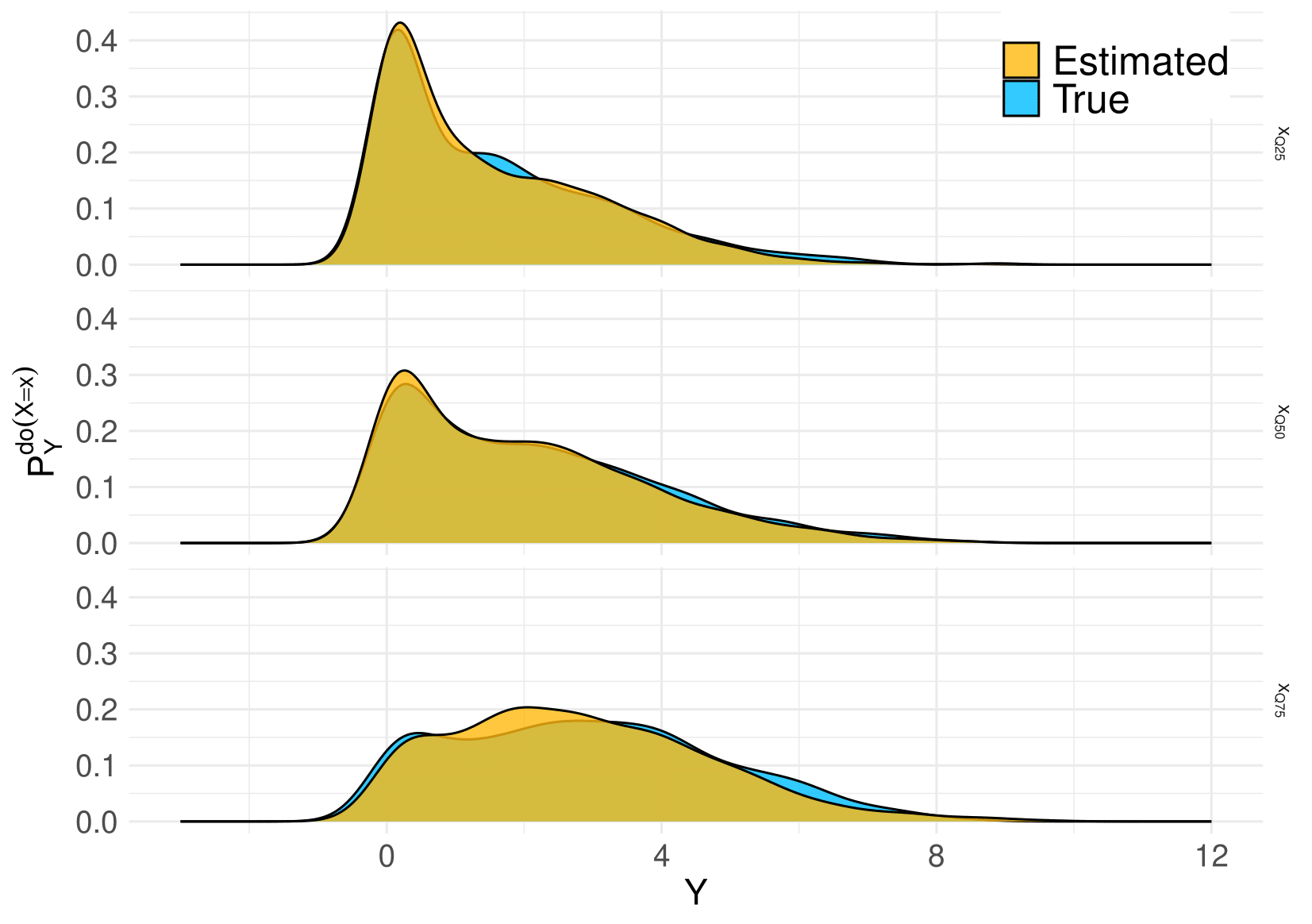}
		\caption{Kernel density estimates based on samples from $P_Y^{\mathrm{do}(X\coloneqq x)}$ (blue) and $\hat{P}_Y^{\mathrm{do}(X\coloneqq x)}$ (yellow) at training data quantiles $x\!\in\!\{x_{Q25}, x_{Q50}, x_{Q75}\}$ with $1000$ samples per $x$}
        \label{fig:kernel_dens}
    \end{minipage}
\end{figure}

Besides that, DIV not only estimates the interventional distribution $\Pint$ but also provides an estimation of the joint observational distribution $P_{(X,Y)}$ at no additional cost. This aspect is discussed in more detail, along with empirical results, in Appendix~\ref{app:Pobs_Pint}.

\subsection{Conditional interventional distribution} \label{subsec:conditional_interventional}

The DIV method can be directly adapted to incorporate additional exogenous covariates $W \in \bbR^l$ that affect both the treatment $X$ and the outcome $Y$, and the estimand becomes the conditional interventional distribution $P_{Y|W=w}^{\mathrm{do}(X \coloneqq x)}$, which can be used to obtain conventional estimands such as conditional average treatment effects or conditional quantile effects. Specifically, we augment the joint generative model \eqref{eq:joint_div_model} by adding $W$ into the treatment and outcome models, i.e.\ $X=g(Z,W,\eta_X)$ and $Y=f(X,W,\eta_Y)$. For estimation, we learn the DIV model to fit the joint distribution of $(X,Y)|Z,W$, which leads to the same objective function as in \eqref{eq:joint_div_loss} where samples $\hat{X},\hat{X}',\hat{Y},\hat{Y}'$ also depend on $W$ now. All the identification results developed in the next section can be readily extended to this setting, which guarantees the identification of the new estimand by the heterogenous adaption of DIV. Our \texttt{R} implementation also supports this scenario. 

Furthermore, in some cases where the instrument $Z$ is not exogenous, incorporating additional covariates $W$ could still render $Z$ (conditionally) exogenous and facilitate identifiability. This has been studied in the setting of conditional IV (see, e.g.\ \citep{conditional_iv}). While some existing methods may not be directly applicable in the conditional IV setting (see, e.g.\ Section~2.1 in \cite{saengkyongam2022exploiting}), our approach can be naturally extended to accommodate such cases.

\section{Identifiability and consistency} \label{sec:identifiability}

In this section, we present conditions under which a model from a certain model class, which induces a joint distribution $P_{(X,Y)|Z=z}$, is unique. This is referred to as the identifiability of the model class. Note that we are primarily interested in the identifiability of the interventional distribution $P_Y^{\mathrm{do}(X:=x)}$, which follows from the identifiability of the model class. 

In the following, we distinguish three model configurations and present the assumptions needed to ensure the identifiability of the interventional distribution. We first show the identifiability of the general model class $\mathcal{M}_{DIV}$ in Section~\ref{subsec:general_model_identifiability}, requiring the instrument~$Z$ to have a large support (see Remark~\ref{rem:2}). Further, in Section~\ref{subsec:pre-ANM_identifiability} we relax the large support condition by restricting the outcome model class to the pre-additive noise models $\mathcal{M}_{DIV}^{\pre}$, and present the results for two cases: the instrument $Z$ being continuous (but not requiring large support) and discrete (with binary instrument as a special case). Table~\ref{tab:ident_assumptions} below provides an overview of the identifiability results.

\begin{table}[t!]
\centering
\scriptsize
\begin{tabular}{|>{\raggedright\arraybackslash}p{2.3cm}|>
{\raggedright\arraybackslash}p{2.3cm}|>
{\raggedright\arraybackslash}p{2.6cm}|>
{\raggedright\arraybackslash}p{2.0cm}|>
{\raggedright\arraybackslash}p{2.3cm}|>
{\centering\arraybackslash}p{1.7cm}|}
\hline
\textbf{Condition on $\cF$} & \textbf{Condition on $\cG$} & \textbf{Instrument Type} & \textbf{\#Instruments} & \textbf{Assumptions} & \textbf{Theorem}\\ \hline
General class &  General class & Continuous \& large support & Require $q \geq d$ & \ref{ass:1}, \ref{ass:3}, \ref{ass:4} & \ref{thm:interv_ident_general} (Sec~\ref{subsec:general_model_identifiability}) \\ 
Pre-ANM class & General class & Continuous & Allow for $q < d$ & \ref{ass:c1}-\ref{ass:c5} & \ref{thm:interv_ident_cont_preadd} (Sec~\ref{subsec:pre-ANM_identifiability}) \\ 
Pre-ANM class & Strictly nonlinear & Discrete & Allow for $q < d$ & \ref{ass:d1}-\ref{ass:d4} & \ref{thm:interv_ident_bin_preadd} (Sec~\ref{subsec:pre-ANM_identifiability}) \\ \hline
\end{tabular}
\caption{Overview of some of the identifiability results for the interventional distribution $\Pint$, described in Section~\ref{sec:identifiability}.}
\label{tab:ident_assumptions}
\end{table}

\subsection{General model class} \label{subsec:general_model_identifiability}

Let $\mathcal{M}_{\rm DIV}$ be the class of structural causal models of the form:
\begin{equation} \label{eq:IV_model2}
\begin{aligned}
    X_j &\coloneqq g_j(Z, \eta_{X_j}), \forall \ j \in \{1,\dots, d\} \\
    Y_k &\coloneqq f_k(X, \eta_{Y_k}), \forall \ k \in \{1,\dots, p\},
\end{aligned}
\end{equation}
where $Z \sim Q_Z$ exogenous, $\eta_X \coloneqq (\eta_{X_1}, \dots,\eta_{X_d})$, $\eta_Y \coloneqq (\eta_{Y_1}, \dots,\eta_{Y_p})$ with $(\eta_X, \eta_Y) \sim Q_{(X,Y)}$, $Z \in \bbR^q$ and $(\eta_X, \eta_Y)$ being independent. Further, we define $X \coloneqq (X_1, \dots, X_d), Y \coloneqq (Y_1, \dots, Y_p)$,  for all $j \in \{1,\dots, d\}: g_j \in \cG$, for all $k \in \{1,\dots, p\}: f_k \in \cF$, and $\mathcal{G} \subseteq \{g: \bbR^{q+1} \rightarrow \bbR\}$, $\mathcal{F} \subseteq \{f: \bbR^{d + 1} \rightarrow \bbR\}$ are function classes.

Let a model in \eqref{eq:IV_model2} from $\mathcal{M}_{\rm DIV}$ satisfy the following conditions, which we will discuss after stating the results.

\begin{enumerate}[label=(\textit{B\arabic*})]
 	\item For all $g \in \cG$, it holds for all $z \in \supp(Z)$ that $g(z,\cdot)$ is strictly monotone on  $\supp(\eta_X)$. \label{ass:1}
    \item For all $f \in \cF$, it holds for all $x \in \supp(X)$ that $f(x,\cdot)$ is strictly monotone on  $\supp(\eta_Y)$. \label{ass:2}
    \item For all $j \in \{1,\dots d\}, k \in \{1,\dots, p\}$, the noise terms $\eta_{X_j}$ and $\eta_{Y_k}$ are absolutely continuous with respect to the Lebesgue measure. \label{ass:3}
    \item For all $x \in \supp(X)$, $\supp(\eta_X|X=x) = \supp(\eta_X)$. \label{ass:4}
\end{enumerate}

Since we are primarily interested in the distribution of the outcome $Y$ under an intervention on the treatment $X$, we first present the theorem showing identifiability of the interventional distribution $P_Y^{\mathrm{do}(X:=x)}$.

\begin{theorem} \label{thm:interv_ident_general}
    Consider the model in \eqref{eq:IV_model2} and suppose the assumptions \ref{ass:1}, \ref{ass:3} and \ref{ass:4} hold. For all $x \in \supp(X)$, the interventional distribution $P_Y^{\mathrm{do}(X:=x)}$ is then identifiable from the observed data distribution $P_{(X,Y)|Z}$. 
    \begin{proof}
        See Appendix~\ref{app:ident_gen_mod}.
    \end{proof}
\end{theorem}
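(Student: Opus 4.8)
The plan is to adapt the control-variable argument behind Proposition~\ref{prop:univariate_identify} to a multivariate treatment, showing that $P_Y^{\mathrm{do}(X:=x)}$ is a functional of $P_{(X,Y)|Z}$ alone. Fix $x\in\supp(X)$ and any model in $\mathcal{M}_{\rm DIV}$ that induces $P_{(X,Y)|Z}$.

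\emph{Step 1: a control variable built from the observables.} For each coordinate $j$ and each $z\in\supp(Z)$, \ref{ass:1} makes $g_j(z,\cdot)$ a strictly monotone bijection of $\supp(\eta_{X_j})$ onto $\supp(X_j\mid Z=z)$, and \ref{ass:3} makes $X_j\mid Z=z$ atomless, so $V_j:=F_{X_j\mid Z}(X_j)$ (the conditional CDF of $X_j$ given $Z$, evaluated at the realised $Z$) is a measurable function of $(X_j,Z)$, hence identified from $P_{(X,Y)|Z}$, and $X_j=F_{X_j\mid Z}^{-1}(V_j)$ a.s. Taking each $g_j(z,\cdot)$ increasing (see the orientation remark below), the probability integral transform gives $V_j=F_{\eta_{X_j}}(\eta_{X_j})$, so $V:=(V_1,\dots,V_d)=\Psi(\eta_X)$ for a fixed coordinatewise monotone $\Psi$; in particular $V\ind Z$ and the marginal law $Q_V$ of $V$ is identified.

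\emph{Step 2: $V$ screens off $Z$ and exposes an interventional slice.} Since $(\eta_X,\eta_Y)\ind Z$, conditioning on $V=v$ pins down $\eta_X=\Psi^{-1}(v)$ and leaves $\eta_Y\sim Q_{\eta_Y\mid V=v}$ independent of $Z$, while $X=g(Z,\Psi^{-1}(v))$ is then a deterministic function of $Z$; hence $Y\ind Z\mid (X,V)$. Consequently, whenever $(x,v)\in\supp(X,V)$, the conditional law $P_{Y\mid X=x,V=v}$ — identified from $P_{(X,Y)|Z}$ — coincides with the law $\nu_{x,v}$ of $f(x,\eta_Y)$ under $\eta_Y\sim Q_{\eta_Y\mid V=v}$, since on $\{V=v,X=x\}$ one has $Y=f(x,\eta_Y)$ with $\eta_Y\sim Q_{\eta_Y\mid V=v}$. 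Because the do-intervention $\mathrm{do}(X:=x)$ leaves $\eta_Y$ at its marginal law $Q_{\eta_Y}$, the estimand $P_Y^{\mathrm{do}(X:=x)}$ is the law of $f(x,\eta_Y)$ under $Q_{\eta_Y}$, and using $Q_{\eta_Y}=\int Q_{\eta_Y\mid V=v}\,dQ_V(v)$ we get $P_Y^{\mathrm{do}(X:=x)}=\int \nu_{x,v}\,dQ_V(v)$.

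\emph{Step 3: common support closes the argument.} It remains to show $\nu_{x,v}$ is identified for $Q_V$-a.e.\ $v$, i.e.\ that $(x,v)\in\supp(X,V)$ for $Q_V$-a.e.\ $v$. This is what \ref{ass:4} buys: transporting $\supp(\eta_X\mid X=x)=\supp(\eta_X)$ through $\Psi$ gives $\supp(V\mid X=x)=\supp(V)$, which together with \ref{ass:3} yields $\supp(X,V)=\supp(X)\times\supp(V)$ in the relevant sense, so for $Q_V$-a.e.\ $v$ there is $z=z(v)\in\supp(Z)$ with $g(z(v),\Psi^{-1}(v))=x$ and $\nu_{x,v}$ can be read off from $P_{Y\mid V=v,Z=z(v)}$. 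Combining the three steps expresses $P_Y^{\mathrm{do}(X:=x)}$ as a functional of $P_{(X,Y)|Z}$; note that \ref{ass:2} is not used, because we only need the pushforward of $Q_{\eta_Y}$ through $f(x,\cdot)$, never the inverse of $f$.

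\emph{Main obstacle.} The delicate part is Step~3 together with the conditioning in Step~2: since $\{X=x\}$ (and, when $\eta_X$ is not jointly absolutely continuous, $\{V=v\}$) are null events, the identity $P_{Y\mid X=x,V=v}=\nu_{x,v}$ must be made rigorous through regular conditional distributions and a support/continuity argument, and one must verify that the common-support hypothesis \ref{ass:4}, stated for $\eta_X\mid X=x$, indeed transfers to ``$(x,v)$ is a support point of $(X,V)$ for $Q_V$-a.e.\ $v$''. A minor additional point (the orientation remark) is that the monotonicity direction of $g_j(z,\cdot)$ must be the same for all $z\in\supp(Z)$ — otherwise $V$ would fail to be a $Z$-independent transform of $\eta_X$ — and that when $\supp(\eta_{X_j})$ is disconnected $\Psi$ is only a homeomorphism off a $Q_{\eta_X}$-null set; both are routine to handle.
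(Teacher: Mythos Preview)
Your proof is correct and rests on the same control-variable idea that underlies the paper's argument (Step~II in the proof of Proposition~\ref{prop:ident_general}): recover $\eta_X$ from $(X,Z)$ via the strict monotonicity of each $g_j(z,\cdot)$, condition on it, and integrate it out against its marginal using the common-support assumption~\ref{ass:4}. The difference is one of packaging. The paper's stated proof of Theorem~\ref{thm:interv_ident_general} invokes Proposition~\ref{prop:ident_general} to identify the functions $f_k$ and the law of $\eta_Y$ separately and then forms the interventional CDF as the pushforward $P(f_1(x,\eta_{Y_1})\le y_1,\dots,f_p(x,\eta_{Y_p})\le y_p)$; that route, as written, passes through~\ref{ass:2}. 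You instead write the mixing identity $P_Y^{\mathrm{do}(X:=x)}=\int P_{Y\mid X=x,V=v}\,dQ_V(v)$ directly, which pins down the interventional law without ever inverting $f$ --- this is exactly why \ref{ass:2} can be omitted from the hypotheses of Theorem~\ref{thm:interv_ident_general}, and your argument makes that transparent where the paper's shortcut through Proposition~\ref{prop:ident_general} obscures it. Your use of the observable $V=F_{X\mid Z}(X)$ rather than the paper's recovered $\eta_X$ is cosmetic (the two differ by the fixed coordinatewise bijection $\Psi$), and the orientation caveat you flag is the same point the paper handles implicitly by normalising each $\eta_{X_j}$ to a standard Gaussian.
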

Next, we present the main result on the identifiability of the treatment model, the response model, and also the confounding effect for the general model class \eqref{eq:IV_model2}. Note that the identifiability of the response model in (b), together with (c), implies the identifiability of the interventional distribution, which is concisely stated in Theorem~\ref{thm:interv_ident_general}. It also justifies the DIV approach. By learning the observed distribution of $(X,Y)|Z$, DIV is able to identify, under certain assumptions, the true outcome model, $f^*(x,\eta^*_Y)$, which then induces the interventional distribution $P_Y^{\mathrm{do}(X:=x)}$.

\begin{proposition} \label{prop:ident_general}

Consider the model in \eqref{eq:IV_model2}. Suppose the assumptions \ref{ass:1}-\ref{ass:4} hold. For any two models $(g_j,f_k,\eta_X,\eta_Y)$, $(\tilde{g}_j,\tilde{f}_k,\tilde{\eta}_X,\tilde{\eta}_Y) \in \mathcal{M}_{DIV}$ that induce the same conditional distribution of $(X,Y)$ given $Z=z$, it then holds 
    \begin{enumerate}[label=(\textit{\alph*})]
        \item for all $j \in \{1,\dots d\}$, $z \in \supp(Z)$, $e_{X} \in \supp(\eta_{X_j})$ we have $g_j(z,e_{X}) = \tilde{g}_j(z,e_{X})$,
        \item for all $k \in \{1,\dots, p\}$, $x \in \supp(X)$, $e_Y \in \supp(\eta_{Y_k})$ we have $f_k(x,e_{Y}) = \tilde{f}_k(x,e_{Y})$,
        \item $(\eta_X, \eta_Y) \eqdist (\tilde{\eta}_X, \tilde{\eta}_Y)$. 
    \end{enumerate}

    \begin{proof}
        See Appendix~\ref{app:ident_gen_mod}.
    \end{proof}  
\end{proposition}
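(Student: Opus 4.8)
The plan is to prove Proposition~\ref{prop:ident_general} in the order (a) $\Rightarrow$ (b) $\Rightarrow$ (c), exploiting the recursive structure of the SCM \eqref{eq:IV_model2}: the treatment equations depend only on $Z$, so we can first pin down the $g_j$'s marginally (conditioning on $Z=z$ and looking only at the $X$-marginal), and only afterwards use the identified treatment model to peel off the outcome equations. Throughout I would work coordinatewise, since each $X_j$ depends on $Z$ through a scalar noise $\eta_{X_j}$ and each $Y_k$ depends on $X$ through a scalar noise $\eta_{Y_k}$; the multivariate coupling of the noise vector $(\eta_X,\eta_Y)$ only enters in part (c).

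For part (a): fix $z\in\supp(Z)$ and $j$. The two models induce the same law of $(X,Y)\mid Z=z$, hence the same law of $X_j\mid Z=z$, i.e.\ $g_j(z,\eta_{X_j})\eqdist \tilde g_j(z,\tilde\eta_{X_j})$. By \ref{ass:1} both $g_j(z,\cdot)$ and $\tilde g_j(z,\cdot)$ are strictly monotone on the respective noise supports, and by \ref{ass:3} the noises are atomless, so each pushes its noise law to a continuous distribution on an interval; matching the two one-dimensional laws forces the map $\tilde g_j(z,\cdot)^{-1}\!\circ g_j(z,\cdot)$ (defined via quantile/CDF inverses) to transport the law of $\eta_{X_j}$ to that of $\tilde\eta_{X_j}$ monotonically. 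Here I would need to be careful about whether the two noise variables have the same distribution; the cleanest route is to note that the statement of the proposition lets us only conclude $g_j=\tilde g_j$ as functions once we also know the noise laws agree, so I would either (i) fix a common parametrisation convention (the paper's ``without loss of generality'' standard Gaussian for the underlying $\varepsilon$'s in \eqref{eq:joint_div_model}, propagated through $h_X$) or (ii) prove simultaneously that $\eta_{X_j}\eqdist\tilde\eta_{X_j}$ and $g_j(z,\cdot)=\tilde g_j(z,\cdot)$ by a monotone-rearrangement uniqueness argument: two strictly increasing functions pushing forward possibly different atomless laws to the same target must coincide once the source laws are normalised, and the quantile coupling is the unique monotone coupling. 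I would phrase it as: writing $F$ for the CDF of $\eta_{X_j}$, the map $g_j(z,\cdot)$ equals $Q^{X_j\mid Z=z}\circ F$ on $\supp(\eta_{X_j})$, and likewise for the tilde objects, so they agree wherever the supports agree — which under the normalisation they do.

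For part (b): fix $k$ and $x\in\supp(X)$. The hard step, and the main obstacle, is that we cannot directly read off the conditional law of $Y_k\mid X=x$ from $P_{(X,Y)\mid Z}$ and hope it coincides with the interventional law — that is exactly the confounding problem. The idea is: from part (a) the treatment model is identified, so given an observed pair $(x,\cdot)$ realised at instrument value $z$ we can invert $g$ to recover the realised noise $\eta_X = g(z,\cdot)^{-1}(x)$ coordinatewise, and this recovery is the \emph{same} function in both models. Now I would condition on the event $\{\eta_X = e_X\}$ for $e_X\in\supp(\eta_X)$ and on $Z=z$ chosen (using \ref{ass:4}, the common-support assumption) so that $x=g(z,e_X)$ is attainable; on this event $X=x$ is deterministic, $Y_k = f_k(x,\eta_{Y_k})$, and the conditional law of $\eta_{Y_k}$ given $\eta_X=e_X$ is the \emph{same} in both models only if we already know $(\eta_X,\eta_Y)\eqdist(\tilde\eta_X,\tilde\eta_Y)$ — so there is a genuine circularity to break. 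The way to break it: condition instead on $\eta_{X_j}=e_{X_j}$ across all $j$ simultaneously via varying $z$, or more slickly, use that the conditional law of $\eta_Y$ given $\eta_X$ is recoverable from the observed law of $Y$ given $X$ and $Z$ once $g$ and $f$ are known monotone maps; then matching observed laws gives, for a.e.\ $e_X$, that $f_k(x,\cdot)$ and $\tilde f_k(x,\cdot)$ push the (conditional) law of $\eta_{Y_k}\mid \eta_X=e_X$ resp.\ $\tilde\eta_{Y_k}\mid\tilde\eta_X=e_X$ to the same one-dimensional distribution; invoking \ref{ass:2} (strict monotonicity of $f_k(x,\cdot)$) and \ref{ass:3} (atomless), the same monotone-rearrangement uniqueness as in (a) yields $f_k(x,e_Y)=\tilde f_k(x,e_Y)$ on the common support, and \emph{simultaneously} that the conditional noise laws agree. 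Care is needed that \ref{ass:4} gives us, for every $x\in\supp(X)$, a full measure of $z$'s with $x$ reachable, so that we do cover all of $\supp(\eta_X)$ — this is precisely where the common-support assumption is indispensable.

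For part (c): once (a) and (b) are established, the maps $g$ and $f$ are pinned down, and the joint law of $(X,Y)\mid Z=z$ is $(g(z,\eta_X),\,f(g(z,\eta_X),\eta_Y))$. Since $g(z,\cdot)$ is a coordinatewise strictly monotone, hence injective, atomless-preserving map, it is invertible on $\supp(\eta_X)$, so the law of $\eta_X\mid Z=z$ is determined by the law of $X\mid Z=z$; but $\eta_X\indpt Z$, so this determines the marginal law of $\eta_X$, and identically for $\tilde\eta_X$, giving $\eta_X\eqdist\tilde\eta_X$. For the joint law, invert once more: given $X=x$ reachable at $z$ we have $\eta_Y = f(x,\cdot)^{-1}(Y)$ a.s., which by the already-established identity of $f$ and $f^{-1}$ shows the conditional law of $(\eta_X,\eta_Y)\mid Z=z$ equals that of $(\tilde\eta_X,\tilde\eta_Y)\mid Z=z$; independence of $(\eta_X,\eta_Y)$ from $Z$ (and of $(\tilde\eta_X,\tilde\eta_Y)$ from $Z$) then upgrades this to equality of the unconditional joint laws, i.e.\ $(\eta_X,\eta_Y)\eqdist(\tilde\eta_X,\tilde\eta_Y)$. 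The only genuinely delicate points are the measure-theoretic bookkeeping of ``common support'' so that the inversions are valid on a full-measure set, and the monotone-coupling uniqueness lemma, which I would state once as a self-contained sublemma (a strictly monotone function is determined on the support of an atomless source law by the induced pushforward, up to the normalisation of that source law) and then reuse in (a) and (b). The expected main obstacle is the circularity in (b) between identifying $f_k$ and identifying the conditional noise law, resolved by the simultaneous monotone-rearrangement argument combined with the common-support assumption \ref{ass:4}.
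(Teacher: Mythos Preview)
Your outline for parts (a) and (c) matches the paper's argument closely, including the use of the ``WLOG standard Gaussian marginals'' normalisation (which the paper justifies via Sklar's theorem) and the final inversion step to recover the joint noise law.

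The gap is in part (b), precisely at the point where you try to break the circularity. You correctly observe that for each fixed $e_X$ the observed law of $Y_k\mid X=x,\,\eta_X=e_X$ gives
\[
f_k(x,\cdot)_{\#}\,\bigl(\eta_{Y_k}\mid\eta_X=e_X\bigr)\;=\;\tilde f_k(x,\cdot)_{\#}\,\bigl(\tilde\eta_{Y_k}\mid\tilde\eta_X=e_X\bigr),
\]
but then you invoke ``the same monotone-rearrangement uniqueness as in (a)'' at this fixed $e_X$. That step does not go through: in (a) both source laws were normalised to $N(0,1)$, whereas here the two \emph{conditional} source laws $\eta_{Y_k}\mid\eta_X=e_X$ and $\tilde\eta_{Y_k}\mid\tilde\eta_X=e_X$ need not agree (only the marginals of $\eta_{Y_k}$ and $\tilde\eta_{Y_k}$ are normalised, not their copulas with $\eta_X$). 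Two strictly monotone maps pushing \emph{different} atomless sources to the same target are in general different, so neither $f_k=\tilde f_k$ nor equality of the conditional laws follows from this display alone. Your phrase ``simultaneously that the conditional noise laws agree'' names the desired conclusion but supplies no mechanism for it.

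The paper breaks the circularity differently: it does not try to conclude anything at a fixed $e_X$, but instead \emph{integrates the display above over $e_X$} with respect to the (already identified) marginal law of $\eta_X$. Assumption \ref{ass:4} guarantees that for every fixed $x$ the set of reachable $e_X$ is all of $\supp(\eta_X)$, so the integration recovers the pushforward of the \emph{marginal} $\eta_{Y_k}$, namely
\[
P\bigl(f_k(x,\eta_{Y_k})\le y\bigr)\;=\;P\bigl(\tilde f_k(x,\tilde\eta_{Y_k})\le y\bigr).
\]
Now both source laws \emph{are} $N(0,1)$ by normalisation, and the monotone-rearrangement argument gives $f_k(x,\cdot)=\tilde f_k(x,\cdot)$ directly. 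Equality of the conditional noise laws is then obtained afterwards in Step III, not simultaneously. An equivalent way to patch your argument is to note that the composite $\phi:=f_k(x,\cdot)^{-1}\circ\tilde f_k(x,\cdot)$ is independent of $e_X$, so the family of conditional identities can be averaged to yield $\phi_{\#}N(0,1)=N(0,1)$ and hence $\phi=\mathrm{id}$; but this averaging step is exactly what is missing from your sketch.
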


\begin{remarks}
We now discuss the assumptions we made. 
    \begin{enumerate}
        
        \item \label{rem:1} Assumption \ref{ass:3}. For all $j \in \{1,\dots ,d\}$ and for all $k \in \{1,\dots, p\}$, correspondingly, we assume $\eta_{X_j}$ and $\eta_{Y_k}$ being absolutely continuous with respect to the Lebesgue measure. Without loss of generality, it can then be assumed that the marginals $\eta_{X_j}, \eta_{Y_k} \sim N(0,1)$. By applying the Sklar's theorem \citep{sklar1959fonctions} and using the invariance property of the copula with respect to strictly monotone transformations on the components of a continuous random vector (See, for example, Proposition 5.6. of \citet{mcneil2005quantitative}), we can express an arbitrary joint distribution of $(\eta_X, \eta_Y)$ using copula and the marginal standard Gaussians.
        \item \label{rem:2} Assumption \ref{ass:4}. For the common support assumption to be satisfied, the instrumental variable $Z$ must affect $X$ and also have a sufficiently large support. The assumption directly corresponds to the relevance assumption \ref{ass:a1}, which is one of three core assumptions made within the instrumental variable approach. Assuming the treatment model to be linear, say $X = M_0 Z + \eta_X$, with $M_0 \in \mathbb{R}^{d \times k}$ being the coefficient matrix, the common support assumption directly corresponds to $M_0$ being full row rank and $\supp(M_0 Z) = \mathbb{R}^d$.
    \end{enumerate}
\end{remarks}

\begin{remark}[Binary treatment]
    We can adapt Theorem~\ref{thm:interv_ident_general} to the case when the treatment $X$ is binary. Assumption~\ref{ass:3} has to be changed to:
    \begin{enumerate}[label=(\textit{B\arabic*}*)]
\setlength{\itemsep}{0pt}
\setcounter{enumi}{2}
    \item For all $j \in \{1,\dots, d\}$, we assume the noise terms \( \eta_{X_j} \sim \text{Bernoulli}(p) \), \(0 < p < 1\). For all $k \in \{1,\dots, p\}$, the noise terms $\eta_{Y_k}$ are absolutely continuous with respect to the Lebesgue measure. \label{ass:3*}
\end{enumerate}

\noindent The proof proceeds analogously to that of Theorem 1, with summation replacing integration for the discrete component. Assumption~\ref{ass:1} on strict monotonicity of $g(z,\cdot)$ now reduces to $g(z,1)\neq g(z,0)$ for all $z \in \supp(Z)$ fixed, ensuring that $X$ varies with the noise $\eta_X$.

\end{remark}

\paragraph{Relation to existing identification results.}
Our identification strategy builds on ideas from nonseparable triangular models (see \citet{imbens2009noadditivity}), but differs in two key respects. \emph{First}, rather than invoking a control-variable construction that hinges on rich continuous variation in the instrument, we impose strict monotonicity in the latent disturbances in both the first stage and the outcome, together with a common-support condition on \(P_{X\mid Z}\). This combination accommodates instruments with either continuous or discrete support. \emph{Second}, relative to \citet{torgovitsky2015}, who uses scalar latent monotonicity to obtain point identification with small-support instruments, our framework (i) allows \(X\) and \(Y\) to be multivariate, (ii) identifies the confounding effect (Proposition~\ref{prop:ident_general}), and (iii) links the identification argument directly to a generative estimation procedure for the full interventional distribution.

\subsection{Pre-additive noise model class} \label{subsec:pre-ANM_identifiability}
Theorem~\ref{thm:interv_ident_general} provides the identifiability for a general model class. The price to be paid is that we require a relatively strong assumption for the instruments (i.e.\ the common support assumption \ref{ass:4}). This section presents an identifiability result that relaxes this assumption by considering a more restricted outcome model class, namely pre-additive noise models (pre-ANMs). Pre-ANMs have been used in previous work to facilitate identifiability in other settings (e.g.\ \citet{shen2024engression}).

Let $\mathcal{M}_{\rm DIV}^{\pre}$ be the class of structural pre-additive noise IV (pre-ANM) causal models of the form:

\begin{equation} \label{eq:pre-ANM}
\begin{aligned}
    X_j &\coloneqq g_j(Z, \eta_{X_j}), \forall \ j \in \{1,\dots, d\} \\
    Y_k &\coloneqq f_k(X^\top \beta_k + \eta_{Y_k}), \forall \ k \in \{1,\dots, p\},
\end{aligned}
\end{equation}
where $Z \sim Q_Z$ exogenous, $\eta_X \coloneqq (\eta_{X_1}, \dots,\eta_{X_d})$, $\eta_Y \coloneqq (\eta_{Y_1}, \dots,\eta_{Y_p})$ with $(\eta_X, \eta_Y) \sim Q_{(X,Y)}$, with $Z$ and $(\eta_X, \eta_Y)$ being independent. Further, we define $X \coloneqq (X_1, \dots, X_d)$, $Y \coloneqq (Y_1, \dots, Y_p)$, $\beta_k = (1, \beta_{k,2}, \dots, \beta_{k,d})$, for all $j \in \{1,\dots, d\}: g_j \in \tilde{\cG}$, for all $k \in \{1,\dots, p\}: f_k \in \tilde{\cF}$, and $\tilde{\mathcal{G}} \subseteq \{g: \bbR \rightarrow \bbR\}$, $\tilde{\mathcal{F}} \subseteq \{f: \bbR \rightarrow \bbR\}$ are function classes.

\begin{remark}
    We assume that at least one of the treatments depends on $Z$ through a nonlinear function $g$ in the way that assumption \ref{ass:c4} holds true, then  without loss of generality we index this treatment as $j=1$ and absorb its coefficient $\beta_{k,1}$ into $f_k$.
\end{remark}

\subsubsection{Continuous instrument}
We need the following assumptions in the case of a continuous instrument.
\begin{enumerate}[label=(\textit{C\arabic*})]
	\item For all $g \in \tilde{\mathcal{G}}$, it holds for all $z \in \supp(Z)$ that $g(z, \cdot)$ is strictly monotone on $\supp(\eta_X)$. \label{ass:c1}
	\item For all $f \in \tilde{\mathcal{F}}$, $f$ is strictly monotone on $\supp(X^\top \beta + \eta_{Y})$ and differentiable almost everywhere. \label{ass:c2}
    \item For all $j \in \{1,\dots d\}, k \in \{1,\dots, p\}$, the noise terms $\eta_{X_j}$ and $\eta_{Y_k}$ are absolutely continuous with respect to the Lebesgue measure.  \label{ass:c3}
    \item For all $e_1 \in \supp(\eta_{X_1})$, there exists a subset $\cZ^{\diamond} \subseteq \supp(Z)$ with non-zero Lebesgue measure, such that for all $z_1, z_2 \in \cZ^{\diamond}$ we have $\pdv{g_1(z_1, e_1)}{e_1} \neq \pdv{g_1(z_2, e_1)}{e_1}$. \label{ass:c4}
    \item For $(z_1, \dots, z_q) \in \supp(Z)$ and $(e_2,\dots,e_d) \in \supp(\eta_{X_2},\dots, \eta_{X_d})$, we define the Jacobian matrix
    $\mathbf{J}_{g}(z, e) \coloneqq \begin{bmatrix}
        \pdv{g_2(z, e_2)}{z_1} & \dots & \pdv{g_d(z, e_d)}{z_1} \\
        \vdots & \ddots & \vdots \\
        \pdv{g_2(z, e_2)}{z_q} & \dots & \pdv{g_d(z, e_d)}{z_q}
    \end{bmatrix}$.
    There exists a subset $\cE^{\diamond} \subseteq \supp(\eta_{X_2},\dots, \eta_{X_d})$ with non-zero Lebesgue measure such that for all $e \in \cE^{\diamond}$, we have $\bigcap\limits_{z \in \supp(Z)} \ker(\mathbf{J}_{g}(z, e))$ = \{0\}. \label{ass:c5}
\end{enumerate}

We first present a theorem which shows the identifiability of the interventional distribution $P_Y^{\mathrm{do}(X:=x)}$ for the pre-additive model class and continuous instrument $Z$.

\begin{theorem} \label{thm:interv_ident_cont_preadd}
    Consider the model in \eqref{eq:pre-ANM} and suppose the assumptions \ref{ass:c1}-\ref{ass:c5} hold. For all $x \in \supp(X)$, the interventional distribution $P_Y^{\mathrm{do}(X:=x)}$ is then identifiable from the observed distribution $P_{(X,Y)|Z}$. 

    \begin{proof}
        See Appendix~\ref{app:ident_pre_ANM}.
    \end{proof}
\end{theorem}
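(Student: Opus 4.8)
The plan is to prove the stronger statement that any two models in $\mathcal{M}_{\rm DIV}^{\pre}$ inducing the same observed conditional law $P_{(X,Y)\mid Z=z}$ for every $z\in\supp(Z)$ induce the same interventional distribution. Since under $\mathrm{do}(X:=x)$ the law of $Y$ is simply the law of $\big(f_k(x^\top\beta_k+\eta_{Y_k})\big)_{k=1}^p$ with $\eta_Y$ drawn from its marginal, it suffices, for each coordinate $k$, to identify the link $f_k$ (up to an irrelevant location shift that can be absorbed into $\eta_{Y_k}$), the coefficients $\beta_{k,2},\dots,\beta_{k,d}$ (recall $\beta_{k,1}=1$), and the joint law of $\eta_Y$. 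First I would treat the treatment side exactly as in the proof of Theorem~\ref{thm:interv_ident_general}: by \ref{ass:c1} and \ref{ass:c3} each $g_j(z,\cdot)$ is a continuous strictly monotone bijection onto the conditional support of $X_j\mid Z=z$, so (after the customary reparametrisation of the treatment noises, which leaves the interventional law untouched) $g_j$ equals the conditional quantile function of $X_j\mid Z=z$, and, using that $Z$ is independent of $\eta_X$, the joint law of $\eta_X$ is identified. In particular conditioning on $\{X=x,\,Z=z\}$ is the same as conditioning on $\{\eta_X=e,\,Z=z\}$ with $e$ the rank vector of $x$, and one may move $z$ along paths of constant $e$, so that $x=x(z;e)$ with $x_j(z;e)=g_j(z,e_j)$.

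Fix $k$ and $e\in\supp(\eta_X)$ (a set of positive measure by \ref{ass:c3}). Because $Z$ is independent of $(\eta_X,\eta_Y)$, the law of $\eta_{Y_k}$ given $\eta_X=e$ does not depend on $z$, hence along a constant‑$e$ path
\[ Y_k\mid\eta_X=e,\ Z=z\ \eqdist\ f_k\!\big(x(z;e)^\top\beta_k+\eta_{Y_k}^{(e)}\big),\qquad \eta_{Y_k}^{(e)}\sim P_{\eta_{Y_k}\mid\eta_X=e}, \]
so the observed conditional c.d.f.\ is $F_{\eta_{Y_k}^{(e)}}\!\big(f_k^{-1}(y)-r_k(z;e)\big)$ with $r_k(z;e):=\sum_j\beta_{k,j}g_j(z,e_j)$. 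Writing this for two observationally equivalent models --- which by the previous step share the $g_j$ but a priori differ in $(\beta_k,f_k)$ --- and setting $\psi:=\tilde f_k^{-1}\circ f_k$ (strictly increasing by \ref{ass:c2}), one obtains $F_{\eta_{Y_k}^{(e)}}(s-r_k(z;e))=F_{\tilde\eta_{Y_k}^{(e)}}(\psi(s)-\tilde r_k(z;e))$ for all $s$ and $z$. Since the left‑hand side depends on $(s,z)$ only through $s-r_k(z;e)$, a Cauchy‑type functional‑equation argument over the continuum of values of $r_k(\cdot;e)$ (non‑degeneracy of this range coming from \ref{ass:c4} and continuity of $g_1$) forces $\psi$ to be affine, $\psi(s)=\lambda s+\mu$ with $\lambda>0$ and $\tilde r_k(z;e)=\lambda r_k(z;e)+\nu_e$, where $\lambda,\mu$ do not depend on $e$.

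To remove the remaining affine ambiguity I would differentiate $\tilde r_k(z;e)=\lambda r_k(z;e)+\nu_e$ in $e_1$: only the $j=1$ term of $r_k$ and of $\tilde r_k$ involves $e_1$ and both use the common $g_1$, so $(1-\lambda)\,\partial g_1(z,e_1)/\partial e_1=\partial\nu_e/\partial e_1$, whose right‑hand side is independent of $z$; by \ref{ass:c4}, $\partial g_1(\cdot,e_1)/\partial e_1$ is non‑constant in $z$, so $\lambda=1$. Differentiating $\tilde r_k(z;e)=r_k(z;e)+\nu_e$ in each component $z^{(i)}$ then eliminates $\nu_e$ and gives $\sum_{j=2}^d(\tilde\beta_{k,j}-\beta_{k,j})\,\partial g_j(z,e_j)/\partial z^{(i)}=0$ for all $i$ and $z$, i.e.\ $(\tilde\beta_{k,j}-\beta_{k,j})_{j\ge2}\in\bigcap_{z\in\supp(Z)}\ker\mathbf{J}_g(z,e)$, which by \ref{ass:c5} equals $\{0\}$ for $e$ in a positive‑measure set. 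Hence $\beta_k=\tilde\beta_k$ and $\psi$ is the identity up to the location shift $\mu$. With $f_k$ (up to location), $\beta_k$ and the law of $\eta_X$ identified for every $k$, the variable $\eta_{Y_k}=f_k^{-1}(Y_k)-X^\top\beta_k$ is a fixed measurable function of the observables, so, being independent of $Z$, the joint law of $\eta_Y$ is identified; assembling, $P_Y^{\mathrm{do}(X:=x)}$ is the identified law of $\big(f_k(x^\top\beta_k+\eta_{Y_k})\big)_{k=1}^p$.

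I expect the crux to be the joint disentangling of the unknown monotone link $f_k$ from the unknown linear index under the normalisation $\beta_{k,1}=1$ --- showing that the additive‑separable‑inside‑$f_k$ structure leaves no residual affine freedom. This is precisely where the two non‑degeneracy conditions are used: \ref{ass:c4} (the slope of $g_1$ in its own noise must genuinely vary with $z$) kills the scale ambiguity $\lambda$, and \ref{ass:c5} (trivial common kernel of the instrument‑Jacobian on $(X_2,\dots,X_d)$) makes the linear system for $\beta_{k,2},\dots,\beta_{k,d}$ uniquely solvable even when $q<d$. A secondary difficulty is the null‑set bookkeeping: one must check the conditioning events have positive density (via \ref{ass:c3}), that the range of $r_k(\cdot;e)$ is non‑degenerate for enough $e$, and that the a.e.\ differentiability in \ref{ass:c2} suffices to legitimise the derivative manipulations --- routine, but requiring care.
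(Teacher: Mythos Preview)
Your proposal is correct and follows essentially the same three-step scheme as the paper (identify the $g_j$'s and $\eta_X$ first; then, conditioning on $\eta_X=e$ and varying $z$, disentangle the monotone link $f_k$ from the linear index using \ref{ass:c4} and \ref{ass:c5}; finally recover the joint law of $\eta_Y$). The only substantive technical difference is how you reach ``$\psi$ is the identity up to a shift'': the paper parametrises the conditional noise by its quantile function $h_{k,e}$, writes the quantile-level identity $f_k(r_k(z;e)+h_{k,e}(e_Y))=\tilde f_k(\tilde r_k(z;e)+\tilde h_{k,e}(e_Y))$, and differentiates in both $e_1$ and $e_Y$ to obtain $\phi_k'\equiv1$ directly from \ref{ass:c4}; you instead run a Cauchy-type argument on the c.d.f.\ equation to first get $\psi$ affine and then use the $e_1$-derivative and \ref{ass:c4} to kill the scale $\lambda$. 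Both routes are valid; the paper's avoids having to argue that the range of $r_k(\cdot;e)$ is a nondegenerate interval (your Cauchy step needs this, and \ref{ass:c4} only guarantees it indirectly), while your route makes the role of the normalisation $\beta_{k,1}=1$ in eliminating the scale ambiguity slightly more transparent. The use of \ref{ass:c5} via the $z$-Jacobian and the final recovery of $\eta_Y$ are identical in both arguments.
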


\begin{proposition}\label{prop:ident_cont_preadd}
    Consider the model in \eqref{eq:pre-ANM}. Suppose the assumptions \ref{ass:c1}-\ref{ass:c5} hold. For any two models $(g_j,f_k,\beta_k,\eta_X,\eta_Y)$, $(\tilde{g}_j,\tilde{f}_k,\tilde{\beta}_k,\tilde{\eta}_X,\tilde{\eta}_Y) \in \mathcal{M}_{\rm DIV}^p$ that induce the same conditional distribution of $(X,Y)$ given $Z=z$, it then holds
    \begin{enumerate}[label=(\textit{\alph*})]
        \item for all $j \in \{1,\dots d\}$, $z \in \supp(Z)$, $e_{X} \in \supp(\eta_{X_j})$ we have $g_j(z,e_{X}) = \tilde{g}_j(z,e_{X})$,
        \item for all $k \in \{1,\dots p\}$, we have $\beta_k = \tilde{\beta}_k$, further for all $w \in \{x^\top \beta_k + e_Y \mid x \in \supp(X),  e_Y \in \supp(\eta_{Y_k})\}$ we have $f_k(w) = \tilde{f}_k(w)$,
        \item $(\eta_X, \eta_Y) \eqdist (\tilde{\eta}_X, \tilde{\eta}_Y)$. 
    \end{enumerate}  

\begin{proof}
    See Appendix~\ref{app:ident_pre_ANM}.
\end{proof} 
\end{proposition}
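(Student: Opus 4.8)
The plan is to establish the displayed two-model statement; Theorem~\ref{thm:interv_ident_cont_preadd} is then immediate, since (b) and (c) together fix the law of $f_k^*(x^\top\beta_k^*+\eta_{Y_k}^*)$ for every $x\in\supp(X)$. I would proceed in three stages mirroring (a)--(c). \emph{Stage 1 (treatment models, part (a)).} This is the same monotone quantile-transform argument as in the general class: by \ref{ass:c1} each $g_j(z,\cdot)$ is strictly monotone and by \ref{ass:c3} $\eta_{X_j}$ is absolutely continuous, so after normalising the marginal to $\eta_{X_j}\sim N(0,1)$ the conditional law of $X_j\mid Z=z$ identifies $g_j(z,\cdot)=F^{-1}_{X_j\mid Z=z}\circ\Phi$ on $\supp(\eta_{X_j})$. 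Since both models induce the common $P_{(X,Y)\mid Z}$, this gives (a), and moreover identifies each $\eta_{X_j}=g_j^{-1}(Z,X_j)$ as a fixed measurable function of $(X,Z)$; from here on I treat $\eta_X$, and in particular the event $\{\eta_X=e\}$, as common to both models.

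\emph{Stage 2 (outcome models, part (b)).} Put $\phi_k\coloneqq\tilde f_k^{-1}\circ f_k$, strictly monotone and (by \ref{ass:c2}) differentiable almost everywhere, so that almost surely
\[
X^\top\tilde\beta_k+\tilde\eta_{Y_k}=\phi_k\big(X^\top\beta_k+\eta_{Y_k}\big).
\]
Fix $e=(e_1,\dots,e_d)$ in a positive-probability set for which the conclusions of \ref{ass:c4} and \ref{ass:c5} both apply, and condition on $\{\eta_X=e\}$ via the regular conditional distribution. On this event $\eta_{Y_k}$ has a fixed law $\nu_e$, $X=(g_1(Z,e_1),\dots,g_d(Z,e_d))$ is a function of $Z$ alone, and $X^\top\beta_k=\sum_j g_j(Z,e_j)\beta_{k,j}=:s_k(Z,e)$ is known up to $\beta_k$ because the $g_j$ are already identified. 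Hence $\{\mathrm{Law}(Y_k\mid\eta_X=e,Z=z)\}_z$ is the location family $\{\mathrm{Law}\big(f_k(s_k(z,e)+\xi)\big):\xi\sim\nu_e\}$ seen through the monotone link $f_k$, and equating it with the corresponding family from the second model shows that $\phi_k$ carries one location family (indexed by the continuum of shifts $s_k(z,e)$) onto another. A standard quantile-matching argument over this continuum forces $\phi_k(t)=\lambda_k t+c_k$ to be affine on the relevant interval and, correspondingly, $\tilde s_k(\cdot,e)-\lambda_k s_k(\cdot,e)$ to be constant in $z$, i.e.\ $\sum_j g_j(z,e_j)\big(\tilde\beta_{k,j}-\lambda_k\beta_{k,j}\big)$ is constant in $z$. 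Differentiating this in $z$ gives $(1-\lambda_k)\nabla_z g_1(z,e_1)+\mathbf{J}_{g}(z,e)\,v=0$ for all $z$, where $v=(\tilde\beta_{k,j}-\lambda_k\beta_{k,j})_{j=2}^d$ and $\beta_{k,1}=\tilde\beta_{k,1}=1$; differentiating once more in $e_1$ kills the $\mathbf{J}_{g}$-term, leaving $(1-\lambda_k)\,\nabla_z\pdv{g_1(z,e_1)}{e_1}=0$, which by \ref{ass:c4} forces $\lambda_k=1$. Then $\mathbf{J}_{g}(z,e)\,v=0$ for all $z\in\supp(Z)$, so $v\in\bigcap_{z}\ker\big(\mathbf{J}_{g}(z,e)\big)=\{0\}$ by \ref{ass:c5}, whence $\beta_k=\tilde\beta_k$; and $\phi_k(t)=t+c_k$ reduces to $\tilde\eta_{Y_k}=\eta_{Y_k}+c_k$, so that the location normalisation $\med(\eta_{Y_k})=\med(\tilde\eta_{Y_k})=0$ (without loss of generality, since an additive shift of the noise is absorbed into $f_k$) gives $c_k=0$ and hence $f_k=\tilde f_k$ on $\{x^\top\beta_k+e_Y:x\in\supp(X),\,e_Y\in\supp(\eta_{Y_k})\}$.

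\emph{Stage 3 (part (c)) and conclusion.} With $g_j$, $\beta_k$, $f_k$ identified, both $\eta_{X_j}=g_j^{-1}(Z,X_j)$ and $\eta_{Y_k}=f_k^{-1}(Y_k)-X^\top\beta_k$ are the same measurable functions of $(X,Y,Z)$ in the two models; since $(\eta_X,\eta_Y)$ is independent of $Z$, their joint law is pinned down by $P_{(X,Y)\mid Z}$, giving $(\eta_X,\eta_Y)\eqdist(\tilde\eta_X,\tilde\eta_Y)$, which is (c). Theorem~\ref{thm:interv_ident_cont_preadd} then follows because $P_Y^{\mathrm{do}(X:=x)}$ is the law of $\big(f_k(x^\top\beta_k+\eta_{Y_k})\big)_{k}$, and (b) and (c) determine exactly this object.

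\emph{Main obstacle.} The crux is Stage 2: converting variation of the continuous instrument into genuine location shifts of $X^\top\beta_k$ while the conditional law of $\eta_{Y_k}$ is held fixed, and then reading off from the resulting ``location family seen through a monotone link'' precisely the two rigidity properties needed --- \ref{ass:c4} to remove the residual affine (slope and shift) freedom in $f_k$, and \ref{ass:c5} to identify the direction $\beta_k$. Most of the technical work is measure-theoretic bookkeeping: conditioning on the null event $\{\eta_X=e\}$ through regular conditional distributions, checking that the positive-measure sets supplied by \ref{ass:c4} and \ref{ass:c5} and the a.e.\ differentiability of $f_k,g_1$ overlap, verifying that $s_k(\cdot,e)$ genuinely varies and so supplies a continuum of shifts (and that $\nu_e$ is non-degenerate), and justifying differentiation of conditional quantiles in $z$ and in $e_1$.
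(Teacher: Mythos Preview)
Your three–stage architecture matches the paper's, and Stages~1 and~3 are essentially identical to what the paper does. The substantive difference is the mechanism in Stage~2 for forcing $\phi_k'\equiv 1$.

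The paper does \emph{not} first argue that $\phi_k$ is affine. After writing the quantile identity
\[
g_1(z,e_1)+\bar g(z,e_{2:d})^\top\beta_{k,2:d}+h_{k,e}(e_Y)
=\phi_k\bigl(g_1(z,e_1)+\bar g(z,e_{2:d})^\top\tilde\beta_{k,2:d}+\tilde h_{k,e}(e_Y)\bigr),
\]
it differentiates in $e_1$ and in $e_Y$ (never in $z$ at this stage), divides the two resulting equations, and rearranges to
\[
\pdv{g_1(z,e_1)}{e_1}\,\Bigl(\pdv{\tilde h_{k,e}}{e_Y}-\pdv{h_{k,e}}{e_Y}\Bigr)=\text{(a term independent of }z\text{)}.
\]
Assumption~\ref{ass:c4} says the prefactor genuinely varies with $z$, so the bracket vanishes; substituting back gives $\phi_k'\equiv 1$ directly. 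Only \emph{after} this does the paper differentiate in $z$, and by then the $g_1$–term has already cancelled (its coefficient is $1-1=0$), so only $\mathbf J_g$ (built from $g_2,\dots,g_d$) appears and \ref{ass:c5} yields $\beta_k=\tilde\beta_k$.

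Your route --- affine first via $z$-variation, then $\nabla_z$ followed by $\partial_{e_1}$ --- is morally sound but carries two soft spots you yourself flag. First, the ``standard quantile-matching'' step that forces affinity needs $z\mapsto s_k(z,e)$ to range over a genuine interval; nothing in \ref{ass:c1}--\ref{ass:c5} excludes $\nabla_z g_1(z,e_1)+\mathbf J_g(z,e)\beta_{k,2:d}\equiv 0$ for the particular $e$ you conditioned on. Second, your equation $(1-\lambda_k)\nabla_z g_1+\mathbf J_g v=0$ and its subsequent $e_1$-derivative require $g_1$ to be differentiable in $z$ and to possess the mixed partial $\partial_z\partial_{e_1}g_1$; the stated assumptions only give $e$-monotonicity of $g_1$ via \ref{ass:c1} and $z$-differentiability of $g_2,\dots,g_d$ implicitly through the Jacobian in \ref{ass:c5}. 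The paper's order of differentiation sidesteps both issues: \ref{ass:c4} is phrased precisely in terms of $\partial_{e_1}g_1$, and $g_1$ is never differentiated in $z$. Both gaps are repairable (e.g.\ take finite differences in $e_1$ rather than a $z$-derivative to extract $\lambda_k=1$), but the paper's path is the cleaner one under the hypotheses as written.
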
 

\begin{remarks} 
We now discuss the technical assumptions we make to ensure the identifiability. 
    \begin{enumerate}
        \item Assumption \ref{ass:c4}. A necessary condition for this assumption to hold is that for all $e_1 \in \supp(\eta_{X_1})$, the function $g_1$ cannot be linear. Linearity would imply constant partial derivatives with respect to $e_1$ for all $z$, which violates the requirement that these derivatives vary across different values of $z$.
        
        \item Assumption \ref{ass:c5}. %
        If all treatment models $g_j$, $j \in \{2, ..., d\}$ are linear, for all $i \in \{1, \dots, q\}$ the partial derivatives $\pdv{g_j(z, e_j)}{z_i}$ are constant. In this case, $\mathbf{J}_{g}$ being of full column rank is equivalent to the full-rank condition in the classical 2SLS which also implies that we must have as many instruments as treatment variables (i.e.\ $q \geq d$). When at least one $g_j$ is nonlinear, the derivatives $\pdv{g_j(z, e_j)}{z_i}$ become functions of $z_i$. It is then no longer necessary to have $q \geq d$ (we can have less instruments than treatments) as long as $\pdv{g_j(z, e_j)}{z_i}$ vary sufficiently.
        \item Assumptions \ref{ass:c4} and \ref{ass:c5} refer to the relevance condition requiring the instrument $Z$ to be associated with the treatment variable $X$.
    \end{enumerate}
\end{remarks}

\subsubsection{Discrete instrument}

We present a new identification result for the case when the instrument $Z$ is discrete, showing that the traditional order condition, $\dim(Z) \geq \dim(X)$, is not necessary for identification under certain conditions. In Section \ref{sec:setting}, we provided an example where the conventional 2SLS method fails when the order condition is not satisfied. To the best of our knowledge, the most general sufficient conditions for point identification when $X$ is a vector and $Z$ is binary are given by \citet[Theorem S2 in Supplement]{torgovitsky2015}. For the general rectangular model class, he relies on the strong assumption that $(X, Z)$ has rectangular support. In contrast, by restricting the outcome model class to the pre-additive noise model, we are able to avoid this stringent assumption, achieving identification without needing the rectangular support condition.

Consider the pre-additive noise IV model class $\mathcal{M}_{\rm DIV}^{\pre}$ as defined in \eqref{eq:pre-ANM}, but assume $Z$ to be a discrete instrument, that is, $Z := (Z_1, \dots, Z_q)$ and for each $i \in \{1, \dots, q\}$, $\supp(Z_i) = \{z_{i1}, z_{i2}, \dots\}$, with $\supp(Z_i)$ being at most countable.

Let this model satisfy the following assumptions:

\begin{enumerate}[label=(\textit{D\arabic*})]
	\item For all $g \in \tilde{\cG}$, it holds for all $z \in \supp(Z)$ that $g(z, \cdot)$ is strictly monotone on $\supp(\eta_X)$ and differentiable almost everywhere. \label{ass:d1}
	\item For all $f \in \tilde{\cF}$, $f$ is strictly monotone on $\supp(X^\top \beta + \eta_{Y})$ and differentiable almost everywhere. \label{ass:d2}
    \item For all $j \in \{1,\dots d\}, k \in \{1,\dots, p\}$, the noise terms $\eta_{X_j}$ and $\eta_{Y_k}$ are absolutely continuous with respect to the Lebesgue measure.  \label{ass:d3}
    \item For all $e_1 \in \supp(\eta_{X_1})$, there exists a subset $\cZ^{\diamond} \subseteq \supp(Z)$ with $P_Z(\mathcal Z^\diamond)>0$ such that for all $z_1, z_2 \in \cZ^{\diamond}$ we have $\pdv{g_1(z_1, e_1)}{e_1} \neq \pdv{g_1(z_2, e_1)}{e_1}$. \label{ass:d4}
    \item For $z_1, z_2 \in \supp(Z)$ and $(e_2,\dots,e_d) \in \supp(\eta_{X_2},\dots, \eta_{X_d})$, we define the Jacobian matrix
    $\tilde{\mathbf{J}}_{g}(z_1, z_2, e_{\texttt{2:d}}) \coloneqq \begin{bmatrix}
        \pdv{(g_2(z_1, e_2) - g_2(z_2, e_2))}{e_2} & \dots & 0 \\
        \vdots & \ddots & \vdots \\
        0 & \dots & \pdv{(g_d(z_1, e_d) - g_d(z_2, e_d))}{e_d}
    \end{bmatrix}$.
    There exists a subset $\cE^{\diamond} \subseteq \supp(\eta_{X_2},\dots, \eta_{X_d})$ with non-zero Lebesgue measure and $z_1, z_2 \in \supp(Z)$ such that for all $e \in \cE^{\diamond}$, we have $\ker(\tilde{\mathbf{J}}_{g}(z_1, z_2, e))$ = \{0\}. \label{ass:d5}
\end{enumerate}

We now present a proposition of Theorem~\ref{thm:interv_ident_bin_preadd} which shows the identifiability of the interventional distribution $P_Y^{\mathrm{do}(X:=x)}$ for the pre-additive model class and discrete instrument~$Z$.

\begin{theorem} \label{thm:interv_ident_bin_preadd}
    Consider the model in \eqref{eq:pre-ANM} and suppose the assumptions \ref{ass:d1}-\ref{ass:d4} hold. For all $x \in \supp(X)$, the interventional distribution $P_Y^{\mathrm{do}(X:=x)}$ is then identifiable from the observed distribution $P_{(X,Y)|Z}$. 

    \begin{proof}
        See Appendix~\ref{app:ident_pre_ANM}.
    \end{proof}
\end{theorem}

\begin{proposition} \label{prop:ident_bin_preadd}
    Consider the model in \eqref{eq:pre-ANM}. Suppose the assumptions \ref{ass:d1}-\ref{ass:d4} hold. For any two models $(g_j,f_k,\beta_k,\eta_X,\eta_Y)$, $(\tilde{g}_j,\tilde{f}_k,\tilde{\beta}_k,\tilde{\eta}_X,\tilde{\eta}_Y) \in \mathcal{M}_{\rm DIV}^p$ that induce the same conditional distribution of $(X,Y)$ given $Z=z$ it then holds
    \begin{enumerate}[label=(\textit{\alph*})]
        \item for all $j \in \{1,\dots d\}$, $z \in \supp(Z)$, $e_{X} \in \supp(\eta_{X_j})$ we have $g_j(z,e_{X}) = \tilde{g}_j(z,e_{X})$,
        \item for all $k \in \{1,\dots p\}$, it holds $\beta_k = \tilde{\beta}_k$, further for all $w \in \{x^\top \beta_k + e_Y \mid x \in \supp(X),  e_Y \in \supp(\eta_{Y_k})\}$ we have $f_k(w) = \tilde{f}_k(w)$,
        \item $(\eta_X, \eta_Y) \eqdist (\tilde{\eta}_X, \tilde{\eta}_Y)$. 
    \end{enumerate}  
\end{proposition}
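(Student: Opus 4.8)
The plan is to mirror the proof of Proposition~\ref{prop:ident_cont_preadd}, the continuous-instrument analogue, with one systematic change: since $Z$ is now discrete, every derivative in $z$ is replaced by a finite difference across a pair $z_1,z_2\in\supp(Z)$. This is why \ref{ass:d1} additionally asks $g(z,\cdot)$ to be differentiable in its noise argument, and why the design condition \ref{ass:d5} is phrased through the difference-Jacobian $\tilde{\mathbf J}_g(z_1,z_2,\cdot)$ rather than $\mathbf J_g(z,\cdot)$. As a preliminary normalisation, by \ref{ass:d3} and the copula argument of Remark~\ref{rem:1} we take each $\eta_{X_j}$ to be standard Gaussian, so $\supp(\eta_{X_j})=\supp(\tilde\eta_{X_j})=\bbR$, and we fix the location of each $\eta_{Y_k}$ as the pre-additive form allows; claims (a)--(c) are read relative to these normalisations.

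For (a), fix $z\in\supp(Z)$ and $j$. Both models induce the same conditional law $P_{X_j\mid Z=z}$, and by \ref{ass:d1} both $g_j(z,\cdot)$ and $\tilde g_j(z,\cdot)$ are strictly monotone maps pushing the atomless standard Gaussian forward onto it; since the monotone transport between two laws on $\bbR$ is unique once its orientation is fixed, $g_j(z,\cdot)=\tilde g_j(z,\cdot)$ on $\bbR$. This is the argument already used for part (a) of Proposition~\ref{prop:ident_general} and uses only \ref{ass:d1} and \ref{ass:d3}.

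For (b), since each $g_j$ is now identified and strictly monotone, conditioning on $Z=z$ lets us recover $\eta_{X_j}=g_j(z,\cdot)^{-1}(X_j)$, so the law of $(\eta_X,Y)\mid Z=z$ is identified; because $(\eta_X,\eta_Y)\ind Z$, the conditional law $P_{Y_k\mid\eta_X=e,\,Z=z}$ is identified for a.e.\ $e$ and every $z\in\supp(Z)$, and equals the law of $f_k\bigl(t_k(z,e)+\eta_{Y_k}\bigr)$ given $\eta_X=e$, where $t_k(z,e):=g_1(z,e_1)+\sum_{j=2}^d\beta_{k,j}g_j(z,e_j)$ enters purely as a location shift (using $\beta_{k,1}=1$). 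Crucially $\eta_{Y_k}\mid\eta_X=e$ does not depend on $z$, so for each fixed $e$ the family $\{P_{Y_k\mid\eta_X=e,Z=z}\}_{z\in\supp(Z)}$ is a strictly monotone transformation (the shared $f_k$, by \ref{ass:d2}) of a genuine location family, and since $f_k$ is strictly monotone, $Q_\alpha(Y_k\mid\eta_X=e,Z=z)=f_k\bigl(t_k(z,e)+c_\alpha(e)\bigr)$ with $c_\alpha(e)$ the $\alpha$-quantile of $\eta_{Y_k}\mid\eta_X=e$. Comparing two instrument values $z_1,z_2$ at equal $\alpha$ yields quantile--quantile conjugacy relations; because $\supp(Z)$ is only countable, the continuum of location values needed to pin down $f_k$ must come from varying the recovered noise, and \ref{ass:d4} supplies exactly this by making $e_1\mapsto g_1(z_1,e_1)-g_1(z_2,e_1)$ have non-zero derivative on a positive-measure set for a suitable pair (by countability of $\supp(Z)$ one such pair can be fixed), hence strictly monotone there, so the induced shift differences sweep an interval. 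As in the proof of Proposition~\ref{prop:ident_cont_preadd} (with finite differences in place of $z$-derivatives) this identifies $f_k$ up to an affine reparametrisation and gives the matching $t_k(z_1,e)-t_k(z_2,e)=\tilde t_k(z_1,e)-\tilde t_k(z_2,e)$ for a.e.\ $e$; the unit coefficient $\beta_{k,1}=1$ (with the now-known $g_1$) together with the location normalisation of $\eta_{Y_k}$ collapses the affine freedom to the identity, so $f_k=\tilde f_k$ on $\{x^\top\beta_k+e_Y:x\in\supp(X),\,e_Y\in\supp(\eta_{Y_k})\}$. Finally, subtracting the now-matching shift differences and using $g_j=\tilde g_j$ gives $\sum_{j\ge2}(\beta_{k,j}-\tilde\beta_{k,j})\bigl(g_j(z_1,e_j)-g_j(z_2,e_j)\bigr)=0$ for a.e.\ $e$; differentiating coordinatewise on the positive-measure set supplied by \ref{ass:d5}, where the diagonal difference-Jacobian $\tilde{\mathbf J}_g(z_1,z_2,\cdot)$ has trivial kernel (equivalently all its diagonal entries are non-zero), forces $\beta_{k,j}=\tilde\beta_{k,j}$, hence $\beta_k=\tilde\beta_k$ (vacuously when $d=1$). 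This completes (b).

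Part (c) then follows for free: with $g$, $f$ and $\beta$ identified, the map $(x,y,z)\mapsto(\eta_X,\eta_Y)$ given by $\eta_{X_j}=g_j(z,\cdot)^{-1}(x_j)$ and $\eta_{Y_k}=f_k^{-1}(y_k)-x^\top\beta_k$ is the same deterministic map in both models, so pushing the common $P_{(X,Y)\mid Z=z}$ through it yields the same conditional law of $(\eta_X,\eta_Y)\mid Z=z$, which by independence of $Z$ is the marginal, giving $(\eta_X,\eta_Y)\eqdist(\tilde\eta_X,\tilde\eta_Y)$. The main obstacle is the transformation-model identification inside (b): with only countably many instrument levels one must show that the location-shift family determines $f_k$ and each $\eta_{Y_k}\mid\eta_X=e$ up to the stated normalisations, which reduces to a Cauchy-type functional equation obtained by iterating the conjugacy relations at a pair of instrument values. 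The delicate points are (i) verifying that \ref{ass:d4} really delivers enough variation in the shift differences — one needs them non-degenerate and, since linearity of $g_1$ is excluded, arranged so the iterated conjugacy covers a dense set on which $f_k$ is then fixed by continuity — and (ii) tracking the supports so that $f_k$ is pinned down on all of $\{x^\top\beta_k+e_Y:x\in\supp(X),\,e_Y\in\supp(\eta_{Y_k})\}$ rather than merely on a subinterval.
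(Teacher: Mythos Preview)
Your three-step scaffold and your treatments of (a) and (c) match the paper's. The divergence is in (b), and it stems from a misreading of \emph{where} the $z$-derivatives actually appear in the continuous-instrument proof of Proposition~\ref{prop:ident_cont_preadd}.

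In that proof, the step that pins down $f_k$ up to a shift---arriving at $\phi_k'=1$ and hence $f_k(w+\tilde c)=\tilde f_k(w)$, equation~\eqref{eq:f_equality}---uses only partial derivatives in $e_1$ and $e_Y$, not in $z$. The role of \ref{ass:c4} there is simply that $\partial g_1(z,e_1)/\partial e_1$ takes \emph{different} values at two different instrument levels, which forces the $z$-free right-hand side of~\eqref{eq:step2_g1_nonlinear} to vanish. Assumption \ref{ass:d4} says exactly the same thing, so this entire block carries over verbatim to the discrete case; the paper's Step~II literally reads ``As in Step~II in the proof of Proposition~\ref{prop:ident_cont_preadd}'' and reproduces~\eqref{eq:f_equality} as~\eqref{eq:f_equality_Z_cat}. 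No finite differencing, no conjugacy iteration, no Cauchy-type functional equation is needed here.

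The \emph{only} place Proposition~\ref{prop:ident_cont_preadd} differentiates in $z$ is at~\eqref{eq:step2:Jg_bk}, for identifying $\beta_{k,2:d}$. In the discrete case the paper replaces that single step by a finite difference $\bar g(z_1,e_{2:d})-\bar g(z_2,e_{2:d})$ and then differentiates in $e_{2:d}$, invoking the diagonal kernel condition \ref{ass:d5} to conclude $\beta_k=\tilde\beta_k$ (your final paragraph of (b) gets this part right). After that, the constant $c$ is killed by the median-zero normalisation of $\eta_{Y_k}$, exactly as before.

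So your periodicity route to ``$\phi_k$ affine'' is plausible but self-inflictedly delicate: the support-tracking and density worries you list under (i)--(ii) never arise in the paper's argument because $\phi_k'=1$ drops out of the same $(e_1,e_Y)$-derivative computation as in the continuous case. A smaller point: in your ordering you fix the shift in $f_k$ \emph{before} establishing $\beta_k=\tilde\beta_k$, but the median-zero argument for $c=0$ actually needs $\beta_k=\tilde\beta_k$ first (otherwise the two sides differ by $\bar g^\top(\beta_k-\tilde\beta_k)$, not just a constant); the paper's order---shift-up-to-constant, then $\beta_k$, then $c=0$---avoids this.
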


\begin{remark}
Note that assumption \ref{ass:d5} stipulates that $g_j$ cannot be linear for all $j \in \{1, \dots, d\}$, posing a crucial difference to the assumption \ref{ass:c5}, applicable when the instrument $Z$ is continuous.
\end{remark}

\subsection{Consistency}\label{subsec:consistency}

The identifiability results above are population statements. To connect them to estimation, we establish consistency of the DIV estimator in a simple parametric subclass.

Let $Z,X,Y$ be scalar and suppose the training law is induced by
\begin{equation}\label{eq:lin_iv_consistency}
X = a Z + g H + \varepsilon_X,
\qquad
Y = b X + d H + \varepsilon_Y,
\qquad
Z \ind (H,\varepsilon_X,\varepsilon_Y),
\end{equation}
with parameter $\theta \coloneqq (a,b,g,d)\in\Theta\subset\mathbb R^4$. For each $z$, write
$P^{\theta}_{(X,Y)\mid Z=z}$ for the model-implied conditional distribution of $(X,Y)$ given $Z=z$.

Let $\mathrm{ES}(P,v)$ denote the (bivariate) energy score of a distribution $P$ evaluated at
$v\in\mathbb R^2$. For $v=(x,y)^\top$, define the pointwise conditional score
\[
\ell_\theta(x,y,z) := \mathrm{ES}\!\bigl(P^{\theta}_{(X,Y)\mid Z=z},\, (x,y)^\top\bigr),
\]
and the population and empirical objectives
\[
L(\theta) := \mathbb E\!\left[\ell_\theta(X,Y,Z)\right],
\qquad
\hat L_n(\theta) := \frac1n\sum_{i=1}^n \ell_\theta(X_i,Y_i,Z_i),
\]
based on i.i.d.\ observations $\{(Z_i,X_i,Y_i)\}_{i=1}^n$. We define the DIV estimator in
this linear class by
\[
\hat\theta_n \in \arg\min_{\theta\in\Theta}\hat L_n(\theta).
\]

\begin{theorem}[Consistency in the linear model class]\label{thm:consistency_linear}
Assume \emph{well-specification}, i.e., there exists $\theta^*\in\Theta$ such that
$
P^{\theta^*}_{(X,Y)\mid Z=z}=P_{(X,Y)\mid Z=z} \ \text{for }P_Z\text{-a.e.\ }z$.
Further suppose that the parameter set $\Theta$ is compact and the exogenous variables
$(Z,H,\varepsilon_X,\varepsilon_Y)$ have bounded support.
Then $\hat\theta_n \xrightarrow{\mathbb P} \theta^*$.
\end{theorem}

\begin{proof}
    See Appendix~\ref{app:consistency}.
\end{proof}

\noindent We focus on the univariate linear subclass as a simple setting to illustrate how strict propriety of the energy score together with uniform convergence of the empirical objective over compact parameter set yields a standard consistency guarantee.

\section{Simulated experiments} \label{sec:experiments}
In this section, we empirically compare the performance of DIV with several benchmark methods on simulated data. We consider scenarios with binary and continuous instruments and treatments, an under-identified case where $\dim(Z) < \dim(X)$, and a setting where the treatment depends only weakly or not at all on the instrument through its conditional mean.

We benchmark the performance of DIV against several existing methods, including control functions (CF) \citep{heckman1976cf, newey1999control}, HSIC-X \citep{saengkyongam2022exploiting}, DeepIV \citep{hartford2017deepIV}, DeepGMM \citep{bennett2020deepGMM}, DIVE \citep{kook2024}, IVQR \citep{chernozhukov2005}, and engression \citep{shen2024engression}. HSIC-X, DeepIV, and DeepGMM are IV regression methods for estimating the interventional mean function. DIVE and IVQR estimate the interventional quantile function, while engression is a distributional regression approach aimed at estimating the conditional distribution rather than the interventional one. For a detailed description of these methods, and implementation details, see Appendix~\ref{app:benchmark}. Details on the software implementation of the DIV method, including model architecture, training configuration, and example code usage, can be found in Appendix~\ref{sec:software}.

In terms of computational efficiency, training a DIV model with $5000$ epochs on a data set of size $n \! = \! 10000$ takes approximately 8 minutes on a GPU (MPS) and 12 minutes on a CPU (MacBook Pro, M1 chip). We compare runtimes for methods that produce distributional estimates (the interventional distribution or a range of interventional quantiles). On a simulated data set with $n \! = \! 10000$, fitting $100$ quantile levels took about $62$ minutes for DIVE (with runtime increasing sharply in $n$; e.g., 4.4 minutes for $n \! = \! 5000$) and $2$ minutes for linear IVQR.

For all simulated experiments, we report average results over $10$ simulation runs. We additionally include experiments with $100$ replications and varying sample sizes to study finite-sample behaviour. The code for reproducing all experiments, including those with real-world data, is available at: \url{https://github.com/aholovchak/DIV}.

\subsection{Continuous treatment} \label{subsec:sim_cont_X}

We begin by evaluating the empirical performance of DIV in estimating interventional mean functions, comparing it against benchmark methods including CF, HSIC-X, DeepIV and DeepGMM. Our experiments encompass both linear and nonlinear settings to assess their robustness.

We use a training data set of size $10000$ across all experiments. The DIV models employ a 4-layer model architecture; the Adam optimiser is used with a learning rate of $10^{-3}$; we run $10000$ epochs for both models. We compare the performance both visually, by assessing the estimated interventional mean functions by different methods against the true interventional mean function, and in terms of the integrated mean squared error (MSE) $\mathbb{E}[(\hat{\mu}(x) - f^0(x))^2]$, where $\hat{\mu}(x)$ is the estimated interventional mean function, and $f^0(x)$ the true causal function, approximated using a test sample size of $5000$. 

In the first group of scenarios, we focus on the univariate case, meaning that $Z, X, Y \in \mathbb{R}$. 
We consider 6 data-generating processes, three with the instrument $Z$ following a continuous uniform distribution, $Z \sim \unif(0,3)$ (Settings 2, 4 and 6 below), and three with $Z$ following a Bernoulli distribution, $Z \sim \Bernoulli(0.5)$ (Settings 1, 3, and 5). In each setting, we assume  mutually independent $H,\varepsilon_X, \varepsilon_Y \sim N(0,1)$. For all settings, the treatment model is linear and defined as $g(Z,H,\varepsilon_X) \coloneqq Z + H + \varepsilon_X$. The outcome models are defined as follows: \\
\textbf{Scenario 1-2} (Linear function): $f(X, H, \varepsilon_Y) \coloneqq X - 3H + \varepsilon_Y$. \\
\textbf{Scenario 3-4} (Case distinction, linear \& log functions): $f(X, H, \varepsilon_Y) \coloneqq \mathbbm{1}_{\{X \leq 1\}}(\frac{1}{5}(5.5 + 2X + 3H + \varepsilon_Y)) + \mathbbm{1}_{\{X > 1\}}(\log((2X + H)^2 + \varepsilon_{Y}^2)).$ \\
\textbf{Scenario 5-6} (Nonlinear function): $f(X, H, \varepsilon_Y) \coloneqq 3\sin(2X) + 2X - 3H + \varepsilon_Y$. \\

\begin{figure}[b!]
        \centering        
        \includegraphics[width=1\textwidth,height=1\textheight,keepaspectratio]{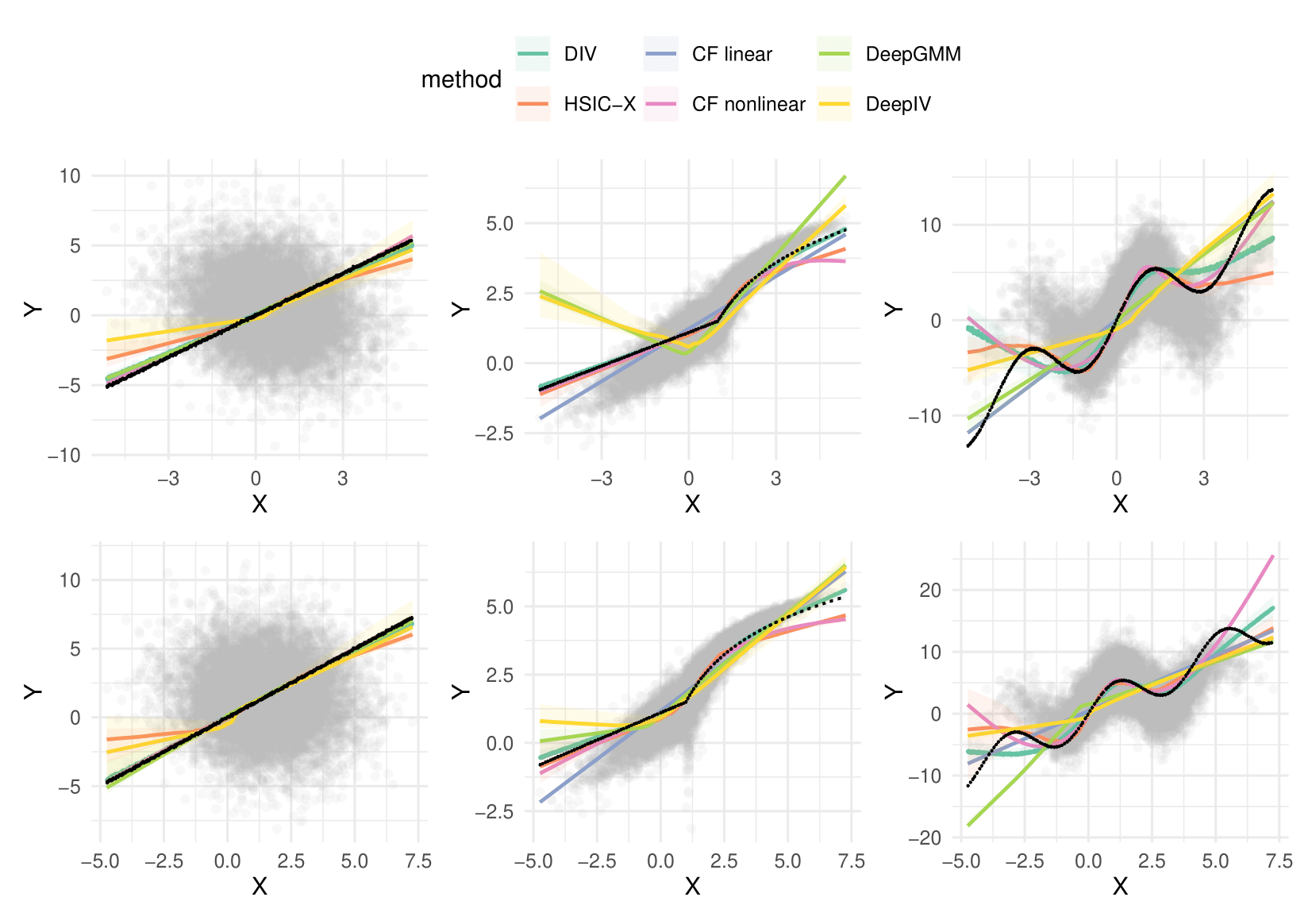}
        \caption{Estimated interventional mean functions $\hat{\mu}^*(x)$. True $\mu^*(x)$ represented as black dashed line (\dashed). Top row: binary instrument Z; bottom row: continuous instrument Z. Columns correspond to Scenarios 1–2, 3–4, and 5–6, respectively.}
        \label{fig:mean_fct}
\end{figure}

\begin{figure}[t!]
        \centering
\includegraphics[width=1\textwidth,height=1\textheight,keepaspectratio]{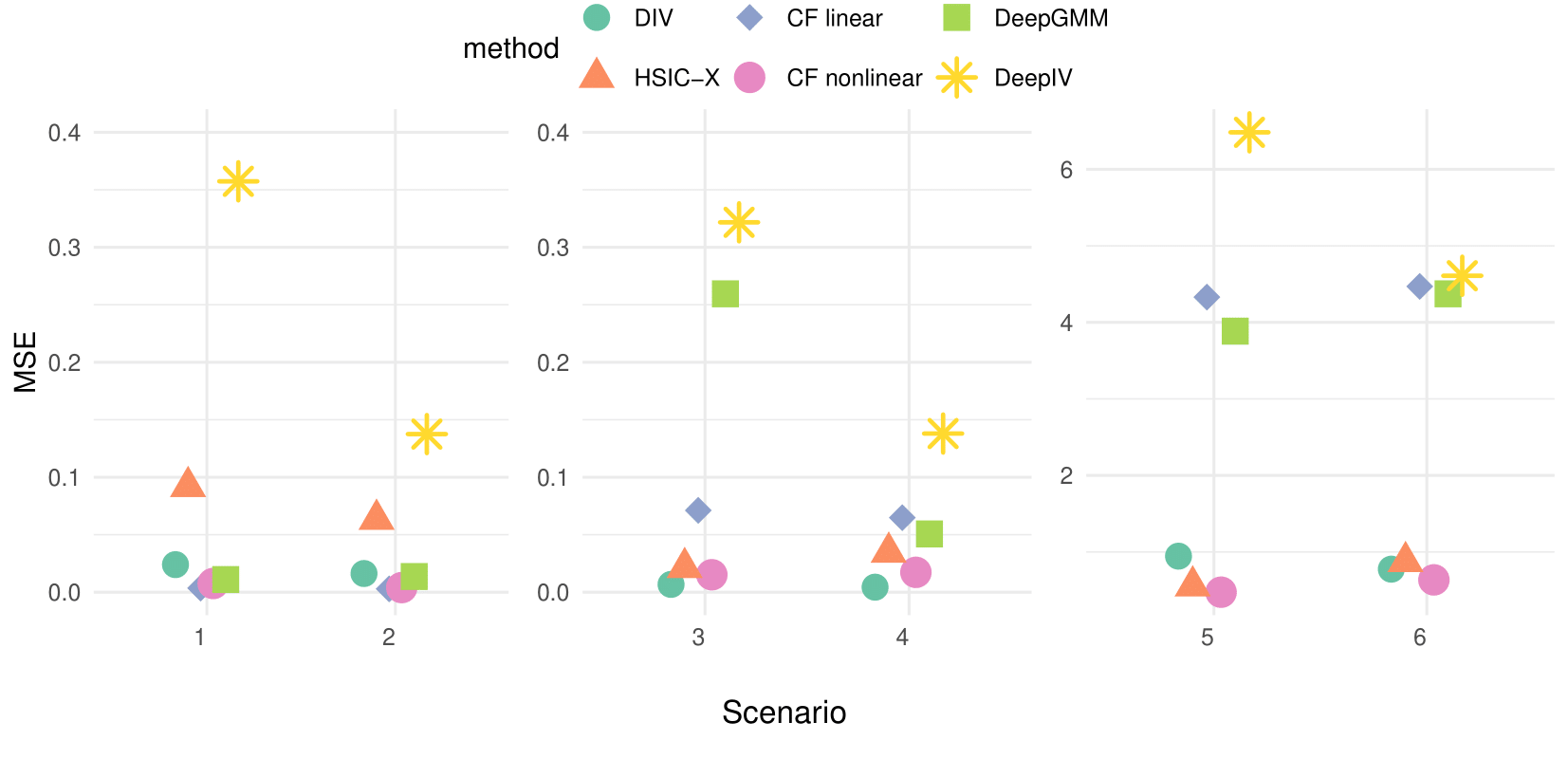}
        \caption{Mean MSE of the estimated interventional mean across 10 runs (training sample size $n = 10000$).}
        \label{fig:mean_fct_mse}
\end{figure}

Figure~\ref{fig:mean_fct} presents the estimation results for the interventional mean function across six different scenarios. Across all considered scenarios, the DIV method performs competitively, providing a close estimation of the interventional mean function. The DIV method performs well in the linear setting (scenarios 1-2) and outperforms benchmark methods in the presence of a pre-additive outcome noise model (scenarios 3-4). Furthermore, for the highly nonlinear post-additive noise outcome model (scenarios 5-6), DIV closely follows the sine curve form of the interventional mean, achieving performance comparable to HSIC-X. Figure~\ref{fig:mean_fct_mse} reinforces the conclusions drawn from the visual comparison of the estimated interventional mean functions, summarizing the performance of the evaluated IV methods in terms of the average MSE over $10$ simulation runs. It has to be stressed, though, that DIV goes beyond merely estimating the mean effect, as it estimates the full interventional distribution.

Finite-sample behaviour across increasing training sample sizes is reported in Appendix~\ref{app:finite-sample}, showing that the MSE of the estimated interventional mean decreases overall with $n$ across Scenarios~1--6 (and the energy distance improves correspondingly).

\subsection{Binary treatment} \label{subsec:sim_binary_X}

We consider two nonlinear scenarios with treatment $X$ being a binary variable. The scenarios we use are inspired by those described in \citet[Section~5.1]{kook2024}. The models are defined as follows: \\
\textbf{Scenario 1}: $Z, H, \varepsilon_X \sim \Logistic(0,1)$ mutually independent. $g(Z,H,\varepsilon_X) \coloneqq \mathbbm{1}(4Z + 4H > \varepsilon_X)$; $f(X, H, \varepsilon_Y) \coloneqq \log(1 + \exp(18 + 8X + 6H))$. \\
\textbf{Scenario 2}: $Z, H, \varepsilon_X, \varepsilon_Y \sim \Logistic(0,1)$ mutually independent. $g(Z,H,\varepsilon_X) \coloneqq \mathbbm{1}(4Z + 4H > \varepsilon_X)$; $f(X, H, \varepsilon_Y) \coloneqq 2 + (X + 1)^2 + 3(X + 1) + 2H + \varepsilon_Y$. 

The target quantity we are aiming for is the quantile treatment effect (QTE) \citep{doksum1974qte}, defined as $$\qte(\alpha) \coloneqq q^*_\alpha(1) - q^*_\alpha(0).$$ The QTE captures the difference between the quantiles of the interventional outcome distribution under treatment and control, providing insight into the heterogeneous impact of the treatment across different points of the outcome distribution. We consider a non-equidistant sequence of quantiles between $0.01$ and $0.99$. To evaluate the accuracy of the estimated QTE, we further compute the root mean squared error (RMSE) between the estimated and true QTE at each quantile; we perform 10 simulations runs. We use a training data set of size $10000$ for both experiments. The DIV models adopt a 4-layer architecture and are trained with the Adam optimiser at a learning rate of $10^{-4}$, running for $20000$ epochs. We compare the performance of DIV with linear IVQR and DIVE, using their respective implementations as described above. 

\begin{figure}[t!]
        \centering
\includegraphics[width=1\textwidth,height=1\textheight,keepaspectratio]{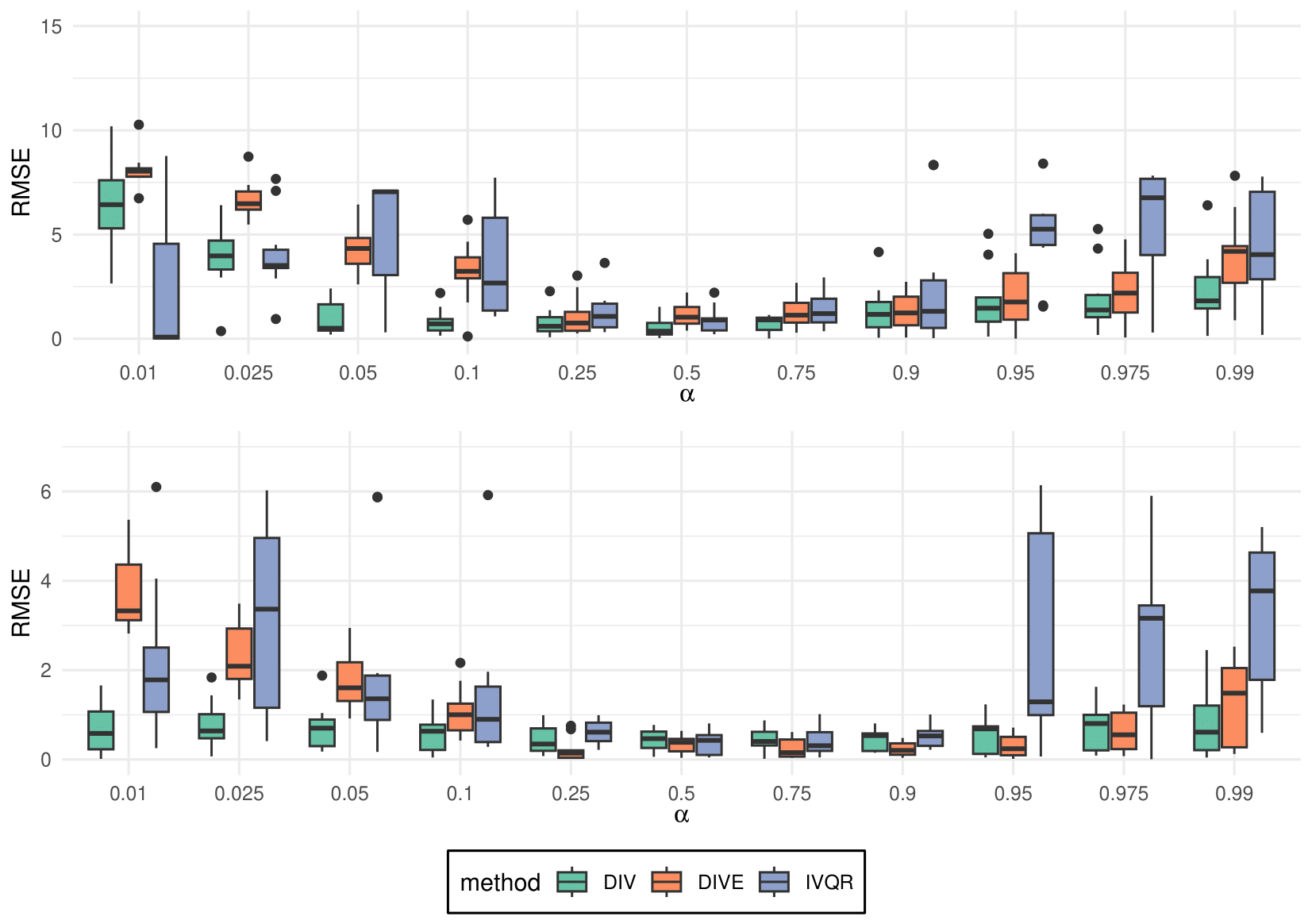}
        \caption{RMSE of the estimated QTEs. Top: scenario 1, bottom: scenario 2.}
        \label{fig:QTE_plot}
\end{figure}

The treatment function $g(z, \cdot)$ is not strictly monotone in either scenario considered. While the DIV method theoretically requires monotonicity of $g(z, \cdot)$, it demonstrates empirical robustness to violations of the monotonicity assumption and achieves better or at least comparable performance in terms of RMSE across the quantiles, as shown by the boxplots in Figure~\ref{fig:QTE_plot}. Additionally, in Figure~\ref{fig:binX-qte-true-estim} we examine the average estimated QTE curves together with their pointwise variability. As the sample size increases, the estimated QTE curves approach the true QTE and exhibit reduced variability, with the shaded bands shrinking as more data become available.

Appendix~\ref{app:finite-sample} reports finite-sample behaviour across increasing $n$, showing that the RMSE of the estimated QTE curves decreases overall with sample size (with improving energy-distance accuracy for the interventional outcome distributions).

\begin{figure}[t!]
        \centering
\includegraphics[width=1\textwidth,height=1\textheight,keepaspectratio]{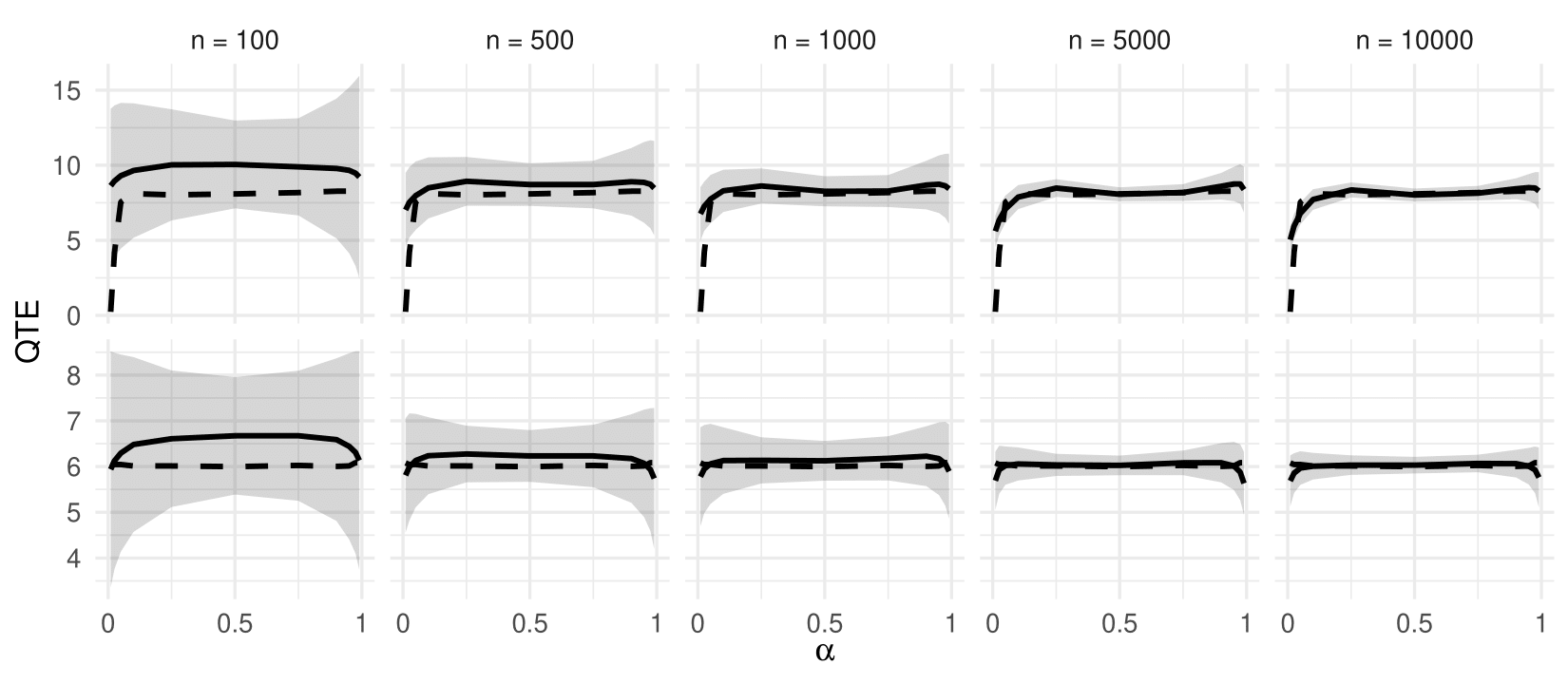}
        \caption{True (dashed) and DIV-based QTE curves (solid). The solid curves correspond to the mean across 100 simulation runs, with shaded bands indicating $\pm$ SD. Top: Scenario 1; bottom: Scenario 2.}
        \label{fig:binX-qte-true-estim}
\end{figure}

\subsection{`Under-identified' case} \label{subsec:under_ident}

Here, we focus on the `under-identified' case where $\dim(Z)<\dim(X)$ with a single binary instrument $Z$. We set $Z \sim \Bernoulli(0.5)$, $X \in \mathbb{R}^2$, and $Z, Y \in \mathbb{R}$. The noise terms and hidden confounder are assumed to be mutually independent and follow a standard normal distribution, $\varepsilon_{X_1}, \varepsilon_{X_2}, \varepsilon_Y, H \sim N(0,1)$. The treatment model functions are defined as $g_1(Z,H,\varepsilon_{X_1}) \coloneqq Z(2H - 0.5\varepsilon_{X_1})$ and $g_2(Z,H,\varepsilon_{X_2}) \coloneqq \log(7 + Z + H + \varepsilon_{X_2})$. The outcome model is linear and defined as $f(X_1, X_2, H, \varepsilon_Y) \coloneqq X_1 + 2X_2 + 2H + \varepsilon_Y$.
For the DIV method, we define the outcome model as a single linear layer without bias. The remaining model parameters follow those of the continuous treatment scenarios. The publicly available implementation of HSIC-X allows for selecting a linear outcome model, which we use as a  benchmark. We evaluate the $L_2$-norm of the estimation error for the vector of linear coefficients $\beta \coloneqq (1,2)^\top$. We consider sample sizes $n \in \{10^2, 10^3, 10^4\}$ and conduct $10$ simulation runs for each method.

\begin{table}[t]
\centering
\begin{tabular}{lccc}
\toprule
 & \multicolumn{3}{c}{$\|\hat{\beta} - \beta\|_2$} \\
\cmidrule(lr){2-4}
                & $n = 10^2$ & $n = 10^3$ & $n = 10^4$  \\
\midrule
DIV    & 0.311 & 0.188 & 0.037  \\
HSIC-X & 1.243 & 0.287 & 0.116 \\
\bottomrule
\end{tabular}
\caption{$L_2$-norm of the estimation error for the linear coefficient vector $\beta$.}
\label{tab:beta_norm}
\end{table}

The results in Table~\ref{tab:beta_norm} indicate that the DIV method estimates the linear coefficients with reasonable precision, achieving higher accuracy as the sample size increases. Furthermore, the DIV method outperforms HSIC-X, yielding lower $L_2$-norm across all considered sample sizes.

\subsection{Weak instrument relevance} \label{subsubsec:XZ_alpha}

Instrumental variable designs often face the challenge that the instrument has only a weak influence on the treatment. Weak relevance may arise for structurally different reasons, and it is therefore important that an estimation method remains robust across these settings. We consider two complementary designs and demonstrate that DIV performs well in both because it exploits the full conditional distribution of $(X,Y)\mid Z=z$.

\paragraph{Weak mean relevance.}
Let $Z \sim \unif(-3,3)$, \ $H,\varepsilon_X,\varepsilon_Y \sim \unif(-1,1)$ independent, and define $g(Z,H,\varepsilon_X)=0.5\,Z\,(\alpha + 3H + \varepsilon_X)$, and $f(X,H,\varepsilon_Y)=0.5\log\!\bigl(1+\exp(2X - 3H + \varepsilon_Y)\bigr)$, with tuning parameter $\alpha \in \mathbb{R}$. It holds $\mathbb{E}(X|Z)= 0.5 \alpha Z$, $\mathrm{Var}(X|Z)=\tfrac{5}{6}Z^2$, where $\alpha$ controls the strength of the conditional mean relationship between $X$ and $Z$. When $\alpha=0$, the usual moment identifiability condition (compare, e.g.\ \citet[Section~2]{saengkyongam2022exploiting}) breaks down.  
Since the outcome model is not of post-additive noise form, the independence restriction in \citet{saengkyongam2022exploiting} is also violated, reflected by the HSIC test being rejected in all epochs for all~$\alpha$. Table~\ref{tab:alpha_XZ}(a) shows that DIV continues to perform well even when mean relevance disappears, confirming that distributional information remains informative.

\paragraph{Weak mean-variance relevance.}
Next we consider a design in which $\alpha$ weakens both the conditional mean and variance: $g(Z,H,\varepsilon_X)=\alpha(Z+H) + 0.5\,\varepsilon_X$, $f(X,H,\varepsilon_Y)=\tanh(1.5X) - 2.5H + \varepsilon_Y$. Then $\mathbb{E}(X\mid Z)=\alpha Z$, $\mathrm{Var}(X\mid Z)=\alpha^2\bigl(\tfrac{1}{3}\bigr)+\tfrac{1}{12}$.
However, higher-order moments remain $Z$-dependent; for example, $\mathbb{E}[(Z+H)^3\mid Z] = Z(Z^2+1)$.
Thus, even when both first two moments vanish as identifying sources, the distribution of $X\! \mid \! Z$ continues to carry useful structural information. Table~\ref{tab:alpha_XZ}(b) displays the empirical MSE. The qualitative pattern parallels the first setting: most benchmark estimators deteriorate as $\alpha$ decreases, whereas DIV maintains consistently strong performance. In contrast to the first design, the outcome model here is of post-additive noise form, so the required independence restriction holds; consequently, HSIC-X remains stable as well.
\begin{table}[t!]
\centering

\begin{minipage}{0.47\textwidth}
\centering
(a) Weak mean relevance\\[2mm]
\begin{tabular}{lrrr}
\hline
& $\alpha=0$ & $\alpha=1$ & $\alpha=5$ \\
\hline
DIV & 0.002 & 0.002 & 0.002 \\ 
HSIC-X & 2.693 & 0.333 & 0.344 \\
CF linear & 141.941 & 0.476 & 1.625 \\ 
CF nonlinear & 2.762 & 0.243 & 0.057 \\ 
DeepGMM & 1.158 & 0.274 & 0.005 \\ 
DeepIV & 0.675 & 0.305 & 0.102 \\
\hline
\end{tabular}
\end{minipage}
\hfill
\begin{minipage}{0.47\textwidth}
\centering
(b) Weak mean-variance relevance\\[2mm]
\begin{tabular}{lrrr}
\hline
& $\alpha=0$ & $\alpha=1$ & $\alpha=5$ \\
\hline
DIV & 0.031 & 0.014 & 0.028 \\ 
HSIC-X & 0.029 & 0.011 & 0.018 \\
CF linear & 1.450 & 0.115 & 0.237 \\ 
CF nonlinear & 0.395 & 0.003 & 0.038 \\ 
DeepGMM & 0.207 & 0.033 & 0.015 \\ 
DeepIV & 0.236 & 0.076 & 0.179 \\
\hline
\end{tabular}
\end{minipage}

\caption{MSE of the interventional mean functions for weak instrument scenarios.}
\label{tab:alpha_XZ}
\end{table}

\section{Real-world applications}\label{sec:realdata}

\subsection{Colonial origins of comparative development data}

We investigate the performance of the DIV method in an application based on the real-world economic data set\footnote{The data are publicly available; see \url{https://ivmodels.readthedocs.io/en/latest/examples/acemoglu2001colonial.html} from \citet{acemoglu2001} for data description and access.}. The study examines the causal relationship between institutional quality and economic development, hypothesizing that historical institutions, shaped by European colonization strategies, have long-term effects on prosperity. Specifically, European settlers established different types of institutions depending on local conditions: in regions with high settler mortality rates, extractive institutions were set up to exploit resources, whereas in regions with low settler mortality, settlers built inclusive institutions that supported property rights and economic growth.

The validity of the settler mortality instrument used by \citet{acemoglu2001} has been debated in subsequent literature. \citet{albouy2012comment} argued that much of the original mortality data were based on conjectures or military campaign records rather than direct settler deaths, raising concerns about measurement error and potential weak instrument problems. In response, \citet{acemoglu2012reply} defended the historical comparability of their data and demonstrated that their main findings are robust to alternative codings and caps on extreme mortality values. In this paper, we follow the empirical design of \citet{acemoglu2001} to allow comparison with previous work, while acknowledging that the debate has highlighted the sensitivity of the results to data quality and model specification.

To measure institutional quality $X$, \citet{acemoglu2001} use the average protection against expropriation risk (1985-1995), an indicator of how secure property rights are in each country. The outcome variable $Y$ is defined as log GDP per capita in 1995, reflecting economic performance. The authors apply a linear two-stage least squares (2SLS) approach, using historical settler mortality $Z$ as an instrumental variable for institutional quality. The data set consists of $n = 64$ observations, leading to a particularly challenging scenario for DIV model estimation due to the small sample size.

In our analysis, we apply the DIV method to the same data set and compare its estimates of the interventional mean function to those obtained via 2SLS, which was used in the original analysis. Figure~\ref{fig:colonial_devel} displays the estimated interventional mean functions from DIV and 2SLS. For DIV, we additionally report the 5th and 95th interventional quantile functions, shown as a shaded region bounded by their lower and upper curves, correspondingly. The estimated effect of institutions on GDP per capita remains nearly linear, with a slope closely matching the 2SLS estimate. Although DIV is flexible enough to capture nonlinearities, it recovers an approximately linear relationship in this empirical setting, consistent with the key finding of the original study: institutions exert a positive and roughly linear causal effect on long-term economic prosperity. For $X>8$, the shaded region narrows; results in this range should be interpreted with caution, as only seven observations are available.

\begin{figure}[!b]
        \centering
        \includegraphics[width=1\textwidth,height=1\textheight,keepaspectratio]{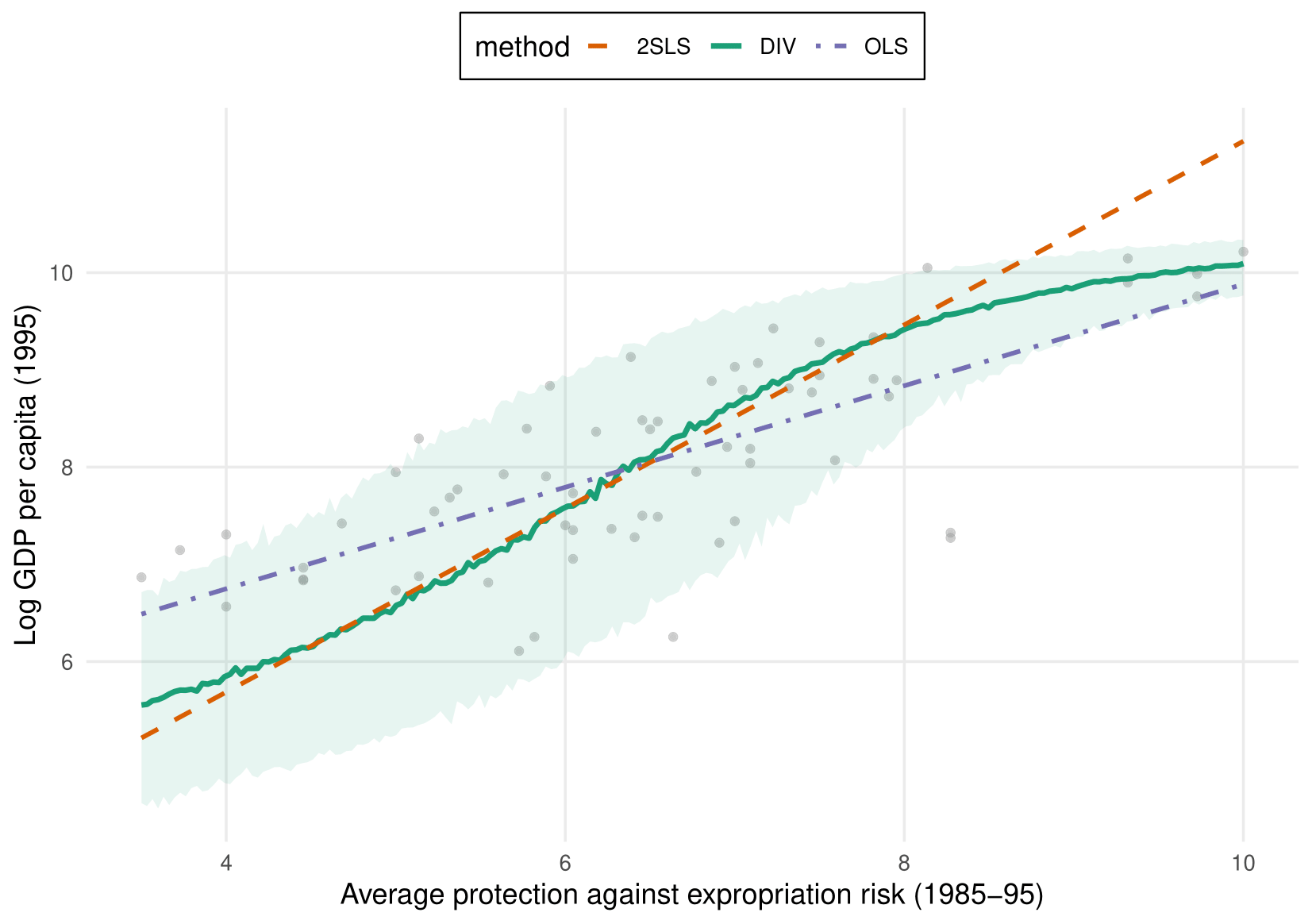}
        \caption{Estimated interventional mean functions for the effect of institutional quality on log GDP per capita. Shaded region represents the range between the 5th and 95th interventional quantile functions estimated by DIV.}
        \label{fig:colonial_devel}
\end{figure}

\subsection{Single-cell data} \label{subsec:singlecell}

So far, we have demonstrated that the DIV method performs well under controlled experimental settings, where assumptions hold, and in economic data, where background knowledge suggests a linear relationship. However, in complex biological settings, such assumptions are less clear to satisfy. We investigate the performance of DIV on a single-cell data set\footnote{The data are available in the accompanying GitHub repository.}, similar to \citet[Section~5]{shen2023causalityoriented}. The data consist of expression levels for ten genes, where one gene is designated as the response variable Y and the remaining nine serve as covariates X. Following \citet{shen2023causalityoriented}, the response gene was selected empirically as the one whose variance increases under interventions on other genes but has little effect on their variances, indicating that it behaves approximately as a leaf node in the underlying causal graph. The remaining nine genes are treated as potentially endogenous regressors, since their expression levels may be jointly influenced by unobserved cellular factors, thereby motivating the use of an instrumental variable approach.

The training data comprise ten distinct environments: one observational environment with $11{,}485$ samples, and nine interventional environments, each corresponding to a CRISPR perturbation targeting one of the nine covariate genes. The categorical variable representing the environment serves as the instrument $Z$, indicating whether an observation originates from the observational setting or from one of the single-gene interventions. Hence, the instrument takes ten categorical levels (nine intervention categories and one observational category) with highly unbalanced sample sizes: the observational environment dominates (around $94\%$ of all cells), while the nine interventional environments contain between $100$ and $550$ cells each.

The interventions are externally imposed by CRISPR perturbations, which are randomised with respect to cellular noise and independent of unobserved confounders affecting the response. Since each cell receives its perturbation at random through pooled viral transduction, the environment indicator satisfies the exogeneity condition required for a valid instrument. \citep{single_cell2022}.

Each CRISPR perturbation directly alters the expression distribution of its target gene and, consequently, the joint distribution of the covariates. We quantify this effect using pairwise energy distances \citep{rizzo2016energy_dist} between each interventional and the observational environment. All interventions yield clearly non-zero distances (ranging from approximately 50 to 550), indicating substantial distributional shifts and confirming that the instrument provides sufficient variation to satisfy the relevance assumption.

There are also data available from more than 400 environments involving interventions on hidden genes (meaning that we do not have access to the information about the instrument), which can be used for prediction evaluation. Among these environments, we select $50$ in which the distribution of the $10$ observed genes has the largest energy distance \citep{rizzo2016energy_dist} from their distribution in the pooled training data, indicating larger distribution shifts compared to the training data.

In the following two sections, we evaluate the models in terms of their generalizability to unseen interventions and their stability across different training subsets. DIV and engression are trained for $20000$ epochs each, while all other benchmark methods use their default tuning parameters. DeepIV results are excluded due to frequent occurrences of \texttt{NA} values, making the estimates unreliable.

Because the true causal effects are not directly observable, model performance is assessed indirectly through the connection between causality, generalizability, and stability. If the underlying causal mechanism linking $X$ and $Y$ remains unchanged across environments, then a correctly specified causal model should both generalise well to unseen interventions and yield consistent predictions across training subsets. In the single-cell application, this reasoning is plausible because CRISPR perturbations modify the distribution of upstream genes while leaving the intrinsic response process unchanged.

\subsubsection{Generalizability}

Generalizability refers to a model's ability to maintain predictive accuracy in unseen environments, especially when environments induce distribution shifts. Theoretical results on minimax solutions \citep[Proposition 3.1-3.3]{christiansen2022} suggest that the causal function is minimax optimal when environments induce distribution shifts that are sufficiently strong, ensuring better generalization across distribution shifts.

\begin{table}[b!]
\centering
\begin{tabular}{l|rrrrrrr}
  \hline
 & Q00 & Q05 & Q25 & Q50 & Q75 & Q95 & Q100 \\ 
  \hline
DIV & \textbf{0.1102} & \textbf{0.1184} & \textbf{0.1528} & \textbf{0.2447} & \textbf{0.3848} & \textbf{0.6827} & \textbf{0.6971} \\ 
  HSIC-X & \textbf{0.1168} & \textbf{0.1234} & \textbf{0.1574} & 0.2701 & 0.4204 & 0.7358 & 0.7544 \\ 
  DeepGMM & 0.1386 & 0.1625 & 0.1930 & \textbf{0.2469} & \textbf{0.3447} & \textbf{0.4959} & \textbf{0.5296} \\ 
  CF linear & 5.4931 & 5.7726 & 6.4438 & 7.4118 & 8.5316 & 9.9242 & 12.8477 \\ 
  CF nonlinear & 4.8406 & 5.4910 & 6.7146 & 7.7707 & 8.7012 & 9.8118 & 11.2271 \\ 
  Engression & 0.4811 & 0.4899 & 0.5251 & 0.6303 & 0.7710 & 1.0802 & 1.1011 \\ 
   \hline
\end{tabular}
\caption{MSE quantiles across 50 test environments with the strongest distribution shifts, averaged over 10 runs. Two best-performing methods per quantile are highlighted in bold.}
\label{tab:singlecell_generalizability}
\end{table}

We assess the model's generalizability by testing it on the 50 environments with the strongest distribution shifts. For each environment, we compute the MSE, summarise the results using quantiles, and report the average values over 10 runs. The results are presented in Table~\ref{tab:singlecell_generalizability}. Across all quantiles, DIV consistently ranks among the top two methods, demonstrating strong generalization to unseen environments. It performs particularly well in the lower and mid quantiles and remains competitive at higher quantiles. This suggests that DIV might effectively capture the causal function and can adapt well to varying intervention strengths.

\subsubsection{Stability}

One possible way to validate an estimator of a causal function is to compare its estimates on different environments. If the environments correspond to interventions on predictors, but not on the outcome itself, a reliable estimator should show minimal variability across all of the environments \citep{peters2016causal, meinshausen2018causality, rothenhausler2021anchor}. Conversely, if a method exhibits high instability across environments, it is likely not capturing the causal function but rather responding to spurious correlations.

To quantify the stability of an estimated model, we conduct a leave-one-environment-out analysis: we fit the model nine times, each time excluding one of the nine interventional environments from the training data. Predictions are then computed on the test data for each trained model, and stability is assessed using the following measure
$$
\mathcal{E}(\widehat{\mu}) = \widehat{\mathbb{E}}_X \left( \sum_{e, e^\prime} \left[ \widehat{\mu}_e(x) - \widehat{\mu}_{e^\prime}(x) \right]^2  \right),
$$
where $\widehat{\mu}_e(x)$ denotes the estimated interventional mean function, with the subscript $e$ indicating that data from environment $e$ was excluded during training. This metric quantifies the variance in predicted means across leave-one-out models, with lower values indicating greater stability.

\begin{table}[t!]
\centering
\begin{tabular}{l|r}
  \hline
& $\mathcal{E}(\widehat{\mu})$ \\ 
  \hline
  DIV & \textbf{0.496} \\ 
  HSIC-X & \textbf{0.985} \\ 
  DeepGMM & 5.309 \\ 
  DeepIV & 0.990 \\ 
  CF linear & 80.323 \\ 
  CF nonlinear & 154.992 \\ 
  Engression & 20.573 \\ 
   \hline
\end{tabular}
\caption{Stability measure $\mathcal{E}(\widehat{\mu})$ for different methods. Lower values indicate greater invariance across training subsets. Two best-performing methods are highlighted in bold.}
\label{tab:singlecell_stability}
\end{table}

Table \ref{tab:singlecell_stability} shows that DIV achieves the lowest stability error $\mathcal{E}(\widehat{\mu})$, exhibiting strong invariance across different training subsets. It is followed by HSIC-X and DeepIV, which are moderately stable. In contrast, methods such as DeepGMM and CF exhibit considerably higher stability errors, indicating that they are likely not capturing the causal effect reliably. For CF, this instability is likely due to the full-rank condition not being satisfied. Engression also shows highly unstable results as expected, since it aims to fit the conditional distribution of $Y|X = x$, which varies for different sets of training environments. 

While $\mathcal{E}(\widehat{\mu})$ captures stability at the level of the mean predictions, it does not account for possible changes in the shape or spread of the predicted distributions. To address this, we introduce a complementary notion of \textit{distributional stability}, which evaluates whether the entire predictive distribution remains consistent across environments. Among the considered methods, only DIV and Engression are generative and therefore allow sampling from the estimated target distribution. For each pair $e$ and $e^\prime$ of leave-one-environment-out models, we draw two samples from their respective estimated distributions $\widehat{P}_e$ and $\widehat{P}_{e^\prime}$ each, and compute the energy distance $\mathcal{D}_{\mathrm{E}}$ between them \citep{rizzo2016energy_dist}. Averaging these pairwise distances yields the distributional stability measure
$$\mathcal{E}(\widehat{P})
=\frac{2}{E(E-1)}
\sum_{e,e’}
\mathcal{D}_{\mathrm{E}}\big(\widehat{P}_e,\,\widehat{P}_{e^\prime}\big), $$
where $E$ denotes the total number of interventional environments.
Smaller values of $\mathcal{E}(\widehat{P})$ indicate that the models produce more consistent output distributions across environments.
In our experiments, DIV achieves a mean distributional stability of $0.016$ (SD$ =\!0.01$), substantially lower than $0.928$ (SD$ =\!0.03$) for engression. This indicates that DIV produces more consistent interventional distributions across environments, whereas the conditional distributions estimated by engression vary considerably with changes in the training interventions.

\section{Conclusion and future work}

In this paper, we propose a generative model-based approach, Distributional Instrumental Variable (DIV), for estimating the interventional distribution of the outcome in the presence of hidden confounding using instrumental variables. The flexibility of generative models enables the estimation of complex nonlinear causal effects without requiring the common additive noise assumption for either the treatment or response model. Furthermore, the distributional nature of our method allows for the estimation of the entire interventional distribution, rather than just the interventional mean, allowing for a richer understanding of causal effects.

We establish the identifiability of the interventional distribution for a general model class, accommodating both multivariate treatments and outcomes. Additionally, for the pre-additive noise outcome model class, we provide a novel identifiability result for the case of a binary instrument and multivariate continuous treatment---an `under-identified' setting where traditional methods such as 2SLS often fail.

The DIV method is computationally efficient, even for large-scale data. Our software implementation allows the estimation of interventional means and quantiles, while also allowing us to draw samples  from the estimated interventional distribution.

Note that our estimand is distributional, i.e.\ we model the full counterfactual distribution and thereby capture aleatoric uncertainty. Beyond the identifiability results and the consistency result provided for a simple parametric subclass, an important direction for future work is epistemic uncertainty quantification for flexible DIV implementations, for example via convergence-rate analysis and/or conformal methods, to provide uncertainty statements for causal functionals derived from the estimated interventional distribution.

Note that the current DIV approach relies on estimating the joint distribution of $(X,Y)|Z$. In many scenarios, the treatment $X$ often has significantly greater explanatory power for the outcome $Y$ than the instrument $Z$; for example, the conditional distribution of $Y|X,Z$ typically exhibits a much higher signal-to-noise ratio than that of $Y|Z$. 
A promising extension, which we explore in follow-up work, is a two-step estimation approach by first estimating the conditional distribution of $X|Z$, followed by estimating that of $Y|X,Z$.

Another potential direction is to apply the DIV methodology to distribution generalization, where the goal is not to estimate causal effects but to adapt to new environments (see, e.g.\ \citet{muandet2013generalization, christiansen2022, buhlmann2020invariance}). We believe this represents an exciting avenue for further research, with potential applications beyond causal inference.

\bibliography{biblio}
\bibliographystyle{abbrvnat}

\clearpage
\appendix

\section{Software} \label{sec:software}

The software implementation of the DIV method is available in \texttt{R} package \texttt{DistributionIV}, which is publicly available
at \url{http://CRAN.R-project.org/package=DistributionIV}. The generative neural network model is implemented using the \texttt{R}-package \texttt{torch} \citep{torch}, which also supports GPU acceleration. For moderate data sizes (up to several thousand observations and a few hundred variables), CPU-based training remains practical. For example, on a data set of size $10000$ trained for $5000$ epochs, our \texttt{R} implementation requires approximately 8 minutes on a GPU (MPS) and 12 minutes on a CPU (MacBook Pro, M1 chip).

The package provides functionality for point prediction of the interventional mean and interventional quantiles, as well as sampling from the estimated interventional distribution \(\Pint\) (see Section~\ref{subsec:estimation}). Additionally, the implementation allows for estimating conditional interventional distributions by incorporating exogenous variables into the treatment and outcome models, as described in Section~\ref{subsec:conditional_interventional}.

By default, the implementation uses high-dimensional noise with a dimension of 50 for both independent and shared noise terms, as empirical results indicate that this choice leads to estimation results that are both robust and flexible. Furthermore, we use multilayer perceptrons (MLPs) with four layers for both treatment and outcome models, with each layer containing $100$ neurons. The models are trained for $10000$ epochs with Adam optimiser and learning rate of $10^{-3}$, though empirical results suggest that for simpler scenarios, $5000$ or even fewer epochs are sufficient.

The example below demonstrates how the software can be used on a simple instrumental variable model. Note that for this particular example, the observational and the interventional distributions differ, and the DIV method aims to learn the interventional distribution.

\lstset{basicstyle=\ttfamily\footnotesize}
\begin{lstlisting}
# 1000 training & test samples
n_tr <- n_test <- 1000

# true underlying data generating process
g_lin <- function(Z, H, eps_X) return(Z + H + 0.1 * eps_X)
f_softplus <- function(X, H, eps_Y) return(log(1 + exp(X + 2 * H + eps_Y)))

# simulate observational data
eps_Xobs <- rnorm(n_tr); eps_Yobs <- rnorm(n_tr)
Zobs <- runif(n_tr, -3, 3); Hobs <- rnorm(n_tr, mean = 2)

Xobs <- g_lin(Z = Zobs, H = Hobs, eps_X = eps_Xobs)
Yobs <- f_softplus(X = Xobs, H = Hobs, eps_Y = eps_Yobs)

# simulate interventional data
eps_Xint <- rnorm(n_test); eps_Yint <- rnorm(n_test)
Zint <- runif(n_test, -3, 3)
# for generating interventional data, different H are used for X and Y
H1int <- rnorm(n_test, mean = 2); H2int <- rnorm(n_test, mean = 2)

Xint <- g_lin(Z = Zint, H = H1int, eps_X = eps_Xint)
Yint <- f_softplus(X = Xint, H = H2int, eps_Y = eps_Yint)

# fit DIV model
div_mod <- div(X = Xobs, Z = Zobs, Y = Yobs)
# predict interventional mean
predict(div_mod, Xtest = Xint, type = "mean")
predict(div_mod, Xtest = Xint, type = "quantile", quantiles = c(0.1, 0.5, 0.9))
# draw 10 samples from interventional distribution
predict(div_mod, Xtest = Xint, type = "sample", nsample = 10)
\end{lstlisting}

\section{Energy score} \label{app:energy_score}

The DIV method uses the expected negative energy score \citep{gneiting2007scoring} as a loss function to train the conditional generative model. The energy score is a scoring rule, originally used for evaluation of multivariate distributional forecasts.

The energy score is a strictly proper scoring rule, so it holds
$$\mathbb{E}_{U\sim P}[S(P,U)] \geq \mathbb{E}_{U\sim P}[S(P^\prime,U)],$$ with equality if and only if $P = P'$, ensuring that the expected score is maximised when $U$ is drawn from the true data-generating distribution $P$.

Minimization of the energy loss, which is defined as 
$$\mathcal{L}_e(P, P_0) = %
\underbrace{\mathbb{E}_{Y \sim P_0, U \sim P} \|U - Y\|}_{=:s_1} - \frac{1}{2} \underbrace{\mathbb{E}_{U, U^\prime \sim P}\|U - U^\prime\|}_{:=s_2},$$ therefore guarantees that the generated conditional distribution matches the observed conditional distribution.
The first term, $s_1$, corresponds to the prediction loss, while $s_2$ is the variation loss term, ensuring that samples from the generated distribution exhibit enough variability. The equality of both terms, $s_1 = s_2$, is a necessary condition for the energy loss to be minimised, which means that $P = P_0$. This correspondence can therefore be used as a sanity check during the model training process.

\section{Proofs of Section~\ref{sec:setting}} \label{app:setting}

\subsection{Proof of Proposition~\ref{prop:univariate_identify}}

\begin{proof}[Proof of Proposition~\ref{prop:univariate_identify}] 
The proof follows directly from Proposition~\ref{prop:ident_general} and Theorem~\ref{thm:interv_ident_general}.
\end{proof}

\subsection{Proof of Proposition~\ref{prop:binary_instrument_ident}}

\begin{proof}[Proof of Proposition~\ref{prop:binary_instrument_ident}] 
Proof relies on Proposition~\ref{prop:ident_bin_preadd}, using that $(X_j|Z=0) \noneqdist (c + X_j|Z=1)$ is sufficient for \ref{ass:d4}-\ref{ass:d5} to hold true, since for $j \in \{1, 2\}$
$$ (X_j|Z=0) \noneqdist (c + X_j|Z=1) \Rightarrow g_j(0,e_j) \neq c + g_j(1,e_j) \Rightarrow \pdv{g_j(0, e_j)}{e_j} \neq \pdv{g_j(1, e_j)}{e_j}. 
$$
\end{proof}

\section{Proofs of Section~\ref{sec:DIV_method}} \label{app:DIV_method}

\begin{proof}[Proof of Proposition~\ref{prop:popul_sol_joint}] 
Assume $P_{(X,Y)|Z}$ is induced by the SCM~\eqref{eq:IV_model}. Let $z \in \supp(Z)$ and $\varepsilon_H, \varepsilon_X, \varepsilon_Y$ standard Gaussians. We define a 4-tuple $(g^*, f^*, h_X^*, h_Y^*)$ with $$(g^*(z,h^*(\varepsilon_X, \varepsilon_H)), f^*(g^*(z,h^*(\varepsilon_X, \varepsilon_H)),h^*(\varepsilon_Y, \varepsilon_H))) \sim P^*_{(X,Y)|Z=z},$$ such that $P^*_{(X,Y)|Z=z} = P_{(X,Y)|Z=z}$ almost everywhere. Given any 4-tuple $(g^\diamond,f^\diamond,h^\diamond_X,h^\diamond_Y)$ with $$(g^\diamond(z,h^\diamond(\varepsilon_X, \varepsilon_H)), f^\diamond(g^\diamond(z,h^\diamond(\varepsilon_X, \varepsilon_H)),h^\diamond(\varepsilon_Y, \varepsilon_H))) \sim P^\diamond_{(X,Y)|Z=z},$$ following the generative model~\eqref{eq:joint_div_model}, assume there exists a subset $\mathcal{Z}^\prime \subseteq \supp(Z)$ with a non-zero base measure such that for all $z \in \mathcal{Z}^\prime$, $P^\diamond_{(X,Y)|Z=z} \neq P_{(X,Y)|Z=z}$. Then, according to the strict properness of the energy score, we have for all $z \in \mathcal{Z}^\prime$ $$\bbE_{(X,Y) \sim P_{(X,Y)|Z=z}}\bigl[ S(P^*_{(X,Y)|Z=z}, (X,Y)) \bigr] > \bbE_{(X,Y) \sim P_{(X,Y)|Z=z}}\bigl[ S(P^\diamond_{(X,Y)|Z=z}, (X,Y))\bigr].$$
Taking the expectation with respect to $P_{Z}$ then yields
$$\bbE_{P_{(X,Y,Z)}}\bigl[ S(P^*_{(X,Y)|Z}, (X,Y)) \bigr] > \bbE_{P_{(X,Y,Z)}}\bigl[ S(P^\diamond_{(X,Y)|Z}, (X,Y))\bigr].$$
Thus it holds for the expected negative energy score (which we define being the loss function):
$$\bbE_{P_{(X,Y,Z)}}\bigl[- S(P^*_{(X,Y)|Z}, (X,Y)) \bigr] < \bbE_{P_{(X,Y,Z)}}\bigl[ - S(P^\diamond_{(X,Y)|Z}, (X,Y))\bigr],$$
which concludes the proof.
    
\end{proof}

\section{Proofs of Section~\ref{sec:identifiability}} \label{app:ident_gen_mod}

\subsection{Proofs of results for general model class}

Recall  the class of structural causal models $\mathcal{M}_{\rm DIV}$ defined in the main text:
\begin{equation*}
\begin{cases}
    X_j \coloneqq g_j(Z, \eta_{X_j}), \forall \ j \in \{1,\dots, d\} \\
    Y_k \coloneqq f_k(X, \eta_{Y_k}), \forall \ k \in \{1,\dots, p\},
\end{cases}
\end{equation*}
where $Z \sim Q_Z$, $\eta_X \coloneqq (\eta_{X_1}, \dots,\eta_{X_d})$, $\eta_Y \coloneqq (\eta_{Y_1}, \dots,\eta_{Y_p})$ with $(\eta_X, \eta_Y) \sim Q_{(X,Y)}$, with $Z \in \bbR^q$ and $(\eta_X, \eta_Y)$ being independent. Further, we define $X \coloneqq (X_1, \dots, X_d), Y \coloneqq (Y_1, \dots, Y_p)$,  for all $j \in \{1,\dots, d\}: g_j \in \cG$, for all $k \in \{1,\dots, p\}: f_k \in \cF$, and $\mathcal{G} \subseteq \{g: \bbR^{q+1} \rightarrow \bbR\}$, $\mathcal{F} \subseteq \{f: \bbR^{d + 1} \rightarrow \bbR\}$ are function classes.

The identifiability refers to the uniqueness of a model that induces a single joint (conditional) distribution of $(X,Y)$ given $Z=z$. Informally, the theorem says that if two models from the class $\mathcal{M}_{\rm DIV}$ induce the same distribution of $(X,Y)$ given $Z=z$, then for all $j \in \{1, \dots, d\}$ their treatment models $g_j$, and for all $k \in \{1, \dots, p\}$ the outcome models $f_k$ are also the same for the given observed data support. Furthermore, we show distributional equality of the confounding effect $(\eta_X, \eta_Y)$.

\begin{proof}[Proof of Proposition~\ref{prop:ident_general}]
    The proof proceeds in 3 steps.
    \begin{itemize}
        \item In step I, we show identifiability of the treatment models $g_1, \dots, g_d$ along with $\eta_X$.
        \item In step II, we show identifiability of the outcome models $f_1, \dots, f_p$ along with $\eta_Y$.
        \item In step III, we combine the results from the previous two steps to conclude identifiability of the confounding effect $(\eta_X, \eta_Y)$.
    \end{itemize}

    \textit{Step I.} If for all $z \in \supp(Z)$ two models $\mathcal{M}$ and $\mathcal{M^\prime}$ from $\mathcal{M}_{DIV}$ induce the same joint (conditional) distribution of $(X,Y)$ given $Z=z$, then also the same marginal (conditional) distribution of $X$ given $Z=z$. 
    
    Fix $j \in \{1,\dots d\}$. The distributional equality implies \begin{equation}\label{eq:distr_eq_stepI_gen}
        g_j(z, \eta_{X_j}) \eqdist \tilde{g}_j(z, \tilde{\eta}_{X_j}).
    \end{equation}
    Then, for all $x \in \supp(X)$ it directly follows
    $$ P(g_j(z, \eta_{X_j}) \leq x) = P(\tilde{g}_j(z, \tilde{\eta}_{X_j}) \leq x). $$
    Since $g_j(z,\cdot), \tilde{g}_j(z,\cdot)$ are strictly monotone, this is equivalent to
    $$ P(\eta_{X_j} \leq g^{-1}_j(z, x)) = P(\tilde{\eta}_{X_j} \leq \tilde{g}^{-1}_j(z, x)). $$
    Without loss of generality assume $\eta_{X_j}, \tilde{\eta}_{X_j} \sim N(0,1)$, for all $x \in \supp(X_j)$ this results in
    $$g^{-1}_j(z,x) = \tilde{g}^{-1}_j(z,x)$$
     and thus for all $e_X \in \mathbb{R}$
    \begin{equation}\label{eq:g_equal}
        g_j(z,e_X) = \tilde{g}_j(z,e_X).
    \end{equation} 
    Next, from the distributional equality as stated in \eqref{eq:distr_eq_stepI_gen}, and \eqref{eq:g_equal}, we also have that
    $$(g_1(z, \eta_{X_1}), \dots, g_d(z, \eta_{X_d})) \eqdist (g_1(z, \tilde{\eta}_{X_1}), \dots, g_d(z, \tilde{\eta}_{X_d})).$$ Since $g_1(z,\cdot), \dots, g_d(z,\cdot)$ are strictly monotone, this is equivalent to
    \begin{equation}\label{eq:eta_equal}
        (\eta_{X_1}, \dots, \eta_{X_d}) \eqdist (\tilde{\eta}_{X_1}, \dots, \tilde{\eta}_{X_d}).
    \end{equation}

    \textit{Step II.} If two models $\mathcal{M}$ and $\mathcal{M^\prime}$ from $\mathcal{M}_{DIV}$ induce the same joint (conditional) distribution of $(X,Y)$ given $Z=z$ for all $z \in \supp(Z)$, then also the same conditional distribution of $Y$ given $X=x$, $Z=z$ for all $z \in \supp(Z)$ and $x \in \supp(X)$.
    Since $g_1, \dots, g_d$ are strictly monotone, the event $$X = x, Z = z$$ is equivalent to the event $$(X_1, \dots, X_d) = (x_1, \dots,x_d), (\eta_{X_1}, \dots, \eta_{X_d})  = (g^{-1}_1(z,x_1), \dots, g^{-1}_d(z,x_d)) := v.$$
    Fix $k \in \{1,\dots, p\}$. We now consider $f_k(x, \eta_{k, v})$ and $\tilde{f}_k(x, \tilde{\eta}_{k, v})$, where $\eta_{k, v} \eqdist (\eta_{Y_k}|\eta_X=v$), $\tilde{\eta}_{k,v} \eqdist (\tilde{\eta}_{Y_k}|\eta_X=v)$. For all $x \in \supp(X)$ and $v \in \{(g^{-1}_1(z,x_1), \dots, g^{-1}_d(z,x_d))|z \in \supp(Z)\}$, the distributional equality above implies $$f_k(x, \eta_{k, v}) \eqdist \tilde{f}_k(x, \tilde{\eta}_{k,v}),$$
    from which for all $y \in \supp(Y_k)$ it follows 
    \begin{equation}\label{eq:cdf_equal}
        F_{Y_k|X,\eta_X}(y|x,v) = F_{\tilde{Y}_k|X,\eta_X}(y|x,v),
    \end{equation}
    with $F$ being the corresponding conditional CDF.

    The conditional CDF of $Y_k$ given $X$ and $\eta_X$ can be written as
    $$F_{Y_k|X,\eta_X}(y|x,v) = \int \mathds{1}(f_k(x,e) \leq y) p_{\eta_{Y_k}|\eta_X}(e|v)de$$
    based on the conditional independence statement $\eta_Y \ind X|\eta_{X_k}$. This follows from the assumption of the instrument $Z$ being jointly independent of the noise $(\eta_X, \eta_Y)$ (e.g.\ \citet{saengkyongam2024identifying}, Lemma 8).

    The left-hand side is only defined in $x \in \supp(X)$, $v \in \{(g^{-1}_1(z,x_1), \dots, g^{-1}_d(z,x_d))|z \in \supp(Z)\}$. Using assumption \ref{ass:4}, we have $\{(g^{-1}_1(z,x_1), \dots, g^{-1}_d(z,x_d))|z \in \supp(Z)\} = \supp(\eta_X)$, and thus we can integrate out $v$ with respect to the marginal distribution of $\eta_X$ as follows:
    
    $$\int F_{Y_k|X,\eta_X}(y|x,v)p_{\eta_X}(v)dv = \int \int \mathds{1}(f_k(x,e) \leq y) p_{\eta_{Y_k}|\eta_X}(e|v)p_{\eta_X}(v)dedv = \int \mathds{1}(f_k(x,e) \leq y)p_{\eta_{Y_k}}(e)de.$$
    From this, combined with \eqref{eq:cdf_equal}, for all $x \in \supp(X)$ and $y \in \supp(Y_k)$ it follows
    $$\int \mathds{1}(f_k(x,e) \leq y)p_{\eta_{Y_k}}(e)de = \int \mathds{1}(\tilde{f}_k(x,e) \leq y)p_{\tilde{\eta}_{Y_k}}(e)de,$$
    which can be written as
    $$P(f_k(x,\eta_{Y_k}) \leq y) = P(\tilde{f}_k(x, \tilde{\eta}_{Y_k}) \leq y).$$
    Since $f_k(x,\cdot)$ and $\tilde{f}_k(x,\cdot)$ are both strictly monotone (by assumption \ref{ass:2}), this is equivalent to
    $$P(\eta_{Y_k} \leq f^{-1}_k(x, y)) = P(\tilde{\eta}_{Y_k} \leq \tilde{f}^{-1}_k(x, y)).$$
    Without loss of generality assume $\eta_{Y_k}, \tilde{\eta}_{Y_k} \sim N(0,1)$, this results in
    $$f^{-1}_k(x,y) = \tilde{f}^{-1}_k(x,y)$$
    and thus for all $x \in \supp(X)$ and $e_Y \in \mathbb{R}^p$
    \begin{equation} \label{eq:f_equal}
        f_k(x,e_Y) = \tilde{f}_k(x,e_Y).
    \end{equation}

    \textit{Step III.} Fix $k \in \{1,\dots, p\}.$ Using that the two models $\mathcal{M}$ and $\mathcal{M^\prime}$ induce the same conditional distribution of $Y$ given $X$ and $Z$ and \eqref{eq:f_equal}, it follows for all $x \in \supp(X)$ and $v \in \supp(\eta_X)$ (by assumption \ref{ass:4}) that 
    $$(f_1(x,\eta_{1, v}), \dots, f_p(x, \eta_{p, v})) \eqdist (f_1(x, \tilde{\eta}_{1, v}), \dots, f_p(x, \tilde{\eta}_{p, v})).$$
    Since $f_1(x, \cdot), \dots, f_p(x, \cdot)$ are strictly monotone, this leads to
    $$(\eta_{1, v}, \dots, \eta_{p, v}) \eqdist (\tilde{\eta}_{1, v}, \dots, \tilde{\eta}_{p, v}).$$
    With this distributional equality and \eqref{eq:eta_equal}, it follows $(\eta_X, \eta_Y) \eqdist (\tilde{\eta}_X, \tilde{\eta}_Y)$, and thus we conclude the identifiability of the confounding effect.
\end{proof}

Next, we present the proof for Theorem~\ref{thm:interv_ident_general} showing the identifiability of the interventional distribution $\Pint$.

\begin{proof}[Proof of Theorem~\ref{thm:interv_ident_general}]
    From Proposition~\ref{prop:ident_general} (using assumptions \ref{ass:1}, \ref{ass:3} and \ref{ass:4}), for all $k \in \{1, \dots, p\}$, $x \in \supp(X)$ and $y_k \in \supp(Y_k)$, the probability $P(f_k(x,\eta_{Y_k}) \leq y_k)$ is identifiable from the observed distribution $P_{(X,Y)|Z}$.
    Therefore, for any $(y_1,\dots,y_p) \in \supp(Y)$,
    \[
    P_Y^{\do(X:=x)}(Y_1\le y_1,\dots,Y_p\le y_p)
    =P\!\big(f_1(x,\eta_{Y_1})\le y_1,\dots,f_p(x,\eta_{Y_p})\le y_p\big),
    \]
    which is identifiable from $P_{(X,Y)\mid Z}$ and equals the CDF of $P_Y^{\mathrm{do}(X:=x)}$.
\end{proof}

\subsection{Proofs of results for pre-ANM model class} \label{app:ident_pre_ANM}

We now recall the class of pre-additive noise models $\mathcal{M}_{\rm DIV}^{\pre}$ as defined in the main text:

\begin{equation*}
\begin{cases}
    X_j \coloneqq g_j(Z, \eta_{X_j}), \forall \ j \in \{1,\dots, d\} \\
    Y_k \coloneqq f_k(X^\top \beta_k + \eta_{Y_k}), \forall \ k \in \{1,\dots, p\},
\end{cases}
\end{equation*}
where $Z \sim Q_Z$, $\eta_X \coloneqq (\eta_{X_1}, \dots,\eta_{X_d})$, $\eta_Y \coloneqq (\eta_{Y_1}, \dots,\eta_{Y_p})$ with $(\eta_X, \eta_Y) \sim Q_{(X,Y)}$, with $Z$ and $(\eta_X, \eta_Y)$ being independent. Further, we define $X \coloneqq (X_1, \dots, X_d), \beta_k = (1, \beta_{k,2}, \dots, \beta_{k,d}), Y \coloneqq (Y_1, \dots, Y_p)$,  for all $j \in \{1,\dots, d\}: g_j \in \tilde{\cG}$, for all $k \in \{1,\dots, p\}: f_k \in \tilde{\cF}$, and $\tilde{\mathcal{G}} \subseteq \{g: \bbR \rightarrow \bbR\}$, $\tilde{\mathcal{F}} \subseteq \{f: \bbR \rightarrow \bbR\}$ are function classes.

\subsubsection{Continuous instrument}

\begin{proof}[Proof of Proposition~\ref{prop:ident_cont_preadd}]
   Analogously to the proof of Proposition~\ref{prop:ident_general}, the proof proceeds in 3 steps.
    
    \textit{Step I.} As in Step I in the proof of Proposition~\ref{prop:ident_general}, we have for all $j \in \{1,\dots ,d\}$, $z \in \supp(Z)$ and $e_X \in \bbR^d$ that $$g_j(z, e_X) = \tilde{g}_j(z,e_X).$$ 
    
    Furthermore, we have that
    \begin{equation}\label{eq:eta_equal2}
        (\eta_{X_1}, \dots, \eta_{X_d}) \eqdist (\tilde{\eta}_{X_1}, \dots, \tilde{\eta}_{X_d}).
    \end{equation}

    \textit{Step II}. If two models $\mathcal{M}$ and $\mathcal{M^\prime}$ from $\mathcal{M}_{\rm DIV}$ induce the same joint (conditional) distribution of $(X,Y)$ given $Z=z$ for all $z \in \supp(Z)$, then also the same conditional distribution of $Y$ given $X_j = Q_{\gamma^j}(X_j|Z=z)$ for all $j \in \{1, \dots, d\}$, $Z=z$ for all $z \in \supp(Z)$ and a fix $\gamma \in [0,1]$.
    Since for all $j \in \{1,\dots, d\}$, $g_j$ is strictly monotone, for a fix $j$ the event 
    \begin{align*}
        X_j = Q_{\gamma^j}(X_j|Z=z) &= Q_{\gamma^j}(g_j(Z, \eta_{X_j})|Z=z) \overset{Z \ind \eta_{X_j}}{=} Q_{\gamma^j}(g_j(z, \eta_{X_j})) \\ 
        &=g_j(z, Q_{\gamma^j}(\eta_{X_j}))
    \end{align*}
    is equivalent to the event 
    $$\eta_{X_j} = Q_{\gamma^j}(\eta_{X_j}) =: e_j.$$
    Now, we define $e_{\texttt{2:d}} \coloneqq (e_2,\dots,e_d)$ and $\bar{g}: (z, e) \mapsto (g_2(z, e_2), \dots, g_d(z, e_d))$. Further, for all $k \in \{1,\dots,p\}$, define $\beta_{k,\texttt{2:d}} \coloneqq (\beta_{k,2}, \dots, \beta_{k,d})$ and $\tilde{\beta}_{k,\texttt{2:d}} \coloneqq (\tilde{\beta}_{k,2}, \dots, \tilde{\beta}_{k,d})$.
    Fix $k \in \{1,\dots,p\}$, we now consider $f_k(g_1(z, e_1) + \bar{g}(z,e_{2\texttt{+}})^\top \beta_{k,\texttt{2:d}} + \eta_{k,e}^*)$ and 
    $\tilde{f}_k(g_1(z, e_1) + \bar{g}(z,e_{2\texttt{+}})^\top \tilde{\beta}_{k,\texttt{2:d}} + \tilde{\eta}_{k,e}^*)$, where $\eta_{k,e}^* \eqdist (\eta_{Y_k}|\eta_X=e)$, $\tilde{\eta}_{k,e}^* \eqdist (\tilde{\eta}_{Y_k}|\eta_X=e)$. The distributional equality above implies
    $$
    f_k(g_1(z, e_1) + \bar{g}(z,e_{\texttt{2:d}})^\top \beta_{k,\texttt{2:d}} + \eta_{k,e}^*)
    \eqdist
    \tilde{f}_k(g_1(z, e_1) + \bar{g}(z,e_{\texttt{2:d}})^\top \tilde{\beta}_{k,\texttt{2:d}} + \tilde{\eta}_{k,e}^*)
    $$
    We assume $(\eta_{X_j},\eta_Y)$ being jointly independent of $Z$ and absolutely continuous with respect to the Lebesgue measure. From this, it directly follows $\eta_{k,e}^*, \tilde{\eta}_{k,e}^*$ are independent of $Z$ (e.g.\ \citet{saengkyongam2024identifying}, Lemma 8) and absolutely continuous with respect to the Lebesgue measure, so that there exist strictly monotone functions $h_{k,e}$, $\tilde{h}_{k,e}$ such that $\eta_{k,e}^* \eqdist h_{k,e}(\varepsilon_Y)$ and $\tilde{\eta}_{k,e}^* \eqdist \tilde{h}_{k,e}(\varepsilon_Y)$ with $\varepsilon_Y \sim \unif[0,1]$.

    Since distributional equality induces the equality of all quantiles, and due to strict monotonicity of $f_k$, $\tilde{f}_k$, for all $e_Y \in [0,1]$ it then holds: 
    \begin{equation}\label{eq:step2_pre-ANM_cont}
        f_k(g_1(z, e_1) + \bar{g}(z,e_{\texttt{2:d}})^\top \beta_{k,\texttt{2:d}} + h_{k,e}(e_Y))
    =
    \tilde{f}_k(g_1(z, e_1) + \bar{g}(z,e_{\texttt{2:d}})^\top \tilde{\beta}_{k,\texttt{2:d}} + \tilde{h}_{k,e}(e_Y))
    \end{equation}
    Since $f_k$ strictly monotone, we take the inverse of it on both sides of \eqref{eq:step2_pre-ANM_cont}:
        \begin{equation} \label{eq:step2_inverse_f}
        g_1(z, e_1) + \bar{g}(z,e_{\texttt{2:d}})^\top \beta_{k,\texttt{2:d}} + h_{k,e}(e_Y) = \underbrace{f_k^{-1}\tilde{f_k}}_{:=\phi_k}(g_1(z, e_1) + \bar{g}(z,e_{\texttt{2:d}})^\top \tilde{\beta}_{k,\texttt{2:d}} + \tilde{h}_{k,e}(e_Y)).
    \end{equation}
    Taking partial derivative on both sides with respect to $e_1$ and $e_Y$, yields
    \begin{align}
        \pdv{g_1(z, e_1)}{e_1} + \pdv{h_{k,e}(e_Y)}{e_1} &= \phi_k^\prime(g_1(z, e_1) + \bar{g}(z,e_{\texttt{2:d}})^\top \tilde{\beta}_{k,\texttt{2:d}} + \tilde{h}_{k,e}(e_Y))(\pdv{g_1(z, e_1)}{e_1} + \pdv{\tilde{h}_{k,e}(e_Y)}{e_1}) \label{eq:step2_pdv_e1} \\
        \pdv{h_{k,e}(e_Y)}{e_Y} &= \phi_k^\prime(g_1(z, e_1) + \bar{g}(z,e_{\texttt{2:d}})^\top \tilde{\beta}_{k,\texttt{2:d}} + \tilde{h}_{k,e}(e_Y))\pdv{\tilde{h}_{k,e}(e_Y)}{e_Y} \label{eq:step2_pdv_eY}
    \end{align}
    Since $h_{k,e}$ is strictly monotone, 
    we substitute \eqref{eq:step2_pdv_e1} with \eqref{eq:step2_pdv_eY} and rearrange the terms:
    \begin{equation}\label{eq:step2_g1_nonlinear}
        \pdv{g_1(z, e_1)}{e_1} \bigg(\pdv{\tilde{h}_{k,e}(e_Y)}{e_Y} - \pdv{h_{k,e}(e_Y)}{e_Y}\bigg) = \pdv{h_{k,e}(e_Y)}{e_Y}\pdv{\tilde{h}_{k,e}(e_Y)}{e_1} - \pdv{h_{k,e}(e_Y)}{e_1} \pdv{\tilde{h}_{k,e}(e_Y)}{e_Y}.
    \end{equation}
    Since the right-hand side of \eqref{eq:step2_g1_nonlinear} does not depend on $z$, and relying on the assumption \ref{ass:c4}, we have that
    \begin{equation}\label{eq:step2_equal_h}
        \pdv{\tilde{h}_{k,e}}{e_Y} = \pdv{h_{k,e}}{e_Y}.
    \end{equation}
    Plugging \eqref{eq:step2_equal_h} in \eqref{eq:step2_pdv_eY} and using that $\tilde{h}_{k,e}$ is strictly monotone yields
    \begin{equation}\label{eq:step2_equal_phi}
        1 = \phi_k^\prime(g_1(z, e_1) + \bar{g}(z,e_{\texttt{2:d}})^\top \tilde{\beta}_{k,\texttt{2:d}} + \tilde{h}_{k,e}(e_Y)).
    \end{equation}
    Then, from \eqref{eq:step2_equal_phi}, we have for all $w \in \{x^\top \beta_k + e_Y \mid x \in \supp(X),  e_Y \in \supp(\eta_{Y_k})\}$
    \begin{equation} \label{eq:f_equality}
        f_k(w + \tilde{c}) = \tilde{f}_k(w),
    \end{equation}
    where $\tilde{c} \in \mathbb{R}$ is a constant.
    Next, we plug \eqref{eq:f_equality} in \eqref{eq:step2_pre-ANM_cont} and get
    \begin{equation*}
        f_k(g_1(z, e_1) + \bar{g}(z,e_{\texttt{2:d}})^\top \beta_{k,\texttt{2:d}} + h_{k,e}(e_Y)) = f_k(g_1(z, e_1) + \bar{g}(z,e_{\texttt{2:d}})^\top \tilde{\beta}_{k,\texttt{2:d}} + \tilde{h}_{k,e}(e_Y) + \tilde{c}).
    \end{equation*}
    Using that $f_k$ is strictly monotone, we then have
    \begin{equation*}
        \bar{g}(z,e_{\texttt{2:d}})^\top \beta_{k,\texttt{2:d}} + h_{k,e}(e_Y) = \bar{g}(z,e_{\texttt{2:d}})^\top \tilde{\beta}_{k,\texttt{2:d}} + \tilde{h}_{k,e}(e_Y) + \tilde{c}.
    \end{equation*}
    We take the derivative on both sides with respect to $z$, yielding
    \begin{equation}\label{eq:step2:Jg_bk}
        \mathbf{J}_{g}(z, e_{\texttt{2:d}}) (\beta_{k,\texttt{2:d}} - \tilde{\beta}_{k,\texttt{2:d}}) = 0,
    \end{equation}
    where $\mathbf{J}_{g}(z, e_{\texttt{2:d}}) \coloneqq \begin{bmatrix}
        \pdv{g_2(z, e_2)}{z_1} & \dots & \pdv{g_d(z, e_d)}{z_1} \\
        \vdots & \ddots & \vdots \\
        \pdv{g_2(z, e_2)}{z_q} & \dots & \pdv{g_d(z, e_d)}{z_q}
    \end{bmatrix}$.
    Using assumption~\ref{ass:c5}, we can then conclude from \eqref{eq:step2:Jg_bk} that $\beta_{k,\texttt{2:d}} = \tilde{\beta}_{k,\texttt{2:d}}$, and thus
    \begin{equation}\label{eq:step2_beta_equal}
         \beta_k = \tilde{\beta}_k.
    \end{equation}
    
    Next, define $\eta_{X_{\texttt{2:d}}} \coloneqq (\eta_{X_2}, \dots ,\eta_{X_d})$. For all $z \in \supp(Z)$ (and a fix $k$), the distributional equality of $Y$ given $Z=z$ is induced by
    $$f_k(g_1(z, \eta_{X_1}) + \bar{g}(z,\eta_{X_{\texttt{2:d}}})^\top \beta_{k,\texttt{2:d}} + \eta_{Y_k}) \eqdist \tilde{f}_k(g_1(z, \eta_{X_1}) + \bar{g}(z,\eta_{X_{\texttt{2:d}}})^\top \tilde{\beta}_{k,\texttt{2:d}} + \tilde{\eta}_{Y_k}).$$
    Combining this with \eqref{eq:f_equality} and \eqref{eq:step2_beta_equal}, we get
    $$
        f_k(g_1(z, \eta_{X_1}) + \bar{g}(z,\eta_{X_{\texttt{2:d}}})^\top \beta_{k,\texttt{2:d}} + \eta_{Y_k}) \eqdist f_k(g_1(z, \eta_{X_1}) + \bar{g}(z,\eta_{X_{\texttt{2:d}}})^\top \beta_{k,\texttt{2:d}} + \tilde{\eta}_{Y_k} + \tilde{c}).
    $$
    From this, since $f_k$ is strictly monotone, it holds 
    $$g_1(z, \eta_{X_1}) + \bar{g}(z,\eta_{X_{\texttt{2:d}}})^\top \beta_{k,\texttt{2:d}} + \eta_{Y_k} \eqdist g_1(z, \eta_{X_1}) + \bar{g}(z,\eta_{X_{\texttt{2:d}}})^\top \beta_{k,\texttt{2:d}} + \tilde{\eta}_{Y_k} + \tilde{c}$$ 
    and therefore $$\eta_{Y_k} \eqdist \tilde{\eta}_{Y_k} + \tilde{c}.$$
    
    Without loss of generality assume $\eta_{Y_k}, \tilde{\eta}_{Y_k}$ having median zero (assumption \ref{ass:c3}), it follows $\tilde{c} = 0$, and thus for all $w \in \{x^\top \beta_k + e_Y \mid x \in \supp(X),  e_Y \in \supp(\eta_{Y_k})\}$ it holds 
    \begin{equation} \label{eq:f_k_equal_pre_ANM}
        f_k(w) = \tilde{f}_k(w).
    \end{equation}

    \textit{Step III.} Using that two models $\mathcal{M}$ and $\mathcal{M^\prime}$ from $\mathcal{M}_{\rm DIV}^{\pre}$ induce the same conditional distribution of $Y$ given $X$ and $Z$ and \eqref{eq:f_k_equal_pre_ANM}, it follows for all $z \in \supp(Z)$ and $e \in \supp(\eta_X)$ that
    \begin{align*}
        &(f_1(g_1(z, e_1) + \bar{g}(z,e_{\texttt{2:d}})^\top \beta_{k,\texttt{2:d}} + \eta_{1,e}^*), \dots, f_p(g_1(z, e_1) + \bar{g}(z,e_{\texttt{2:d}})^\top \beta_{k,\texttt{2:d}} + \eta_{p,e}^*) \\ 
        \eqdist \ &(f_1(g_1(z, e_1) + \bar{g}(z,e_{\texttt{2:d}})^\top \beta_{k,\texttt{2:d}} + \tilde{\eta}_{1,e}^*), \dots, f_p(g_1(z, e_1) + \bar{g}(z,e_{\texttt{2:d}})^\top \beta_{k,\texttt{2:d}} + \tilde{\eta}_{p,e}^*).
    \end{align*}
    Since $f_1, \dots, f_p$ are strictly monotone, it follows
    $$(\eta_{1,e}^*, \dots, \eta_{p,e}^*) \eqdist (\tilde{\eta}_{1, e}^*, \dots, \tilde{\eta}_{p, e}^*)$$
    for any fix $e \in \supp(\eta_X)$, and with this we recover the marginal (conditional) distribution of $\eta_Y|\eta_X$.
    With this and \eqref{eq:eta_equal2}, it follows $(\eta_X, \eta_Y) \eqdist (\tilde{\eta}_X, \tilde{\eta}_Y)$, and thus we conclude the identifiability of the confounding effect.
    
\end{proof}

Next, we present the proof for Theorem~\ref{thm:interv_ident_cont_preadd} showing the identifiability of the interventional distribution $\Pint$.

\begin{proof}[Proof of Theorem~\ref{thm:interv_ident_cont_preadd}]
        From Proposition~\ref{prop:ident_cont_preadd}, the functions $f_1, \dots, f_p$ and the distribution of the noise $(\eta_{Y_1}, \dots, \eta_{Y_p})$ are identifiable from the observed distribution $P_{(X,Y)|Z}$. We can then identify the following: 
        $$P^{\mathrm{do}(X:=x)}(Y_1 \leq y_1, \dots, Y_p \leq y_p) = P(f_1(x^\top \beta_1 + \eta_{Y_1}) \leq y_1, \dots, f_p(x^\top \beta_p + \eta_{Y_p}) \leq y_p),$$
        which is the CDF of the required interventional distribution $P_Y^{\mathrm{do}(X:=x)}$.
    \end{proof}

\subsubsection{Discrete instrument}

\begin{proof}[Proof of Proposition~\ref{prop:ident_bin_preadd}]
        The proof proceeds in 3 steps.
    \begin{itemize}
        \item In step I, we show identifiability of the treatment models $g_1, \dots, g_d$.
        \item In step II, we show identifiability of the outcome models $f_1, \dots, f_p$.
        \item In step III, we combine the results from the previous two steps to conclude identifiability of the confounding effect $(\eta_X, \eta_Y)$.
    \end{itemize}

    \textit{Step I.} 
        As in Step I in the proof of Proposition~\ref{prop:ident_cont_preadd}, we have for all $j \in \{1,\dots ,d\}$, $z \in \supp(Z)$ and $e_X \in \bbR^d$ that $$g_j(z, e_X) = \tilde{g}_j(z,e_X).$$ 
    
    Furthermore, we have that
    \begin{equation}
        (\eta_{X_1}, \dots, \eta_{X_d}) \eqdist (\tilde{\eta}_{X_1}, \dots, \tilde{\eta}_{X_d}).
    \end{equation}

    \textit{Step II.}
    As in Step II in the proof of Proposition~\ref{prop:ident_cont_preadd} (relying on assumption \ref{ass:d4}), we have for all $w \in \{x^\top \beta_k + e_Y \mid x \in \supp(X),  e_Y \in \supp(\eta_{Y_k})\}$ (see \eqref{eq:f_equality})
    \begin{equation} \label{eq:f_equality_Z_cat}
        f_k(w + c) = \tilde{f}_k(w),
    \end{equation}
     with $c \in \mathbb{R}$ being a constant, and using that $f_k$ is strictly monotone, we then get
    \begin{equation*}
        \bar{g}(z,e_{\texttt{2:d}})^\top \beta_{k,\texttt{2:d}} + h_{k,e}(e_Y) = \bar{g}(z,e_{\texttt{2:d}})^\top \tilde{\beta}_{k,\texttt{2:d}} + \tilde{h}_{k,e}(e_Y) + c.
    \end{equation*}
    For any $z_1, z_2 \in \supp(Z)$, we therefore have
    \begin{equation*}
        (\bar{g}(z_1,e_{\texttt{2:d}}) - \bar{g}(z_2,e_{\texttt{2:d}}))^\top (\beta_{k,\texttt{2:d}} - \tilde{\beta}_{k,\texttt{2:d}}) = 0.
    \end{equation*}
    We now take the derivative on both sides with respect to $e_{\texttt{2:d}}$, yielding \\
    \begin{equation}\label{eq:step2:Jg_bk_e2}
        \tilde{\mathbf{J}}_{g}(z, e_{\texttt{2:d}}) (\beta_{k,\texttt{2:d}} - \tilde{\beta}_{k,\texttt{2:d}}) = 0,
    \end{equation}
    where $\tilde{\mathbf{J}}_{g}(z, e_{\texttt{2:d}}) \coloneqq \begin{bmatrix}
        \pdv{(g_2(z_1, e_2) - g_2(z_2, e_2))}{e_2} & \dots & 0 \\
        \vdots & \ddots & \vdots \\
        0 & \dots & \pdv{(g_d(z_1, e_d) - g_d(z_2, e_d))}{e_d}
    \end{bmatrix}$. \\

    Using assumption~\ref{ass:d5}, we can argue that the left-hand side of \eqref{eq:step2:Jg_bk_e2} depends on $z$, while the right-hand side does not depend on $z$. From this, it follows that the only way for \eqref{eq:step2:Jg_bk_e2} to hold for all $z \in \supp(Z)$ is if $\beta_{k,\diamond} = \tilde{\beta}_{k,\diamond}$, and thus we conclude
    \begin{equation}\label{eq:step2_beta_equal_Z_cat}
         \beta_k = \tilde{\beta}_k.
    \end{equation}
    The remaining part of Step II (compare argumentation following \eqref{eq:step2_beta_equal}) proceeds with the same reasoning as previously established in the proof of Proposition~\ref{prop:ident_cont_preadd}. Thus for all $w \in \{x^\top \beta_k + e_Y \mid x \in \supp(X),  e_Y \in \supp(\eta_{Y_k})\}$ it holds 
    \begin{equation} \label{eq:f_k_equal_pre_ANM_Z_cat}
        f_k(w) = \tilde{f}_k(w).
    \end{equation}
    \textit{Step III.} 
        Analogously to Step III in the proof of Proposition~\ref{prop:ident_cont_preadd}, we have $(\eta_X, \eta_Y) \eqdist (\tilde{\eta}_X, \tilde{\eta}_Y)$, and thus we conclude the identifiability of the confounding effect.      
    \end{proof}

The proof for Theorem~\ref{thm:interv_ident_bin_preadd} showing the identifiability of the interventional distribution $\Pint$ if the instrument $Z$ is discrete is exactly the same as for Theorem~\ref{thm:interv_ident_cont_preadd}.

\subsection{Consistency} \label{app:consistency}

We provide a self-contained proof of Theorem~\ref{thm:consistency_linear} in a simple univariate linear
parametric class by combining (i) strict propriety of the energy score and (ii) uniform convergence of the
empirical objective over a compact parameter set.

We consider the univariate linear model with a latent confounder,
\begin{equation*}
X = aZ + gH + \varepsilon_X,
\qquad
Y = bX + dH + \varepsilon_Y,
\qquad
Z \ind (H,\varepsilon_X,\varepsilon_Y),
\end{equation*}
with parameter $\theta=(a,b,g,d)\in\Theta\subset\mathbb R^4$.
Define
\[
W := (Z,H,\varepsilon_X,\varepsilon_Y)^\top\in\mathbb R^4,
\qquad
U := (X,Y)^\top\in\mathbb R^2.
\]
Then $U=A_\theta W$ with
\begin{equation}\label{eq:U=AW_cons}
A_\theta :=
\begin{pmatrix}
a & g & 1 & 0 \\
ba & bg+d & b & 1
\end{pmatrix}\in\mathbb R^{2\times 4}.
\end{equation}
For fixed $z$, let $W(z):=(z,H,\varepsilon_X,\varepsilon_Y)^\top$ and $U_\theta(z):=A_\theta W(z)$, and denote the
model-implied conditional law by $P_{U\mid Z=z}^\theta:=\mathcal L(U_\theta(z))$.

Given an observation $v:=(x,y)^\top\in\mathbb R^2$, the energy score is
\[
\mathrm{ES}(P,v):=\mathbb E\|U-v\|-\frac12\,\mathbb E\|U-U'\|,
\qquad U,U'\stackrel{i.i.d.}{\sim}P.
\]
Define the pointwise expected score and the population / empirical objectives
\begin{align}
\ell_\theta(x,y,z)
&:= \mathbb E\|U_\theta(z)-v\|-\frac12\,\mathbb E\|U_\theta(z)-U_\theta'(z)\|,
\qquad v=(x,y)^\top, \label{eq:ell_cons}\\
L(\theta) &:= \mathbb E[\ell_\theta(X,Y,Z)],
\qquad
\hat L_n(\theta):=\frac1n\sum_{i=1}^n \ell_\theta(X_i,Y_i,Z_i). \label{eq:Lhat_cons}
\end{align}
The estimator is $\hat\theta_n\in\arg\min_{\theta\in\Theta}\hat L_n(\theta)$.

We verify the continuity and uniform convergence assumptions in Theorem~\ref{thm:consistency_linear} under
simple boundedness and compactness conditions.

Let the linear model in \eqref{eq:lin_iv_consistency} satisfy the following boundedness and compactness conditions.
There exist constants $M\ge 1$ and $R>0$ such that:
\begin{enumerate}[label=(UB\arabic*)]
\item\label{ass:Theta_compact}
(Compact parameter set.) $\Theta=[-M,M]^4$.
\item\label{ass:bounded_supports}
(Bounded supports of exogenous variables.)
\[
\supp(Z)\subset[-R,R],\quad
\supp(H)\subset[-R,R],\quad
\supp(\varepsilon_X)\subset[-R,R],\quad
\supp(\varepsilon_Y)\subset[-R,R].
\]
\end{enumerate}

\begin{lemma}[Uniform bounds]\label{lem:bounds_cons}
Under Assumptions~\ref{ass:Theta_compact}--\ref{ass:bounded_supports}, for all $z\in\supp(Z)$,
\[
\|W(z)\|_2\le 2R.
\]
Moreover, there exists $B_U<\infty$ such that for all $\theta\in\Theta$ and all $z\in\supp(Z)$,
\[
\|U_\theta(z)\|_2\le B_U.
\]
\end{lemma}

\begin{proof}
Since $W(z)\in[-R,R]^4$, we have $\|W(z)\|_2\le \sqrt{4R^2}=2R$.
Also $\|U_\theta(z)\|_2=\|A_\theta W(z)\|_2\le \|A_\theta\|_{\mathrm{op}}\|W(z)\|_2\le \|A_\theta\|_F\|W(z)\|_2$.
Because $\Theta$ is compact and $\theta\mapsto \|A_\theta\|_F$ is continuous, $\sup_{\theta\in\Theta}\|A_\theta\|_F<\infty$,
so we may take $B_U:=2R\sup_{\theta\in\Theta}\|A_\theta\|_F$.
\end{proof}

\begin{lemma}[Lipschitzness of $\theta\mapsto A_\theta$ in Frobenius norm]\label{lem:A_lip_cons}
Under Assumptions~\ref{ass:Theta_compact}--\ref{ass:bounded_supports}, there exists $C_A<\infty$ such that for all $\theta,\theta'\in\Theta$,
\[
\|A_\theta-A_{\theta'}\|_F \le C_A\,\|\theta-\theta'\|_2.
\]
One valid choice is $C_A:=\sqrt{3+13M^2}$.
\end{lemma}

\begin{proof}
Write $\theta=(a,b,g,d)$ and $\theta'=(a',b',g',d')$. From \eqref{eq:U=AW_cons},
\[
A_\theta-A_{\theta'}=
\begin{pmatrix}
a-a' & g-g' & 0 & 0\\
ba-b'a' & (bg+d)-(b'g'+d') & b-b' & 0
\end{pmatrix}.
\]
For the linear entries, $|a-a'|,|g-g'|,|b-b'|\le \|\theta-\theta'\|_2$.
For the bilinear term,
\[
|ba-b'a'|=|b(a-a')+a'(b-b')|
\le |b||a-a'|+|a'||b-b'|
\le 2M\|\theta-\theta'\|_2.
\]
For $(bg+d)-(b'g'+d')$,
\[
|(bg+d)-(b'g'+d')|
\le |b(g-g')|+|g'(b-b')|+|d-d'|
\le (M+M+1)\|\theta-\theta'\|_2
\le 3M\|\theta-\theta'\|_2,
\]
using $M\ge1$.
Thus the five possibly nonzero entries are bounded by
$1,1,2M,3M,1$ times $\|\theta-\theta'\|_2$, so
\[
\|A_\theta-A_{\theta'}\|_F^2
\le (1^2+1^2+(2M)^2+(3M)^2+1^2)\,\|\theta-\theta'\|_2^2
=(3+13M^2)\,\|\theta-\theta'\|_2^2.
\]
\end{proof}

\begin{lemma}[Uniform Lipschitzness of $U_\theta(z)$ in $\theta$]\label{lem:U_lip_cons}
Under Assumptions~\ref{ass:Theta_compact}--\ref{ass:bounded_supports}, for all $\theta,\theta'\in\Theta$ and all $z\in\supp(Z)$ (with coupled noises),
\[
\|U_\theta(z)-U_{\theta'}(z)\|_2
\le C_U\,\|\theta-\theta'\|_2,
\qquad
C_U:=2R\,C_A,
\]
where $C_A$ is as in Lemma~\ref{lem:A_lip_cons}.
\end{lemma}

\begin{proof}
Couple the draws by using the same $W(z)$ for $\theta$ and $\theta'$. Then
$U_\theta(z)-U_{\theta'}(z)=(A_\theta-A_{\theta'})W(z)$, hence
\[
\|U_\theta(z)-U_{\theta'}(z)\|_2
\le \|A_\theta-A_{\theta'}\|_{\mathrm{op}}\|W(z)\|_2
\le \|A_\theta-A_{\theta'}\|_F\|W(z)\|_2.
\]
Use Lemma~\ref{lem:bounds_cons} and Lemma~\ref{lem:A_lip_cons}.
\end{proof}

\begin{lemma}[Uniform boundedness and Lipschitzness of $\ell_\theta$ and $L(\theta)$]\label{lem:ell_lip_cons}
Under Assumptions~\ref{ass:Theta_compact}--\ref{ass:bounded_supports}, for all $(x,y,z)$ in the support of $(X,Y,Z)$ and all $\theta,\theta'\in\Theta$,
\[
|\ell_\theta(x,y,z)-\ell_{\theta'}(x,y,z)|
\le L_\ell\,\|\theta-\theta'\|_2,
\qquad
L_\ell:=2C_U.
\]
Moreover, $|\ell_\theta(x,y,z)|\le B_\ell$ uniformly over $\theta\in\Theta$ and $(x,y,z)$ in the support, where one may take
$B_\ell:=3B_U$ with $B_U$ from Lemma~\ref{lem:bounds_cons}. Consequently, $L$ is (globally) Lipschitz on $\Theta$, hence continuous.
\end{lemma}

\begin{proof}
Fix $(x,y,z)$ and set $v=(x,y)^\top$. Couple model draws across $\theta,\theta'$ using the same base noises
to form $(U_\theta(z),U_{\theta'}(z))$ and an independent copy $(U_\theta'(z),U_{\theta'}'(z))$.

\emph{Boundedness.}
By Lemma~\ref{lem:bounds_cons}, $\|U_\theta(z)\|_2\le B_U$ uniformly.
Since $(X,Y)$ arise from the same bounded-support model class, also $\|v\|_2\le B_U$ on the support.
Thus $\|U_\theta(z)-v\|\le 2B_U$ and $\|U_\theta(z)-U_\theta'(z)\|\le 2B_U$.
Hence from \eqref{eq:ell_cons},
\[
|\ell_\theta(x,y,z)|
\le \mathbb E\|U_\theta(z)-v\|+\frac12\mathbb E\|U_\theta(z)-U_\theta'(z)\|
\le 2B_U + B_U = 3B_U.
\]

\emph{Lipschitzness.}
Using the reverse triangle inequality,
\[
\big|\|u-v\|-\|u'-v\|\big|\le \|u-u'\|,
\]
we obtain
\[
\left|\mathbb E\|U_\theta(z)-v\|-\mathbb E\|U_{\theta'}(z)-v\|\right|
\le \mathbb E\|U_\theta(z)-U_{\theta'}(z)\|.
\]
Similarly,
\[
\left|\mathbb E\|U_\theta(z)-U_\theta'(z)\|-\mathbb E\|U_{\theta'}(z)-U_{\theta'}'(z)\|\right|
\le 2\,\mathbb E\|U_\theta(z)-U_{\theta'}(z)\|.
\]
Combine with \eqref{eq:ell_cons} to get
\[
|\ell_\theta(x,y,z)-\ell_{\theta'}(x,y,z)|
\le 2\,\mathbb E\|U_\theta(z)-U_{\theta'}(z)\|.
\]
Apply Lemma~\ref{lem:U_lip_cons} to bound the RHS by $2C_U\|\theta-\theta'\|_2$.
Finally,
\[
|L(\theta)-L(\theta')|
=\left|\mathbb E[\ell_\theta(X,Y,Z)-\ell_{\theta'}(X,Y,Z)]\right|
\le \mathbb E\big[|\ell_\theta-\ell_{\theta'}|\big]
\le L_\ell\|\theta-\theta'\|_2,
\]
so $L$ is Lipschitz, hence continuous.
\end{proof}

\begin{lemma}[Uniform convergence of $\hat L_n$]\label{lem:unifconv_cons}
Under Assumptions~\ref{ass:Theta_compact}--\ref{ass:bounded_supports},
\[
\sup_{\theta\in\Theta}\big|\hat L_n(\theta)-L(\theta)\big| \xrightarrow{\mathbb P} 0.
\]
\end{lemma}

\begin{proof}
We give a standard covering-number argument.
Let $r>0$ and let $\{\theta^1,\dots,\theta^N\}$ be an $r$-net of $\Theta$ in $\|\cdot\|_2$,
i.e.\ for every $\theta\in\Theta$ there exists $k$ with $\|\theta-\theta^k\|_2\le r$, where $N=N(\Theta,\|\cdot\|_2,r)$.
For each fixed $k$, the variables $\ell_{\theta^k}(X_i,Y_i,Z_i)$ are i.i.d.\ and bounded in $[-B_\ell,B_\ell]$
(Lemma~\ref{lem:ell_lip_cons}), so Hoeffding's inequality \citep{hoeffding1963}, yields for any $t>0$,
\[
\mathbb P\!\left(|\hat L_n(\theta^k)-L(\theta^k)|>t\right)
\le 2\exp\!\left(-\frac{nt^2}{2B_\ell^2}\right).
\]
A union bound over $k=1,\dots,N$ gives
\[
\mathbb P\!\left(\max_{1\le k\le N}|\hat L_n(\theta^k)-L(\theta^k)|>t\right)
\le 2N\exp\!\left(-\frac{nt^2}{2B_\ell^2}\right).
\]
Now fix any $\theta\in\Theta$ and pick $\theta^k$ with $\|\theta-\theta^k\|_2\le r$.
By Lemma~\ref{lem:ell_lip_cons}, both $\theta\mapsto \hat L_n(\theta)$ and $\theta\mapsto L(\theta)$ are $L_\ell$-Lipschitz:
\[
|\hat L_n(\theta)-\hat L_n(\theta^k)|\le L_\ell r,\qquad |L(\theta)-L(\theta^k)|\le L_\ell r.
\]
Therefore,
\[
|\hat L_n(\theta)-L(\theta)|
\le |\hat L_n(\theta^k)-L(\theta^k)| + |\hat L_n(\theta)-\hat L_n(\theta^k)| + |L(\theta)-L(\theta^k)|
\le \max_{j\le N}|\hat L_n(\theta^j)-L(\theta^j)| + 2L_\ell r.
\]
Taking the supremum over $\theta$ yields
\[
\sup_{\theta\in\Theta}|\hat L_n(\theta)-L(\theta)|
\le \max_{j\le N}|\hat L_n(\theta^j)-L(\theta^j)| + 2L_\ell r.
\]
Since $\Theta=[-M,M]^4$ has finite covering numbers (e.g.\ $N(\Theta,\|\cdot\|_2,r)\le (1+4M/r)^4$),
choose $r=r_n\downarrow0$ slowly so that the Hoeffding--union bound term vanishes, which implies the claim.
\end{proof}

The previous lemmas imply $\Delta_n=\sup_{\theta\in\Theta}|\hat L_n(\theta)-L(\theta)|=o_p(1)$.
We now conclude Theorem~\ref{thm:consistency_linear} by combining this uniform convergence with strict properness and identifiability.

\begin{proof}[Proof of Theorem~\ref{thm:consistency_linear}.]

By strict properness of the energy score (cf.\ the argument used for Proposition~\ref{prop:popul_sol_joint}),
for each fixed $z$ the conditional expected score
\[
\mathbb E\!\left[\ell_\theta(X,Y,z)\mid Z=z\right]
\]
is minimised (within the class) only when the model-implied conditional law matches the true conditional law,
i.e.
\[
\mathbb E\!\left[\ell_\theta(X,Y,z)\mid Z=z\right]
=\inf_{\vartheta\in\Theta}\mathbb E\!\left[\ell_\vartheta(X,Y,z)\mid Z=z\right]
\ \Longrightarrow\
P^\theta_{(X,Y)\mid Z=z}=P_{(X,Y)\mid Z=z}.
\]
Taking expectations over $Z$ yields that
\[
L(\theta)=\inf_{\vartheta\in\Theta}L(\vartheta)
\ \Longrightarrow\
P^\theta_{(X,Y)\mid Z=z}=P_{(X,Y)\mid Z=z}\ \text{for }P_Z\text{-a.e.\ }z.
\]
Under well-specification, there exists $\theta^*\in\Theta$ with
$P^{\theta^*}_{(X,Y)\mid Z=z}=P_{(X,Y)\mid Z=z}$ for $P_Z$-a.e.\ $z$, hence $\theta^*$ attains the minimum of $L$.

By identifiability within the class, this minimiser is unique.
Moreover, by Lemma~\ref{lem:ell_lip_cons}, $L$ is continuous on compact $\Theta$.

Fix $\varepsilon>0$ and define the closed subset
\[
A_\varepsilon := \{\theta\in\Theta:\|\theta-\theta^*\|_2\ge \varepsilon\}.
\]
Then $A_\varepsilon$ is compact and $L$ attains its minimum on $A_\varepsilon$.
Uniqueness of the minimiser over $\Theta$ implies
\[
\delta(\varepsilon):=\inf_{\theta\in A_\varepsilon}\big(L(\theta)-L(\theta^*)\big)>0,
\]
and therefore
\[
\|\theta-\theta^*\|_2\ge\varepsilon
\quad\Longrightarrow\quad
L(\theta)\ge L(\theta^*)+\delta(\varepsilon).
\]

Let
\[
\Delta_n := \sup_{\theta\in\Theta}\big|\hat L_n(\theta)-L(\theta)\big|.
\]
Fix $\varepsilon>0$ and define the event
\[
\mathcal B_n(\varepsilon):= \left\{\Delta_n \ge \tfrac{\delta(\varepsilon)}{2}\right\}.
\]
We show $\{\Delta_n<\delta(\varepsilon)/2\}\subseteq \{\|\hat\theta_n-\theta^*\|_2<\varepsilon\}$.
Indeed, suppose $\Delta_n<\delta(\varepsilon)/2$. Then for any $\theta$ with $\|\theta-\theta^*\|_2\ge\varepsilon$,
\[
\hat L_n(\theta)\ge L(\theta)-\Delta_n
\ge L(\theta^*)+\delta(\varepsilon)-\Delta_n
> L(\theta^*)+\tfrac{\delta(\varepsilon)}{2},
\]
while
\[
\hat L_n(\theta^*)\le L(\theta^*)+\Delta_n
< L(\theta^*)+\tfrac{\delta(\varepsilon)}{2}.
\]
Hence $\hat L_n(\theta)>\hat L_n(\theta^*)$ for all $\theta\in A_\varepsilon$.
Since $\hat\theta_n$ minimises $\hat L_n$, it follows that $\|\hat\theta_n-\theta^*\|_2<\varepsilon$.
Equivalently,
\[
\{\|\hat\theta_n-\theta^*\|_2\ge \varepsilon\}\subseteq \mathcal B_n(\varepsilon).
\]

Finally, by Lemma~\ref{lem:unifconv_cons}, $\Delta_n=o_p(1)$, so $\mathbb P(\mathcal B_n(\varepsilon))\to 0$.
Therefore
\[
\mathbb P(\|\hat\theta_n-\theta^*\|_2\ge\varepsilon)
\le \mathbb P(\mathcal B_n(\varepsilon))\to 0,
\]
which implies $\hat\theta_n\xrightarrow{p}\theta^*$.
\end{proof}

\section{Benchmark methods} \label{app:benchmark}

This section provides an overview of the benchmarked methods, their estimation capabilities, and details about their publicly available implementations: 

\begin{itemize}
    \item Control functions (CF), an IV method developed in econometrics \citep{heckman1976cf, newey1999control} for interventional mean estimation based on decomposing the hidden confounder into a treatment-correlated and an independent part. In the nonlinear version we use natural cubic splines for basis expansion. The algorithm is described by \citet{guo2016}.
	\item DeepIV \citep{hartford2017deepIV}:
    A flexible IV approach that employs neural networks, relying on the moment restriction (a `deep variant' of 2SLS). The method estimates the conditional density of $X$ given $Z=z$ in the first stage, and is limited to interventional mean estimation. Python implementation: \url{https://github.com/jhartford/DeepIV}.
	\item DeepGMM \citep{bennett2020deepGMM}: An IV method based on the generalised method of moments (GMM), which uses neural networks to learn a structural function that satisfies the moment restriction. The method aims to estimate the interventional mean. Python implementation: \url{https://github.com/CausalML/DeepGMM}.
    \item HSIC-X \citep{saengkyongam2022exploiting}: A neural network-based IV method that leverages the Hilbert-Schmidt Independence Criterion (HSIC) to enforce independence restrictions for valid instruments.  It is designed to estimate the interventional mean. Python implementation: \url{https://github.com/sorawitj/HSIC-X}.
	\item DIVE \citep{kook2024}: A distributional IV-based approach using independence restrictions, designed for estimating distributional causal effects with binary treatment and an absolutely continuous response \citep{kook2024}. Implemented in \texttt{R}, available at \url{https://github.com/LucasKook/dive}.
	\item IVQR \citep{chernozhukov2005}: An IV quantile regression framework for estimation of quantile treatment effects (QTEs) for binary treatment and absolutely continuous response. Linear IVQR is implemented using the \texttt{IVQR} \texttt{R} package, accessible at \url{https://github.com/yuchang0321/IVQR}.
	\item Engression \citep{shen2024engression}: A generative-modelling-based distributional regression method that minimises the energy loss to learn the conditional distribution of $Y|X=x$ \citep{shen2024engression}, which is not an IV method. Implementation in \texttt{R}, \texttt{engression} package: \url{https://cran.r-project.org/web/packages/engression}.
\end{itemize}

\section{Additional experiments}

\subsection{Finite-sample behaviour}\label{app:finite-sample}

To complement the fixed-sample comparisons in Sections~\ref{subsec:sim_cont_X}--\ref{subsec:sim_binary_X}, we study how DIV behaves as the training sample size increases. Throughout, we keep the model architecture and optimization settings fixed as in the corresponding main experiments, and repeat each configuration over $100$ independent simulation runs.

\subsubsection{Continuous treatment (Scenarios 1--6)}

For each of the six data-generating processes, we fit DIV on training samples of size
$n \in \{100, 500, 1000, 5000, 10000\}$.
Figure~\ref{fig:scen1-6-mse-samplesize} reports the mean MSE ($\pm$ SD) of the estimated interventional mean.
Across scenarios, the MSE decreases overall with increasing $n$, with only minor non-monotonic fluctuations in a few settings.

\begin{figure}[h!]
    \centering
    \includegraphics[width=1\textwidth,height=1\textheight,keepaspectratio]{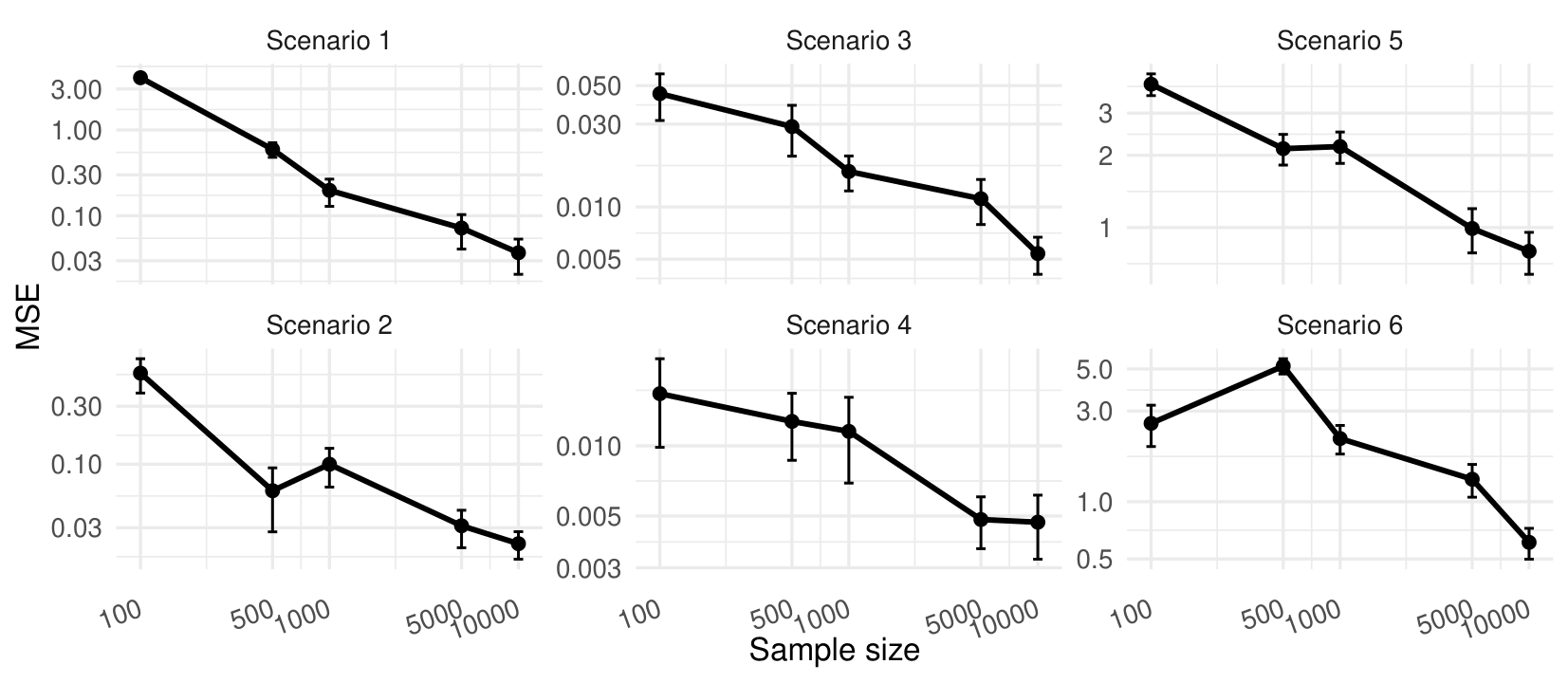}
    \caption{Mean MSE ($\pm$ SD) of the DIV estimator for the interventional mean across $100$ simulation runs as a function of the sample size. Top row: binary instrument $Z$; bottom row: continuous instrument $Z$. Columns correspond to Scenarios 1--2, 3--4, and 5--6, respectively.}
    \label{fig:scen1-6-mse-samplesize}
\end{figure}

We additionally assess distributional accuracy via the energy distance between the estimated and true interventional outcome distributions.
Figure~\ref{fig:scen1-6-energy-samplesize} reports the mean energy distance ($\pm$ SD) across the same sample sizes.

\begin{figure}[h!]
    \centering
    \includegraphics[width=1\textwidth,height=1\textheight,keepaspectratio]{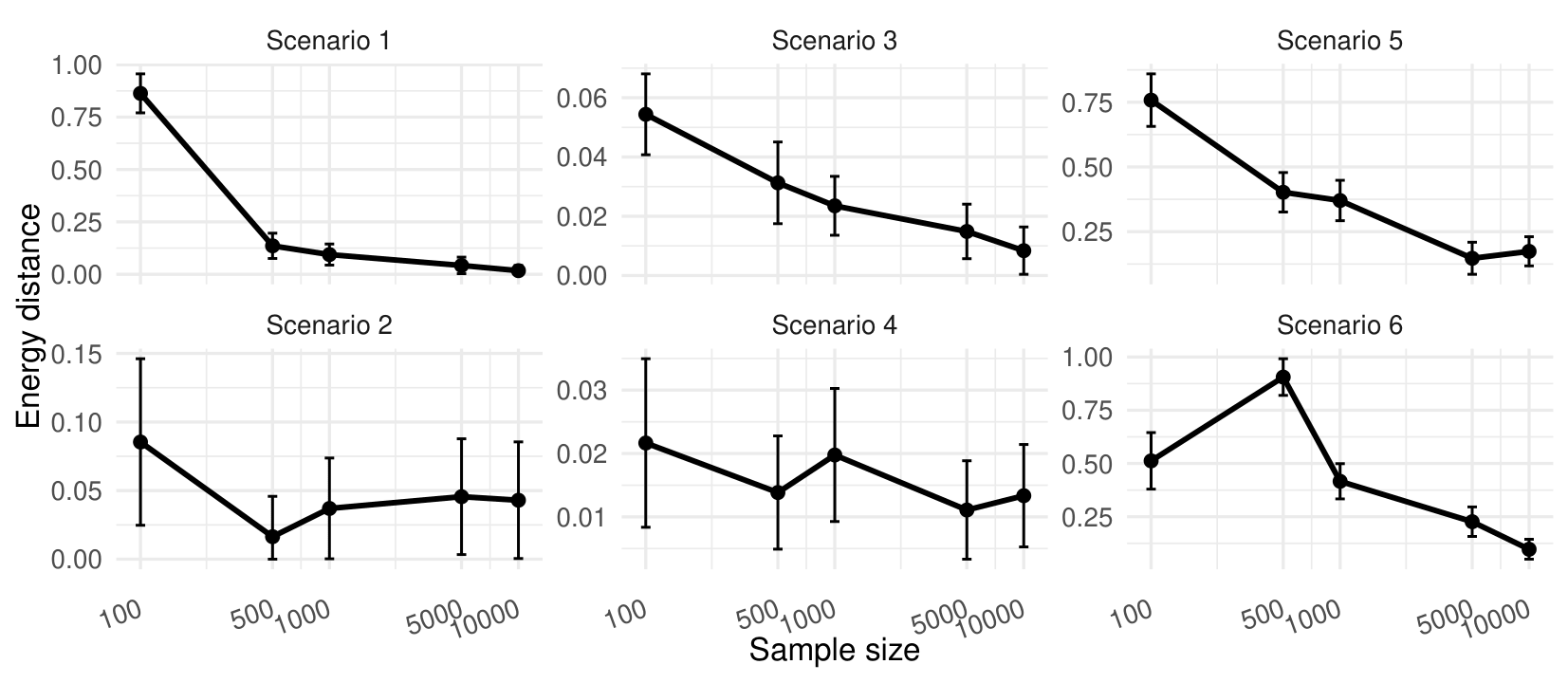}
    \caption{Mean energy distance ($\pm$ SD) between the estimated and true interventional outcome distributions across $100$ simulation runs as a function of the sample size. Top row: binary instrument $Z$; bottom row: continuous instrument $Z$. Columns correspond to Scenarios 1--2, 3--4, and 5--6, respectively.}
    \label{fig:scen1-6-energy-samplesize}
\end{figure}

\subsubsection{Binary treatment (QTE scenarios)}

For the two binary-treatment scenarios in Section~\ref{subsec:sim_binary_X}, we evaluate finite-sample behaviour using the RMSE of the estimated QTE curve across quantile levels.
Figure~\ref{fig:binX-rmse-samplesize} shows that the RMSE decreases with increasing $n$.

\begin{figure}[t!]
    \centering
    \includegraphics[width=1\textwidth,height=1\textheight,keepaspectratio]{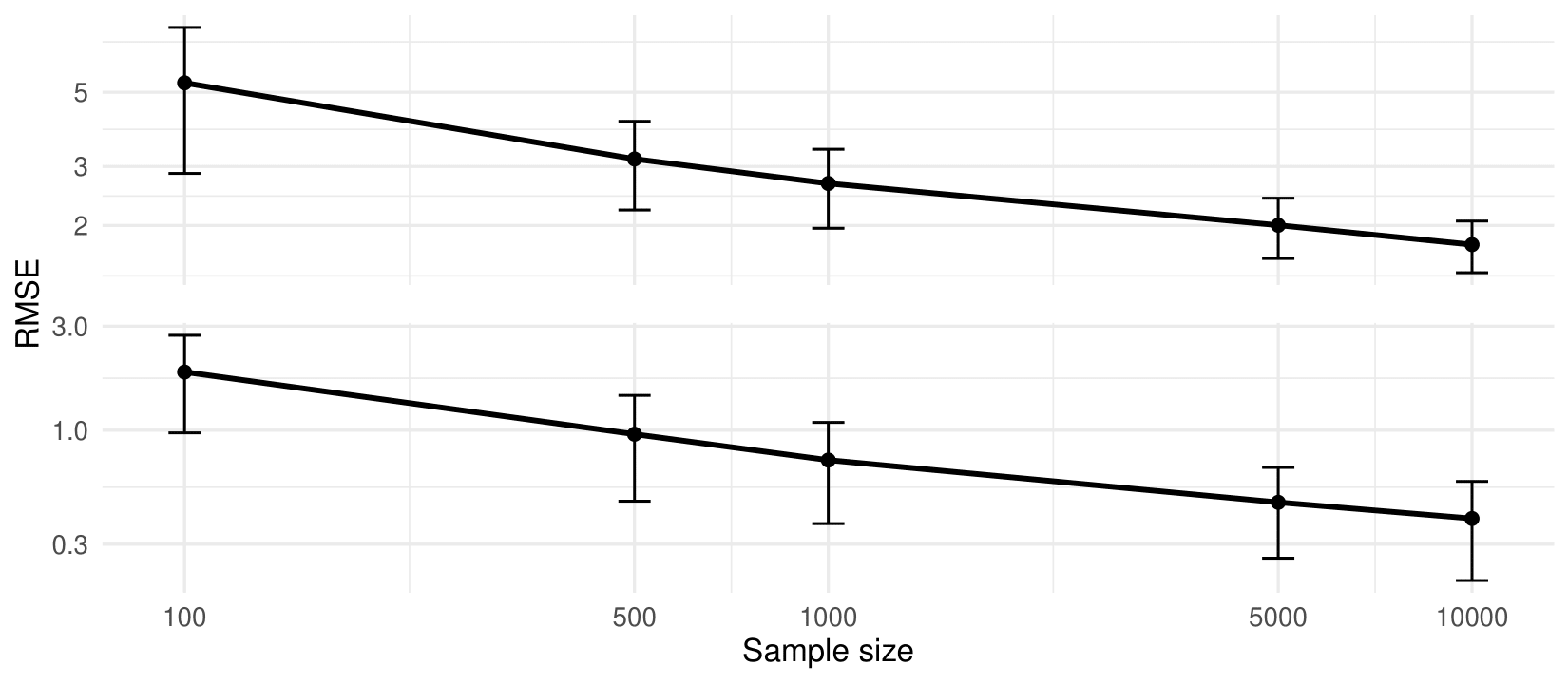}
    \caption{Mean RMSE ($\pm$ SD) of the DIV estimator for the QTE across $100$ simulation runs as a function of the sample size. Top: Scenario 1; bottom: Scenario 2.}
    \label{fig:binX-rmse-samplesize}
\end{figure}

Finally, we quantify distributional accuracy for both treatment states via the energy distance.
Figure~\ref{fig:binX-energy-samplesize} reports the mean energy distance ($\pm$ SD) between samples from the estimated and true interventional outcome distributions under $\mathrm{do}(X\! \coloneqq \! 0)$ and $\mathrm{do}(X \! \coloneqq \! 1)$.

\begin{figure}[h!]
    \centering
    \includegraphics[width=1\textwidth,height=1\textheight,keepaspectratio]{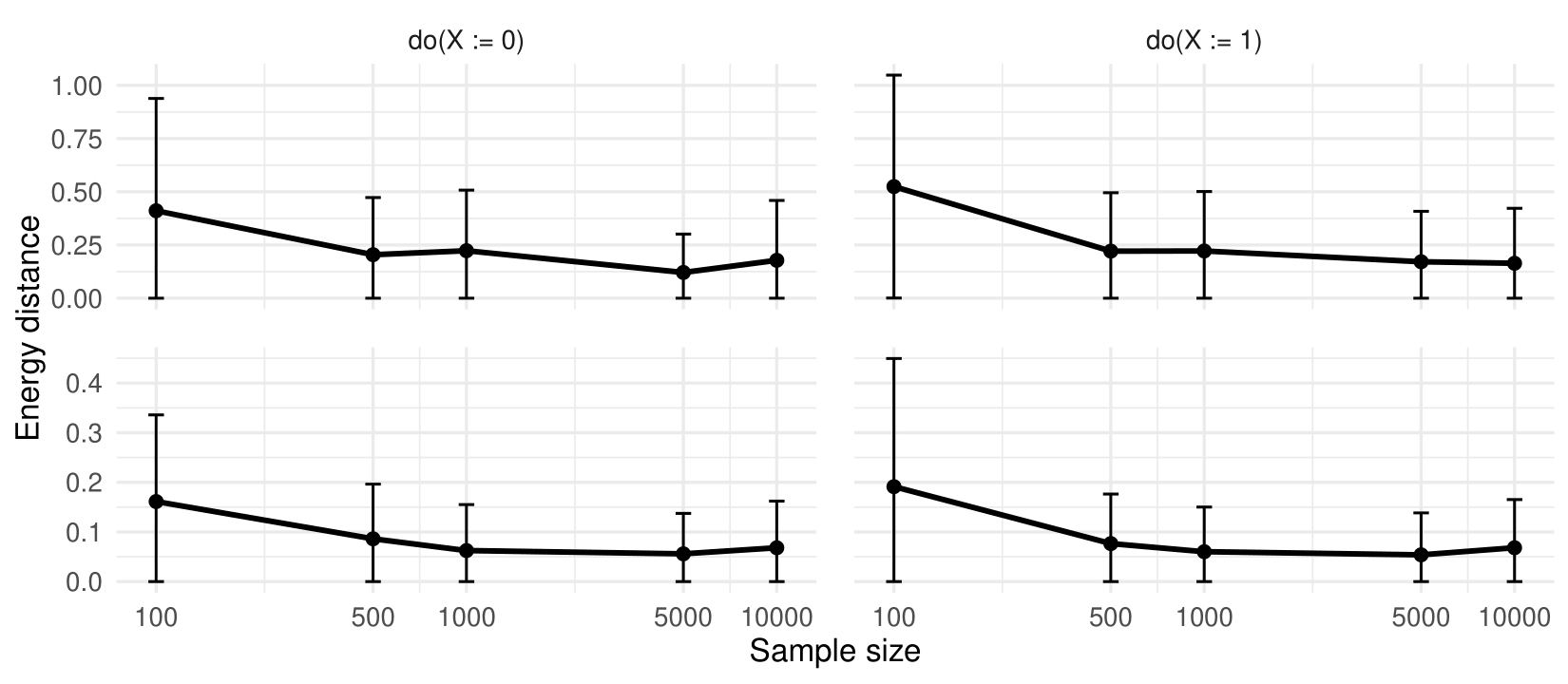}
    \caption{Mean energy distance ($\pm$ SD) between the estimated and true interventional outcome distributions obtained with DIV across $100$ simulation runs, as a function of the sample size. Left: $\mathrm{do}(X=0)$; right: $\mathrm{do}(X=1)$. Top: Scenario 1; bottom: Scenario 2.}
    \label{fig:binX-energy-samplesize}
\end{figure}

\subsection{Categorical Instrument with Unbalanced Support}
\label{subsec:knockout_unbalanced}

Categorical instruments with highly uneven support across levels are common in applications such as genetic perturbation experiments, where some intervention environments are observed frequently while others are rare. Such imbalance reduces effective first-stage information for certain treatments and can make causal estimation challenging. We study the behaviour of DIV in this setting using a controlled simulation design motivated by CRISPR knockout experiments. The design mirrors the single-cell CRISPR setting introduced in Section~\ref{subsec:singlecell}, where a categorical environment indicator represents gene-specific perturbations with highly unbalanced support and serves as an exogenous instrument inducing distributional shifts in $X$.

We observe a scalar outcome $Y$, a treatment vector $X=(X_1,\dots,X_4)$, and a categorical instrument $Z\in\{1,2,3,4\}$ with unbalanced probabilities $\mathbb P(Z=k) = (0.60,\,0.25,\,0.10,\,0.05)$ for $k\in\{1,2,3,4\}$, correspondingly.  
A latent confounder $H\sim N(0,1)$ affects both treatment and outcome. Conditional on $(H,Z=k)$, we generate a latent Gaussian vector $X^\ast \mid (H,Z=k) \sim N(\mu_0 + \Gamma H + \delta_k,\;\Sigma)$,
where $\Gamma\in\mathbb R^4$ controls the strength of confounding and $\delta_k\in\mathbb R^4$ encodes environment-specific perturbations. The covariance matrix $\Sigma$ has constant pairwise correlations to induce dependence across treatment components. The observed treatment $X$ is obtained by clipping each coordinate of $X^\ast$ to a bounded range, mimicking realistic gene expression levels. The outcome follows the linear structural equation $Y = \beta^\top X + H + \varepsilon_Y$, with $\varepsilon_Y \sim \mathcal N(0,1)$.

We consider knockout interventions of the form $\mathrm{do}(X_j \! \coloneqq \! 0)$ for $j=1,\dots,d$. DIV is trained on data of sample sizes $n\in\{100,500,1000,5000,10000\}$. We evaluate performance using the MSE of the estimated interventional mean function, averaged over $100$ simulation runs.

\begin{figure}[h!]
        \centering
\includegraphics[width=1\textwidth,height=1\textheight,keepaspectratio]{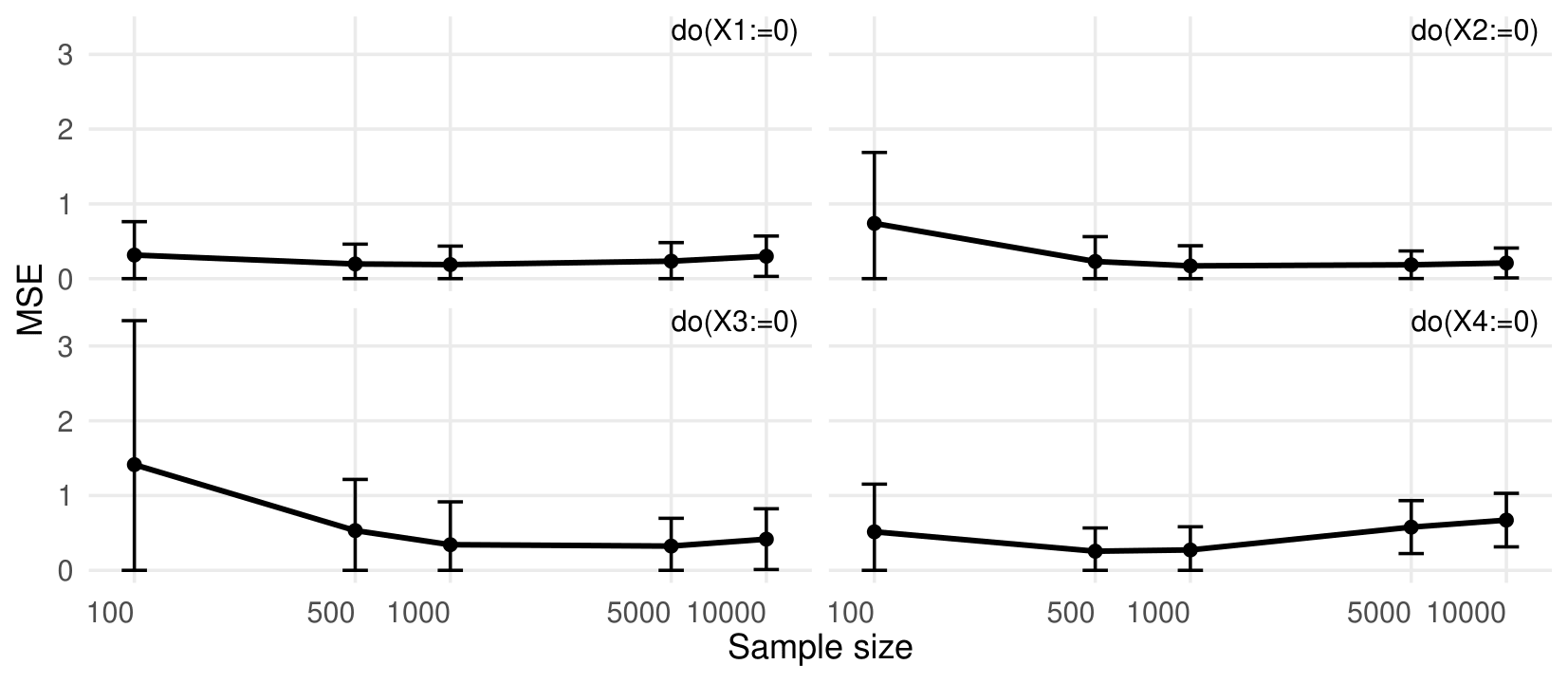}
        \caption{Mean MSE ($\pm$ SD; lower SD bound clipped at 0) of DIV of the estimated interventional mean across 100 simulation runs for varying sample sizes.}
        \label{fig:knockout_mse}
\end{figure}

Figure~\ref{fig:knockout_mse} shows that DIV yields stable interventional mean estimates across all knockout interventions, including those associated with rare instrument levels. While the MSE does not decrease monotonically with sample size for all interventions, it remains bounded and of comparable magnitude across $n$, reflecting the difficulty of extrapolative knockouts under imbalanced categorical instruments. Overall, the results indicate that DIV remains robust in this setting and can recover meaningful interventional effects even when identification relies on severely unbalanced categorical instruments.

\subsection{Observational and interventional distributions} \label{app:Pobs_Pint}

In Section~\ref{subsec:estimation}, we argue that DIV enables the estimation of the interventional distribution $\Pint$ along with its functionals. However, it is important to emphasize that DIV also provides an estimation of the joint observational distribution  $P_{(X,Y)}$  at no additional cost. To demonstrate empirical results, we now consider a setting where the treatment model is defined as $g(Z,H,\varepsilon_X) \coloneqq Z + H + 0.5\varepsilon_X$, and the outcome model is $f(X, H, \varepsilon_Y) \coloneqq X - 3H + 0.5\varepsilon_Y$, with $Z \sim \unif(0,3)$ and $H, \varepsilon_X, \varepsilon_Y \sim N(0,1)$. Figure~\ref{fig:obs_int} shows that DIV model manages to estimate both the observational and the interventional distributions well. Technically, for drawing a sample from the observational distribution $P_{(X,Y)}$, one needs to (i) sample the noise $\varepsilon_{H,i}, \varepsilon_{X,i}, \varepsilon_{Y,i}$ from standard Gaussians, (ii) obtain a sample $\hat{x}_i = \hat{g}(z_i, \varepsilon_{H,i}, \varepsilon_{X,i})$, and then (iii) obtain a sample $\hat{y}_i = \hat{f}(\hat{x}_i, \varepsilon_{H,i}, \varepsilon_{Y,i})$. The resulting set of pairs $(x_i,y_i), \ i = 1, \dots, n$, is an i.i.d. sample from the observational distribution $P_{(X,Y)}$.

\begin{figure}[h!]
        \centering
        \includegraphics[width=0.69\textwidth,height=0.69\textheight,keepaspectratio]{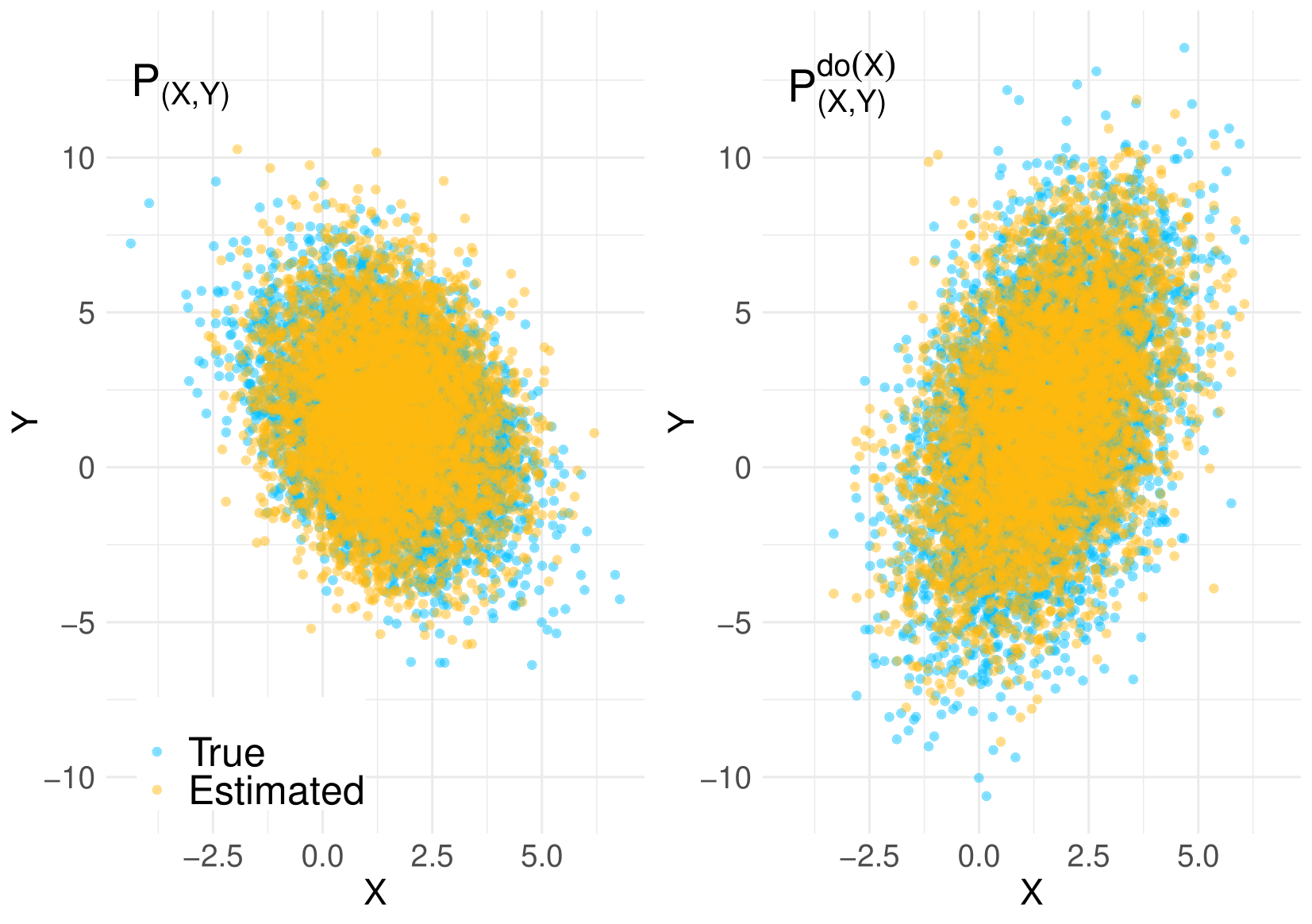}
        \caption{Samples from the observational distribution $P_{(X,Y)}$ (left) and the interventional distribution  $P_{(X,Y)}^{\mathrm{do}(X)}$ which is defined as $P_{(X,Y)}^{\mathrm{do}(X\coloneqq \tilde{X})}$ (right), with $\tilde{X} \eqdist X$.}
        \label{fig:obs_int}
\end{figure}

\end{document}